\numberwithin{equation}{section} 
\numberwithin{table}{section} 
\numberwithin{figure}{section}
\theoremstyle{plain}
\newtheorem{theorem}{Theorem}[section]
\newtheorem{conjecture}[theorem]{Conjecture}
\newtheorem{lemma}[theorem]{Lemma}
\newtheorem{corollary}[theorem]{Corollary}
\newtheorem{proposition}[theorem]{Proposition}
\newtheorem{assumption}[theorem]{Assumption}
\theoremstyle{nonumberplain}
\newtheorem{proof}{Proof}
\SetMathAlphabet{\mathcal}{normal}{OMS}{cmsy}{m}{n} 
\SetMathAlphabet{\mathcal}{bold}{OMS}{cmsy}{m}{n} 
\providecommand{\ie}{i.~e.~}
\providecommand{\eg}{e.~g.~}
\providecommand{\cf}{cf.~}
\providecommand{\R}{\mathbb{R}}
\providecommand{\C}{\mathbb{C}}
\renewcommand{\C}{\mathbb{C}}
\providecommand{\T}{\mathbb{T}}
\renewcommand{\T}{\mathbb{T}}
\providecommand{\Z}{\mathbb{Z}}
\providecommand{\ii}{\mathrm{i}}
\providecommand{\e}{\mathrm{e}}
\renewcommand{\Re}{\mathrm{Re} \,}
\renewcommand{\Im}{\mathrm{Im} \,}
\providecommand{\Hil}{\mathcal{H}}
\providecommand{\eps}{\varepsilon}
\providecommand{\ran}{\mathrm{ran} \, }
\providecommand{\dd}{\mathrm{d}}
\providecommand{\id}{\mathds{1}}
\providecommand{\Fourier}{\mathcal{F}}
\providecommand{\scpro}[2]{\left \langle #1 , #2 \right \rangle}
\providecommand{\sscpro}[2]{\langle #1 , #2 \rangle}
\providecommand{\bscpro}[2]{\bigl \langle #1 , #2 \bigr \rangle}
\providecommand{\sopro}[2]{\vert #1 \rangle \langle #2 \vert}
\providecommand{\Rot}{\mathrm{Rot} \,}
\providecommand{\Div}{\mathrm{Div} \,}
\providecommand{\Jphys}{\mathcal{J}_+}
\providecommand{\Maux}{M^{\mathrm{aux}}}
\providecommand{\BZ}{\mathbb{B}}
\providecommand{\specrel}{\sigma_{\mathrm{rel}}}
\title{Symmetry Classification of \\ Topological Photonic Crystals}
\author{Giuseppe De Nittis${}^1$ \& Max Lein${}^2$}
\begin{document}

\maketitle
\vspace{-9mm}
\begin{center}
	${}^1$ Facultad de Matemáticas \& Instituto de Física, 
	Pontificia Universidad Católica de Chile \linebreak
	Avenida Vicuña Mackenna 4860, 
	Santiago, 
	Chile \linebreak
	{\footnotesize \href{mailto:gidenittis@mat.uc.cl}{\texttt{gidenittis@mat.uc.cl}}}
	\medskip
	\\
	${}^2$ Advanced Institute of Materials Research,  
	Tohoku University \linebreak
	2-1-1 Katahira, Aoba-ku, 
	Sendai, 980-8577, 
	Japan \linebreak
	{\footnotesize \href{mailto:maximilian.lein.d2@tohoku.ac.jp}{\texttt{maximilian.lein.d2@tohoku.ac.jp}}}
\end{center}
\begin{abstract}
	In a seminal paper Haldane conjectured that topological phenomena are not particular to quantum systems, and indeed experiments realized unidirectional, backscattering-free edge modes with \emph{electromagnetic waves}. This raises two immediate questions: (1)~Are there other topological effects in electromagnetic media? And (2)~is Haldane's “Quantum Hall Effect for light” really analogous to the Quantum Hall Effect? 
	
	We conclusively answer both of these questions by \emph{classifying topological photonic crystals} according to material (as opposed to crystallographic) symmetries. It turns out there are \emph{four topologically distinct types of media}, of which only one, gyrotropic media, is topologically non-trivial in $d = 2 , 3$. That means there are no as-of-yet undiscovered topological effects; in particular, there is no analog of the Quantum \emph{Spin} Hall Effect in classical electromagnetism. Moreover, at least \emph{qualitatively}, Haldane's Quantum Hall Effect for light is analogous to the Quantum Hall Effect from condensed matter physics as both systems as in the same topological class, class~A. Our ideas are directly applicable to other classical waves. 
\end{abstract}
\noindent{\scriptsize \textbf{Key words:} Maxwell equations, Maxwell operator, Schrödinger equation, quantum-wave analogies, topological insulators}
\\ 
{\scriptsize \textbf{MSC 2010:} 35P99, 35Q60, 35Q61, 78A48, 81Q10}
\\
{\scriptsize \textbf{PACS 2010:} 41.20.Jb, 42.70.Qs, 78.20.-e}

\newpage
\tableofcontents

\section{Introduction} 
\label{intro}
Raghu and Haldane proposed in a seminal work \cite{Raghu_Haldane:quantum_Hall_effect_photonic_crystals:2008} that topological effects are \emph{bona fide wave} rather than \emph{quantum} phenomena. In analogy to the Quantum Hall Effect, they predicted unidirectional, backscattering-free edge modes in periodic gyrotropic light conductors, also known as gyrotropic \emph{photonic crystals}, and attributed their existence to the “non-trivial topology of the system”. More specifically, they made the following 
\begin{conjecture}[Raghu and Haldane's Photonic Bulk-Edge Correspondence \cite{Raghu_Haldane:quantum_Hall_effect_photonic_crystals:2008}]\label{intro:conjecture:photonic_bulk_edge_correspondence}
	In a two-dimensional photonic crystals with boundary the difference of the number of left- and right-moving boundary modes in bulk band gaps is a topologically protected quantity and equals the Chern number associated to the frequency bands below the bulk band gap. 
\end{conjecture}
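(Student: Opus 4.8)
The plan is to adapt the standard class~A bulk-edge correspondence to the Maxwell setting. First I would put the source-free Maxwell equations into Schrödinger form $\ii \partial_t \psi = M \psi$, with the Maxwell operator $M = W \Rot$ acting on the physical subspace $\Jphys$ of positive-frequency, transversal field configurations, that is, the orthogonal complement inside the $W$-weighted $L^2$-space of the infinite-dimensional kernel of $\Rot$ (the curl-free ``gradient'' fields, on which the dynamics is trivial); here $W$ collects the material weights and $\Rot$ is the off-diagonal curl. For a photonic crystal $W$ is $\Z^d$-periodic, so a Bloch--Floquet transform gives $M = \int_{\BZ}^{\oplus} M(k) \, \dd k$ over the Brillouin torus $\BZ = \T^d$. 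Assuming a spectral gap $[\omega_-,\omega_+] \subset (0,\infty)$ in the positive-frequency spectrum, the frequency eigenstates with eigenvalue in $(0,\omega_-)$ span a Hermitian vector bundle over $\BZ$, and for $d = 2$ its first Chern number $\mathrm{Ch}$ is the ``bulk'' invariant appearing in Conjecture~\ref{intro:conjecture:photonic_bulk_edge_correspondence}.

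Second, for the half-space problem I would fix a geometry, say $x_2 \geq 0$, with a self-adjoint perfect-conductor boundary condition; the resulting Maxwell operator $M^\sharp$ is still periodic in the direction $x_1$ parallel to the edge and hence fibers over $k_1 \in \T$. Inside the bulk gap $\sigma(M^\sharp)$ consists of isolated edge branches $\omega_j(k_1)$, and for a generic $\omega \in (\omega_-,\omega_+)$ the signed count $I_{\mathrm{edge}}(\omega)$ of branches crossing the level $\omega$, weighted by the sign of the group velocity $\partial_{k_1}\omega_j$, is independent of $\omega$ in the gap by a continuity/homotopy argument; this is the edge invariant.

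Third --- the heart of the matter --- I would prove $\mathrm{Ch} = I_{\mathrm{edge}}$ $C^*$-algebraically. One builds the bulk algebra $\mathcal{A}$ of periodic Maxwell observables (the norm-closed algebra generated by covariant translations and bounded functions $f(\Maux)$ of the auxiliary operator, carrying a trace per unit area) and the half-space algebra $\widehat{\mathcal{A}}$, fitting into a short exact sequence $0 \to \mathcal{A}_\partial \otimes \mathcal{K} \to \widehat{\mathcal{A}} \to \mathcal{A} \to 0$ with $\mathcal{A}_\partial$ the edge-parallel observables. The exponential map $K_0(\mathcal{A}) \to K_1(\mathcal{A}_\partial)$ sends the class $[P_{<\omega}] \in K_0(\mathcal{A})$ of the bulk spectral projection onto the class $[\expo{2\pi\ii g(M^\sharp)}]$ of a unitary built from $M^\sharp$, with $g$ smooth, $\equiv 1$ below the gap and $\equiv 0$ above. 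Pairing both classes with the natural cyclic cocycles identifies $\mathrm{Ch}$ with the left-hand pairing and $I_{\mathrm{edge}}$ with the winding number of $g(M^\sharp)$ on the right, and the duality of these two pairings under the connecting map --- the Maxwell analogue of the argument of Kellendonk, Richter and Schulz-Baldes --- yields the claimed equality. A more hands-on alternative, closer to Haldane's own reasoning, is to deform $W$ continuously to a homogeneous reference medium and track the simultaneous jumps of $\mathrm{Ch}$ and of the net edge spectral flow through the gap whenever the bulk gap closes.

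The step I expect to be the main obstacle is reconciling the peculiarities of the Maxwell operator with this framework: $M$ is not bounded below and has an infinite-dimensional zero-frequency kernel (the static/longitudinal fields), so one must verify that restriction to $\Jphys$ is compatible with the algebras --- that the spectral projection $P_{<\omega}$ and the transform $g(M^\sharp)$ genuinely lie in $\mathcal{A}$ resp.\ $\widehat{\mathcal{A}}$, that the relevant commutators are trace-class so the cocycles are finite, and that the zero modes do not produce spurious edge spectral flow. That $M$ is self-adjoint only in the $W$-weighted inner product and that the setting is a continuum rather than a tight-binding one are further technical layers; the way around both is to work throughout with the bounded, covariant functional calculus of the auxiliary operator $\Maux$ rather than with $M$ itself.
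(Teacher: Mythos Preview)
The statement you are attempting to prove is labelled a \emph{Conjecture} in the paper, and the paper does not prove it. The authors flag it explicitly as open: in the introduction they write that ``up until now a first-principles understanding starting from Maxwell's equations is an open problem,'' and in the conclusion (Section~\ref{conclusion:future}) they list ``a derivation of Haldane's Photonic Bulk-Boundary Conjecture'' among the items of future work. So there is no proof in the paper to compare your attempt against.

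Your outline is a reasonable roadmap and anticipates several of the difficulties the paper itself raises (the restriction to $\omega \geq 0$, the $W$-weighted inner product, the infinite-dimensional kernel). But there is one genuine gap that the paper singles out as the central obstruction, and your proposal glosses over it. You assert that the eigenstates with frequency in $(0,\omega_-)$ ``span a Hermitian vector bundle over $\BZ$''. They do not. Every photonic crystal possesses \emph{ground state bands} with approximately linear dispersion near $k = 0$, $\omega \approx 0$, and these are necessarily among the bands below the first bulk gap. Their Bloch functions are discontinuous at $k = 0$ because the transversality constraint degenerates there and the dimension of the eigenspace jumps by $2$ (see Section~\ref{classification:Bloch_Floquet} and the discussion in Section~\ref{conclusion:comparison:QHE}). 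Hence the rank of your would-be bundle is not constant, and the object on which you want to evaluate the Chern number is not a vector bundle over the full torus. The paper's Gap Condition (Assumption~\ref{classification:assumption:gap_condition}) excludes ground state bands for exactly this reason, and the authors state outright that defining the ``photonic Fermi projection'' so that the $K$-theoretic machinery applies is left open. Your proposal must confront this before the Kellendonk--Richter--Schulz-Baldes argument can even start; the issue is not one of trace-class commutators or covariance, but that $[P_{<\omega}]$ does not obviously define a class in $K_0(\mathcal{A})$ at all.

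A minor point: the Maxwell operator is $M_+ = W^{-1} \, \Rot \, \big\vert_{\omega \geq 0}$, not $W \, \Rot$.
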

Raghu and Haldane base their arguments on the analogy to the corresponding quantum systems: they \emph{proposed} (as opposed to derived) ray optics equations, which contain geometric Berry curvature terms to subleading order. And since the Chern number can be computed as the Brillouin zone average of the Berry curvature, the analogy to the Bloch electron is then invoked in an \emph{ad hoc} fashion without making any reference to the underlying dynamical equations. Their prediction has been confirmed in a number of spectacular experiments in electromagnetic, acoustic and phononic waves \cite{Wang_et_al:unidirectional_backscattering_photonic_crystal:2009,Rechtsman_Zeuner_et_al:photonic_topological_insulators:2013,Lu_et_al:experimental_observation_Weyl_points:2015,Xiao_et_al:geometric_phase_acoustic_systems:2015,Suesstrunk_Huber:mechanical_topological_insulator:2015}. Up until now a first-principles understanding starting from Maxwell's equations is an open problem. These and other, more recent works have naturally raised two questions: 
\begin{enumerate}[(1)]
	\item How similar is the Quantum Hall Effect for light to the one from solid state physics? 
	\item Are there other, as-of-yet unknown topological effects in electromagnetic media? 
\end{enumerate}
To answer these questions and get a more complete picture, we will rigorously establish what “topological” means in the context of classical electromagnetism. This keyword is inserted into the discussion of a lot of physical effects — even if it is not always clear what that actually means. For example, three different groups \cite{Khanikaev_et_al:photonic_topological_insulators:2013,Bliokh_Smirnova_Nori:spin_orbit_light:2015,Wu_Hu:topological_photonic_crystal_with_time_reversal_symmetry:2015} have claimed to have found a photonic analog of the Quantum \emph{Spin} Hall Effect, implying that the spin-momentum locking they find is of topological origin. Our first principles approach will clear up this confusion, and we will analyze these three works in the conclusion (Section~\ref{conclusion:comparison:QSHE}). 

The crucial ingredient in the analysis of topological effects is \emph{symmetries}, and when designing topological electromagnetic media, there are two axes to explore: One can choose the materials from which to build the photonic crystal (\emph{material symmetries}) and then decide how to periodically arrange these materials (\emph{crystallographic symmetries}). In this work we will focus on material symmetries. For those we answer both of these questions conclusively by first reformulating Maxwell's equations in Schrödinger form \cite{DeNittis_Lein:Schroedinger_formalism_classical_waves:2017}, and then adapting the Cartan-Altland-Zirnbauer classification scheme for topological insulators. The latter is the content of the present paper and the fifth in a sequence of earlier works \cite{DeNittis_Lein:adiabatic_periodic_Maxwell_PsiDO:2013,DeNittis_Lein:sapt_photonic_crystals:2013,DeNittis_Lein:symmetries_Maxwell:2014,DeNittis_Lein:ray_optics_photonic_crystals:2014} that tries to systematically understand how topological effects emerge from electrodynamics.

Initially, the term topological insulator was born of the \emph{topological interpretation of the Quantum Hall Effect}  \cite{Thouless_Kohmoto_Nightingale_Den_Nijs:quantized_hall_conductance:1982}. This seminal work by Thouless, Kohmoto, Nightingale and den Nijs linked a \emph{measurable} quantity, the transverse conductivity, to a \emph{topological invariant} of the so-called Bloch vector bundle, the Chern number. Put succinctly, they have established a connection between the topology of a mathematical object, in this case a vector bundle, and a concrete physical quantity, the transverse conductivity. Topological insulators come in more than one flavor, and their systematic classification and characterization (see \eg \cite{Hasan_Kane:topological_insulators:2010,Chiu_Teo_Schnyder_Ryu:classification_topological_insulators:2016,Prodan_Schulz_Baldes:complex_topological_insulators:2016} for three recent reviews) is the main aim of several vibrant communities within theoretical and mathematical physics; the most common classification tool is the so-called Cartan-Altland-Zirnbauer (CAZ) scheme \cite{Altland_Zirnbauer:superconductors_symmetries:1997,Schnyder_Ryu_Furusaki_Ludwig:classification_topological_insulators:2008,Chiu_Teo_Schnyder_Ryu:classification_topological_insulators:2016}. So not only can \emph{breaking} symmetries can lead to topological effects, also their \emph{presence} might \cite{Kane_Mele:Z2_ordering_spin_quantum_Hall_effect:2005,Prodan_Schulz_Baldes:complex_topological_insulators:2016,DeNittis_Gomi:AII_bundles:2014,DeNittis_Gomi:AIII_bundles:2015}. The idea to realize an analog of the Quantum Hall Effect with electromagnetic waves in photonic crystals relies on the \emph{breaking of an even time-reversal symmetry}. 

Just like with (quantum) topological insulators the existence of “conducting” electromagnetic boundary states in a region where the bulk is “insulating” is expected to be explainable via \emph{bulk-boundary correspondences}; in the context of classical waves conducting means the presence of states while insulating refers to their absence. This idea goes back to Hatsugai's works on topological condensed matter systems \cite{Hatsugai:edge_states_Riemann_surface:1993,Hatsugai:Chern_number_edge_states:1993} and states that certain aspects of the system at the boundary are completely determined by its properties in the interior. Bulk-boundary, sometimes also known as bulk-edge correspondences, usually consist of three equalities: 
\begin{subequations}\label{intro:eqn:bulk_edge_correspondence}
	\begin{align}
		O_{\mathrm{bulk}/\mathrm{edge}}(t) &\approx T_{\mathrm{bulk}/\mathrm{edge}} 
		\label{intro:eqn:bulk_edge_correspondence:topological_observable}
		\\
		T_{\mathrm{bulk}} &= T_{\mathrm{edge}} 
		\label{intro:eqn:bulk_edge_correspondence:mathematical_bulk_edge}
	\end{align}
\end{subequations}
The first two identify physical observables $O_{\mathrm{bulk}/\mathrm{egde}}$ in the bulk and at the boundary that are approximately given in terms of topological quantities $T_{\mathrm{bulk}/\mathrm{edge}}$, and the third states that the two topological quantities necessarily agree with one another. The number and nature of topological invariants depends on the presence or absence of certain discrete symmetries. For the Quantum Hall Effect an even time-reversal symmetry is broken by the magnetic field, and \cite{Thouless_Kohmoto_Nightingale_Den_Nijs:quantized_hall_conductance:1982} proved equation~\eqref{intro:eqn:bulk_edge_correspondence:topological_observable} in the bulk while Hatsugai contributed the other two. Note that in the more mathematically minded subcommunity, usually \eqref{intro:eqn:bulk_edge_correspondence:mathematical_bulk_edge} by itself is referred to as bulk-edge correspondence, but we insist that it is \eqref{intro:eqn:bulk_edge_correspondence:topological_observable} which imbues $T_{\mathrm{bulk}} = T_{\mathrm{edge}}$ with physical meaning. 

Therefore, justifying the Quantum Hall Effect for Light from first principles rests on proving \emph{photonic bulk-boundary correspondences}, and with this paper we aim to work towards this goal. A necessary prerequisite is to first classify linear, lossless and dispersionless media for electromagnetic waves according to certain discrete symmetries in the spirit of the CAZ scheme, as the symmetries which are present determine the \emph{number and nature of topological invariants} which are supported in photonic crystals. 

This paper provides an \emph{exhaustive classification} of electromagnetic lossless, positive index media according to their \emph{material} symmetries. Here, we will only consider symmetries which relate electric and magnetic components (as opposed to \emph{crystallographic} symmetries), \ie those of the form 
\begin{subequations}\label{intro:eqn:potential_symmetries}
	\begin{align}
		U_n &= \sigma_n \otimes \id 
		, 
		&&
		n = 1 , 2 , 3
		,
		\label{intro:eqn:potential_symmetries:linear}
		\\
		T_n &= \bigl ( \sigma_n \otimes \id \bigr ) \, C 
		, 
		&&
		n = 0 , 1 , 2 , 3
		,
		\label{intro:eqn:potential_symmetries:antilinear}
	\end{align}
\end{subequations}
where $C$ is complex conjugation, $\sigma_0 = \id$ is the identity and the $\sigma_n$ for $n = 1 , 2 , 3$ are the Pauli matrices, seen as acting on the electromagnetic splitting. For instance, $T_3 : (\mathbf{E},\mathbf{H}) \mapsto \bigl ( \overline{\mathbf{E}} , - \overline{\mathbf{H}} \bigr )$ complex conjugates the fields and garnishes $\mathbf{H}$ with a minus sign. \emph{Note that since we can always rescale the fields by constants such as $\eps_0$ and $\mu_0$, conditions such as $\eps(x) = \mu(x)$ are equivalent to $\eps(x) = \mathrm{const.} \; \mu(x)$.} To simplify presentation, we shall always assume from hereon that the fields have been suitably rescaled so that the symmetries are of the form~\eqref{intro:eqn:potential_symmetries}. 

It turns out that of those 7 symmetries only three are admissible (\cf Section~\ref{symmetries:relevant} for a detailed explanation). 
\begin{proposition}[Admissible material symmetries]
	Suppose the medium satisfies Assumption~\ref{Schroedinger:assumption:material_weights} is lossless and is a positive index material. Then of all symmetries of the form~\eqref{intro:eqn:potential_symmetries} only the \emph{even} time-reversal symmetries $T_1$ and $T_3$ as well as the dual symmetry $U_2$ are admissible. 
\end{proposition}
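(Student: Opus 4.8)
The plan is to turn the assertion ``$S$ is admissible'' into an algebraic identity for the material weight $W$ and then to eliminate four of the seven candidates with a single positivity argument. Recall from \cite{DeNittis_Lein:Schroedinger_formalism_classical_waves:2017} that in the Schrödinger representation the Maxwell operator reads $M_W = W^{-1} \, \Rot$, where $\Rot$ is the \emph{material-independent}, self-adjoint first-order operator built from $\nabla \times$ and acting on the electromagnetic splitting — in the normalization that makes it self-adjoint it carries an explicit factor $\ii$, so that $C \, \Rot \, C = - \Rot$ — and $W = \left ( \begin{smallmatrix} \eps & \chi \\ \chi^\dagger & \mu \end{smallmatrix} \right )$ is the material weight. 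Losslessness is $W = W^\dagger$, while the positive-index hypothesis and Assumption~\ref{Schroedinger:assumption:material_weights} say in addition that $W$ is bounded with $W \geq c \, \id > 0$. I read ``$S$ admissible'' as: there is at least one such $W$ with $S \, M_W \, S^{-1} = M_W$. Insisting on the $+$ sign here — rather than the relation $S \, M_W \, S^{-1} = - M_W$ of a chiral or particle--hole symmetry — is exactly the requirement that $S$ preserve the physical positive-frequency sector of $M_W$, hence be a genuine symmetry of the Maxwell dynamics; antiunitary candidates then enter as time-reversal symmetries and unitary ones as ordinary symmetries.

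First I would carry out a sign bookkeeping for how the $\sigma_n$ and $C$ act on the fixed operator $\Rot$. Because $\Rot$ is block off-diagonal in the electromagnetic splitting, its explicit form yields $\sigma_n \, \Rot \, \sigma_n = \eta_n \, \Rot$ with $\eta_0 = \eta_2 = + 1$ and $\eta_1 = \eta_3 = - 1$, together with $C \, \Rot \, C = - \Rot$. Writing $W^{(S)} := \sigma_n \, W \, \sigma_n$ when $S = U_n$ and $W^{(S)} := \sigma_n \, \overline{W} \, \sigma_n$ when $S = T_n$, a short computation gives $U_n \, M_W \, U_n^{-1} = \eta_n \, \bigl ( W^{(U_n)} \bigr )^{-1} \Rot$ and $T_n \, M_W \, T_n^{-1} = - \eta_n \, \bigl ( W^{(T_n)} \bigr )^{-1} \Rot$. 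Comparing coefficients of $\Rot$ on its (nonzero) range — the transversal fields, on which $M_W$ is the physically relevant operator — the symmetry condition reduces to $W^{(S)} = \eta_S \, W$, with $\eta_S = \eta_n$ for $S = U_n$ and $\eta_S = - \eta_n$ for $S = T_n$; thus $\eta_S = + 1$ precisely for $S \in \{ U_2 , T_1 , T_3 \}$ and $\eta_S = - 1$ for $S \in \{ U_1 , U_3 , T_0 , T_2 \}$.

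The positivity argument is then immediate. For each candidate $W^{(S)}$ is positive definite whenever $W$ is: conjugation by the unitary $\sigma_n$ preserves positivity, and $\overline{W} = W^{\mathrm{T}}$ is positive because $W = W^\dagger$ is. Hence $W^{(S)} = \eta_S \, W$ with $\eta_S = - 1$ would force a positive operator to equal a negative one on a nonzero subspace — impossible — which rules out $U_1$, $U_3$, $T_0$ and $T_2$. For the remaining three the identity is $\sigma_2 \, W \, \sigma_2 = W$ (equivalently $\eps = \mu$, up to the magneto-electric coupling) for $U_2$ and $\sigma_n \, \overline{W} \, \sigma_n = W$ for $T_1$ and $T_3$; each holds for the rescaled vacuum weight $W = \id$ and, more to the point, for nonempty families of lossless positive-index media (including gyrotropic ones), so $U_2$, $T_1$ and $T_3$ are admissible. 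Finally $T_1^2 = T_3^2 = + \id$ because $\sigma_1$ and $\sigma_3$ are real, so these are \emph{even} time-reversal symmetries, and $U_2$ is electromagnetic duality — note that $T_0$ is also even, so it is the positivity obstruction, not the sign of $T_n^2$, that singles out the admissible antiunitaries.

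The one genuinely delicate step is the sign bookkeeping: establishing $\sigma_n \, \Rot \, \sigma_n = \eta_n \, \Rot$ and $C \, \Rot \, C = - \Rot$ rigorously uses the precise form of the operator from \cite{DeNittis_Lein:Schroedinger_formalism_classical_waves:2017} (the $\ii$ making $\Rot$ self-adjoint, and its block structure), and one must take care that ``$W^{(S)} = \eta_S W$'' is read off only on the range of $\Rot$, since on the full field space $\Rot$ annihilates the longitudinal fields — the positivity contradiction is cleanest phrased as $- \langle \Rot w , ( W^{(S)} )^{-1} \Rot w \rangle = \langle \Rot w , W^{-1} \Rot w \rangle$ for a nonzero $\Rot w$, which is absurd. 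Closely related, and needed to make ``only'' airtight, is the conceptual point that one must discard the anticommuting reading $S \, M_W \, S^{-1} = - M_W$ altogether — otherwise $\eta_S$ flips and the excluded quadruple $\{ U_1 , U_3 , T_0 , T_2 \}$ reappears as candidate chiral or particle--hole symmetries; the justification is that such relations invert the frequency and so do not act within the physical sector. Everything past this first step is routine linear algebra.
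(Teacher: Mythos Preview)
Your proof is correct and the sign bookkeeping matches the paper's exactly: $\Rot = -\sigma_2 \otimes \nabla^{\times}$, so $\sigma_n \, \Rot \, \sigma_n = \eta_n \, \Rot$ with $\eta_1 = \eta_3 = -1$, $\eta_0 = \eta_2 = +1$, and $C \, \Rot \, C = -\Rot$. The overall architecture is also the same --- compute how each candidate acts on $\Rot$, insist on the commuting sign because the physical sector is $\omega \geq 0$, and read off $U_2$, $T_1$, $T_3$ as the survivors.

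Where you diverge from the paper is in the \emph{elimination mechanism}. The paper (Lemma~\ref{symmetries:lem:symmetry_condition} and Proposition~\ref{symmetries:prop:admissible_symmetries}) splits admissibility into two independent physical requirements: (i)~(anti)unitarity with respect to the weighted scalar product $\sscpro{\,\cdot\,}{\,\cdot\,}_W$, which forces $[U,W] = 0$, and (ii)~preservation of $\Hil_+$, which forces $[U,\Rot] = 0$; the four bad candidates are then discarded simply because they \emph{anticommute} with $\Rot$. You instead collapse everything into the single relation $S \, M_W \, S^{-1} = M_W$, extract $W^{(S)} = \eta_S \, W$, and kill the bad cases with the \emph{positivity} of $W$ --- a hypothesis the paper never invokes at this step. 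Your route is arguably more economical (one computation, one obstruction), and it makes transparent why the positive-index assumption matters for the classification itself, not just for the functional-analytic setup. The paper's route, on the other hand, keeps the two physical inputs (energy conservation via $W$-unitarity, and frequency-sector preservation) visibly separate, which is conceptually cleaner when one later wants to relax one of them. A minor point: you never explicitly verify $W$-unitarity of the three survivors, but since your condition $W^{(S)} = W$ is exactly $[S,W] = 0$, this follows automatically and is not a gap.
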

The presence or absence of these symmetries translate to conditions on the electric permittivity $\eps$, the magnetic permeability $\mu$ and the bianistropic tensor $\chi$ which enter the \emph{constitutive relations}
\begin{align}
	\left (
	\begin{matrix}
		\mathbf{D}(x) \\
		\mathbf{B}(x) \\
	\end{matrix}
	\right ) = W(x) \left (
	\begin{matrix}
		\mathbf{E}(x) \\
		\mathbf{H}(x) \\
	\end{matrix}
	\right ) 
	= \left (
	\begin{matrix}
		\eps(x) & \chi(x) \\
		\chi(x)^* & \mu(x) \\
	\end{matrix}
	\right ) \left (
	\begin{matrix}
		\mathbf{E}(x) \\
		\mathbf{H}(x) \\
	\end{matrix}
	\right )
	\label{intro:eqn:material_weights}
\end{align}
that relate the auxiliary fields $(\mathbf{D},\mathbf{B})$ to the electromagnetic field $(\mathbf{E},\mathbf{H})$; these enter as $3 \times 3$ blocks into the $6 \times 6$ matrix-valued function $W(x)$ which we will collectively refer to as the \emph{material weights}. $W$ phenomenologically describes how the microscopic charges in the medium react to impinging electromagnetic waves. 

Just like in quantum mechanics, for the purpose of a topological classification, only the anti-linear symmetries $T_1$ and $T_3$ are relevant and we suppose that all other symmetries have been reduced out:
\begin{assumption}[No additional symmetries]\label{intro:assumption:no_additional_symmetries}
	\begin{enumerate}[(a)]
		\item Assume $W$ commutes with \emph{at most one} of the $T_j$, $j = 1 , 3$ and that there are no unitaries $U$ which commute with the material weights $W$ and the free Maxwell operator 
		\begin{align*}
			\Rot = \left (
			\begin{matrix}
				0 & + \ii \nabla^{\times} \\
				- \ii \nabla^{\times} & 0 \\
			\end{matrix}
			\right )
			.
		\end{align*}
		\item Assume $W$ commutes with $T_1$ \emph{and} $T_3$. Then we suppose that apart from a phase times $U_2 = \sigma_2 \otimes \id$ there are no \emph{other} unitaries $U$ which commute with the material weights $W$ and the free Maxwell operator $\Rot$. 
	\end{enumerate}
\end{assumption}
If there were an additional unitary, discrete, commuting symmetry $U$ present, then we first need to make a block decomposition of the Maxwell operator $M_+ = W^{-1} \, \Rot \, \big \vert_{\omega \geq 0}$, the analog of the quantum Hamiltonian defined by equation~\eqref{Schroedinger:eqn:positive_frequency_Maxwell_operator} below, with respect to the eigenspaces of $U$ and analyze each of the block operators separately. Depending on the interplay of all of $M_+$'s symmetries, the individual block operators may or may not inherit symmetries from $M_+$; 
we will give a detailed analysis of both of these cases in Section~\ref{conclusion:comparison:QSHE}. 
\medskip

\noindent
Therefore, we identify \emph{four} distinct types of media which correspond to the presence of no time-reversal symmetries ($1$), one even time-reversal symmetry ($2$) and two even time-reversal symmetries ($1$). 
\begin{theorem}[Symmetry classification of media]\label{intro:thm:classification_media}
	Suppose the material weights 
	\begin{align*}
		W(x) = \left (
		\begin{matrix}
			\eps(x) & \chi(x) \\
			\chi(x)^* & \mu(x) \\
		\end{matrix}
		\right )
		= \left (
		\begin{matrix}
			w_0(x) + w_3(x) & w_1(x) - \ii w_2(x) \\
			w_1(x) + \ii w_2(x) & w_0(x) - w_3(x) \\
		\end{matrix}
		\right )
		,
	\end{align*}
	expressed in terms of four hermitian $3 \times 3$ matrices $w_j(x) = w_j(x)^*$, $j = 0 , 1 , 2 , 3$, 
	are lossless and have strictly positive eigenvalues that are bounded away from $0$ and $\infty$, \ie they satisfy Assumption~\ref{Schroedinger:assumption:material_weights}. Moreover, we assume there are no additional unitary commuting symmetries (Assumption~\ref{intro:assumption:no_additional_symmetries}). Then there are four topologically distinct materials: 
	\begin{center}
		\newcolumntype{A}{>{\centering\arraybackslash\normalsize} m{2.85cm} }
		\newcolumntype{B}{>{\centering\arraybackslash\normalsize} m{2.3cm} }
		\newcolumntype{C}{>{\centering\arraybackslash\normalsize} m{2.475cm} }
		\newcolumntype{D}{>{\centering\arraybackslash\normalsize} m{1.7cm} }
		\newcolumntype{E}{>{\centering\arraybackslash\normalsize} m{1.45cm} }
		\renewcommand{\arraystretch}{1.15}
		\begin{tabular}{A | B | C | c | E}
			\emph{Material} & \emph{Realizations} & \emph{Conditions on $W$} & \emph{Symmetries} & \emph{CAZ Class} \\ \hline \hline 
			Dual symmetric materials \& vacuum & Yes \linebreak \cite{Fernandez-Corbaton_et_al:helicity_angular_momentum_dual_symmetry:2012,Fernandez-Corbaton_et_al:electromagnetic_duality_symmetry:2013} & $w_0 = \Re w_0$, $w_3 = 0$, $w_1 = 0$, $w_2 = \Re w_2$ & $T_1$, $T_3$, $U_2$ & not applicable \\ \hline
			Non-dual symmetric \linebreak \& non-gyrotropic & Yes \linebreak \cite{Bliokh_Bliokh:Berry_curvature_optical_Magnus_effect:2004,Onoda_Murakami_Nagaosa:Hall_effect_light:2004,Ochiai_Onoda:edge_states_photonic_graphene:2009} & $w_0 = \Re w_0$, $w_3 = \Re w_3$, $w_1 = \ii \, \Im w_1$, $w_2 = \Re w_2$ & $T_3$ & AI \\ \hline
			Magneto-electric & Yes \linebreak \cite{Tellegen:gyrator:1948,Lin_et_al:realization_magneto_electric_medium_static_fields:2008} & $w_0 = \Re w_0$, $w_3 = \ii \, \Im w_3$, $w_1 = \Re w_1$, $w_2 = \Re w_2$ & $T_1$ & AI \\ \hline
			Gyrotropic & Yes \linebreak \cite{Wang_et_al:edge_modes_photonic_crystal:2008,Lin_et_al:topological_photonic_states:2014} & 
			None & None & A \\ 
		\end{tabular}
	\end{center}
	The conditions on the material weights in each row are exclusive, meaning that \eg non-gyrotropic materials must violate at least one of the conditions that single out magneto-electric materials. 
\end{theorem}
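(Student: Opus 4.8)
The plan is to turn, one material symmetry at a time, the statement ``$T_n$ (resp.\ $U_2$) commutes with the Maxwell operator'' into an algebraic identity for the Pauli components $w_j$ of $W$, and then to read off the Cartan--Altland--Zirnbauer class of $M_+ = W^{-1}\,\Rot\,\big\vert_{\omega \geq 0}$ from the list of symmetries that survive. The starting point is the Proposition on admissible material symmetries above: among the seven operators in~\eqref{intro:eqn:potential_symmetries}, a lossless positive-index $W$ can be intertwined only by the even time-reversal symmetries $T_1$ and $T_3$ and by the dual symmetry $U_2$; moreover, by Assumption~\ref{intro:assumption:no_additional_symmetries}, any further commuting unitary has been reduced out. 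Consequently a medium is, for the purposes of the classification, completely described by which subset of $\{T_1 , T_3\}$ it admits, and there are exactly the four possibilities $\emptyset$, $\{T_3\}$, $\{T_1\}$, $\{T_1 , T_3\}$ listed in the table.

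Next I would record the computational core, which is short Pauli algebra. Writing $W = \sum_{j=0}^3 \sigma_j \otimes w_j$ with $\sigma_0 = \id$, and using $\Rot = -\,\sigma_2 \otimes \nabla^{\times}$, one observes two sign flips: $(\sigma_n \otimes \id)\,\Rot\,(\sigma_n \otimes \id) = -\Rot$ for $n = 1 , 3$, because $\sigma_n$ anticommutes with $\sigma_2$; and $C \, \Rot \, C = \overline{\Rot} = -\Rot$, because $\overline{\ii \nabla^{\times}} = -\ii \nabla^{\times}$. Combining them, $T_n = (\sigma_n \otimes \id)\,C$ commutes with $M = W^{-1}\,\Rot$ if and only if $(\sigma_n \otimes \id)\,\overline{W}\,(\sigma_n \otimes \id) = W$. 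Since $T_n$ then also commutes with every (real-valued) spectral projection of $M$, in particular with the projection onto $\omega \geq 0$, this is equivalent to $T_n$ being a symmetry of the physical operator $M_+$; the precise relation between $M$ and $M_+$ is imported from~\cite{DeNittis_Lein:Schroedinger_formalism_classical_waves:2017}. Expanding $(\sigma_n \otimes \id)\,\overline{W}\,(\sigma_n \otimes \id)$ with $\overline{\sigma_2} = -\sigma_2$ and $\overline{\sigma_j} = \sigma_j$ otherwise, and $\sigma_n \sigma_j \sigma_n = +\sigma_j$ for $j \in \{0 , n\}$ but $-\sigma_j$ otherwise, and comparing Pauli components, the identity $(\sigma_n \otimes \id)\,\overline{W}\,(\sigma_n \otimes \id) = W$ becomes for $n = 3$ ``$w_0 , w_2 , w_3$ real and $w_1$ purely imaginary'' and for $n = 1$ ``$w_0 , w_1 , w_2$ real and $w_3$ purely imaginary'' --- exactly the entries of rows two and three. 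The same computation for the \emph{linear} symmetry $U_2 = \sigma_2 \otimes \id$, which already commutes with $\Rot$, yields $(\sigma_2 \otimes \id)\,W\,(\sigma_2 \otimes \id) = W \iff w_1 = w_3 = 0$.

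It then remains to assemble the four cases. If neither time-reversal symmetry is present, the $w_j$ satisfy no constraint: this is the generic, gyrotropic medium; its $M_+$ has no anti-unitary and no chiral symmetry, hence sits in class~A. If exactly one even time-reversal symmetry is present --- and $T_1^2 = T_3^2 = +\id$, while by~\cite{DeNittis_Lein:Schroedinger_formalism_classical_waves:2017} the restricted operator $M_+$ carries no particle-hole or chiral symmetry of its own --- the class is~AI; this is the magneto-electric medium ($T_1$, $w_3$ purely imaginary) and the non-dual, non-gyrotropic medium ($T_3$, $w_1$ purely imaginary). If both $T_1$ and $T_3$ are present, intersecting the two constraint sets forces $w_1 = w_3 = 0$, which is precisely the condition that makes $U_2$ a symmetry as well; hence dual-symmetric media and vacuum necessarily carry the extra unitary $U_2$, and before assigning any CAZ label one must block-decompose $M_+$ over the eigenspaces of $U_2$ as in Assumption~\ref{intro:assumption:no_additional_symmetries}(b) --- whence the ``not applicable'' entry, with the refined analysis postponed to Section~\ref{conclusion:comparison:QSHE}. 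Finally, the four rows are mutually exclusive because they are the partition of the admissible weights according to the \emph{exact} symmetry group present, namely $\{T_1 , T_3 , U_2\}$, $\{T_3\}$, $\{T_1\}$ or $\emptyset$; in particular ``non-dual'' in row two means $w_1 \neq 0$ or $w_3 \neq 0$, so a non-dual, non-gyrotropic medium automatically violates at least one of the magneto-electric conditions ``$w_1 = \Re w_1$'', ``$w_3 = \ii \, \Im w_3$''.

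Essentially everything above is bookkeeping with $2 \times 2$ matrices; the one genuinely non-trivial ingredient --- which is why it is imported rather than reproved here --- is that passing from the doubled operator on all frequencies $\omega \in \R$ to the physical operator $M_+$ on $\omega \geq 0$ destroys the particle-hole symmetry and leaves no chiral symmetry, so that the CAZ class of $M_+$ really is governed solely by the even time-reversal content catalogued above. I expect this to be the main subtlety, and it rests on the structural results of~\cite{DeNittis_Lein:Schroedinger_formalism_classical_waves:2017}.
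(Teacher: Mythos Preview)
Your proposal is correct and follows essentially the same route as the paper: the paper's argument is contained in Lemma~\ref{symmetries:lem:symmetry_condition} (admissible symmetries must commute with both $W$ and $\Rot$) and Proposition~\ref{symmetries:prop:admissible_symmetries} (the Pauli-algebra translation into conditions on the $w_j$), after which the four cases are read off exactly as you do. The only cosmetic difference is packaging: the paper separates the two requirements ``$[U,W]=0$'' (needed for $\scpro{\,\cdot\,}{\,\cdot\,}_W$-unitarity) and ``$[U,\Rot]=0$'' (needed so that $Q_+$ is preserved) in Lemma~\ref{symmetries:lem:symmetry_condition}, whereas you fold both into the single computation $T_n \, \Maux_+ \, T_n^{-1} = \Maux_+ \iff (\sigma_n \otimes \id)\,\overline{W}\,(\sigma_n \otimes \id) = W$ and then descend to $M_+$ via functional calculus; also, the refined analysis of the dual-symmetric case is in Section~\ref{classification:Bloch_bundle:dual_symmetric} rather than Section~\ref{conclusion:comparison:QSHE}.
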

Three of these four cases fall within the standard CAZ classification scheme and therefore closely resemble the corresponding quantum systems. Although, as we discuss below, drawing conclusions from that is not as easy as it might appear at first. Dual symmetric, non-gyrotropic media, the first case, falls outside of standard theory as it has two anti-commuting even time-reversal symmetries, and we will perform an analysis of the topological phases this topological class supports in Section~\ref{classification:Bloch_bundle:dual_symmetric}. 

At the end of the day, it turns out that of all periodic media in dimensions $1$, $2$ and $3$, \emph{only gyrotropic media in dimensions $2$ and $3$ can be topologically non-trivial}; the phases are labelled by one and three first Chern numbers, respectively. For periodic $3$d photonic crystals with periodic time-dependence ($d = 4$), second Chern classes will also play a role when distinguishing topologically distinct phases. More specifically, our classification result reads: 
\begin{theorem}[Topological bulk classification of periodic media]\label{intro:thm:bulk_classification}
	Suppose the material weights are periodic and satisfy Assumption~\ref{Schroedinger:assumption:material_weights}. 
	\begin{enumerate}[(1)]
		\item \textbf{Class A: Gyrotropic media} \\
		Phases are labelled by $\Z$-valued Chern numbers, in \\
		$d = 1$ by \emph{none} (topologically trivial), \\
		$d = 2$ by a \emph{single} first Chern number ($\Z$), \\
		$d = 3$ by \emph{three} first Chern numbers ($\Z^3$), \\
		$d = 4$ by \emph{six} first and \emph{one} second Chern number ($\Z^6 \oplus \Z$). 
		\item \textbf{Class AI: Non-dual symmetric, non-gyrotropic and magneto-electric media} \\
		In $d = 1 , 2 , 3$ these media are \emph{topologically trivial}, \ie there is a single phase. \\
		In $d = 4$, phases are labelled by a \emph{single} second Chern number ($\Z$). 
		\item \textbf{Dual-symmetric, non-gyrotropic media} \\
		In $d = 1 , 2 , 3$ these media are \emph{topologically trivial}, \ie there is a single phase. \\
		In $d = 4$, phases are labelled by \emph{two} second Chern numbers ($\Z^2$). 
	\end{enumerate}
\end{theorem}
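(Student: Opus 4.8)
The plan is to pass from the positive-frequency Maxwell operator to its Bloch bundle over the Brillouin torus $\T^d$, to track how the material symmetries $T_1$, $T_3$, $U_2$ descend to that bundle, and then to invoke the classification of the resulting types of vector bundles over low-dimensional tori. First, using the Schrödinger reformulation of \cite{DeNittis_Lein:Schroedinger_formalism_classical_waves:2017} together with Assumption~\ref{Schroedinger:assumption:material_weights}, the operator $M_+ = W^{-1} \, \Rot \, \big\vert_{\omega \geq 0}$ is selfadjoint on the physical (transversal) Hilbert space and periodic in $d$ directions, where $d$ counts the spatial periods plus, in the time-periodic case, one temporal period. A Bloch–Floquet transform fibres $M_+$ over $\T^d$, and to an isolated family of frequency bands — \eg those below a bulk band gap — one associates the Bloch bundle $\mathcal{E} \to \T^d$, a finite-rank complex vector bundle. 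The point needing care already here is that the kernel of $\Rot$ (the longitudinal/gradient fields) and the zero-frequency modes must be projected away before $\mathcal{E}$ is defined; this reduction is provided by the machinery of \cite{DeNittis_Lein:Schroedinger_formalism_classical_waves:2017}, and one checks that the symmetries pass to the quotient.

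Second, I would record how the symmetries constrain $\mathcal{E}$. Because $\Rot$ commutes with $T_1$, $T_3$, $U_2$ (as one verifies directly from their Pauli-matrix form, \cf \cite{DeNittis_Lein:symmetries_Maxwell:2014}) and the defining conditions of each material class force $W$ to commute with the corresponding symmetries, $M_+$ inherits them; moreover each $T_j$ is anti-unitary with $T_j^2 = \id$ and, being a \emph{symmetry} rather than a particle-hole-type constraint, preserves $\{\omega \geq 0\}$. After the Bloch–Floquet transform, $T_1$ and $T_3$ therefore become anti-linear automorphisms of $\mathcal{E}$ covering the involution $k \mapsto -k$ and squaring to $+\id$ — that is, ``Real'' structures in Atiyah's sense — while $U_2$ becomes a unitary automorphism of $\mathcal{E}$ over the identity. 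Hence: (i) for gyrotropic media $\mathcal{E}$ is an unconstrained complex vector bundle (class~A); (ii) for non-dual-symmetric non-gyrotropic and for magneto-electric media $\mathcal{E}$ carries a single ``Real'' structure (class~AI); (iii) for dual-symmetric non-gyrotropic media $\mathcal{E}$ carries two ``Real'' structures together with the commuting unitary $U_2 \propto T_1 T_3$.

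Third comes the actual counting. For class~A one uses that, for $d \leq 4$ and families of rank at least two (the physically generic case), complex vector bundles over $\T^d$ are classified by $(c_1, c_2) \in H^2(\T^d;\Z) \oplus H^4(\T^d;\Z)$ — equivalently by $\widetilde{K}^0(\T^d)$, for which the Atiyah–Hirzebruch spectral sequence collapses and is extension-free in this range. Since $H^2(\T^d;\Z) \cong \Z^{\binom{d}{2}}$ and $H^4(\T^d;\Z) \cong \Z^{\binom{d}{4}}$, one gets $0$, $\Z$, $\Z^3$, $\Z^6 \oplus \Z$ for $d = 1,2,3,4$, which is item~(1). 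For class~AI one invokes the analogous classification of ``Real'' vector bundles over $(\T^d, k \mapsto -k)$, obtained by the methods of \cite{DeNittis_Gomi:AII_bundles:2014,DeNittis_Gomi:AIII_bundles:2015} applied to the ``Real'' rather than the ``Quaternionic'' case: these bundles are trivial for $d \leq 3$ and classified by a single ``Real'' second Chern number ($\Z$) for $d = 4$ — item~(2). Finally, for dual-symmetric non-gyrotropic media, decompose $\mathcal{E} = \mathcal{E}_+ \oplus \mathcal{E}_-$ into the two $U_2$-eigenbundles; since $T_1$ (and hence $T_3 \propto U_2 T_1$) \emph{commutes} with $U_2$, it restricts to a ``Real'' structure on each summand, so each $\mathcal{E}_\pm$ is a class~AI bundle, and Assumption~\ref{intro:assumption:no_additional_symmetries}(b) ensures that no residual symmetry links $\mathcal{E}_+$ with $\mathcal{E}_-$; the invariants are therefore two independent copies of the class~AI list — nothing for $d \leq 3$ and $\Z^2$ for $d = 4$, which is item~(3).

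\textbf{Main obstacle.} Two steps carry the real weight. The first is making the reduction to $\mathcal{E}$ rigorous — pinning down the physical fibre space on which $M_+$ is gapped and verifying that the anti-linear symmetries survive as honest ``Real'' structures (correct square, correct covariance) after the longitudinal and zero-frequency modes are removed. The second, and mathematically the more substantive, is the low-dimensional bundle classification itself: that \emph{every} value of the $d = 4$ second Chern number is realized by a genuine Bloch bundle, and, in the dual-symmetric case, that $\mathcal{E}_+$ and $\mathcal{E}_-$ are truly independent. This last point is precisely where the situation differs from the quantum spin Hall paradigm: there a time-reversal symmetry \emph{exchanges} the two blocks and ties their invariants together, whereas here $[T_1, U_2] = 0$ so $T_1$ \emph{preserves} each $U_2$-eigenbundle and no such relation is imposed — which is why dual-symmetric media carry two independent second Chern numbers rather than one.
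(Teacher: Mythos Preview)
Your proposal is correct and follows essentially the same route as the paper: Bloch--Floquet reduction to a Bloch bundle over $\T^d$, identification of the induced bundle structures (none, one ``Real'' structure, or two ``Real'' structures plus $U_2$), and then invocation of the known classifications for class~A (Chern classes via $H^{2j}(\T^d;\Z)$) and class~AI (triviality for $d \leq 3$, second Chern number for $d = 4$, via \cite{DeNittis_Gomi:AI_bundles:2014}). The only notable difference is in the dual-symmetric case: you observe directly that $[T_1, U_2] = 0$ so that $T_1$ restricts to each helicity eigenbundle, whereas the paper first passes from the anticommuting pair $(T_1, T_3)$ to a commuting pair $(T_1, \ii T_3)$ via a phase, builds $U_2$ as their product, and then shows through a short chain of lemmas that the two restricted time-reversals $T_{1,\pm}$ and $T_{3,\pm}$ coincide up to a phase on each block --- your shortcut and the paper's lemma chain encode the same algebra.
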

\medskip

\noindent
\paragraph{Outline} 
\label{par:outline}
This work is separated into 5 Sections: following this introduction, we give an overview of the Schrödinger formalism for classical electromagnetism in media in Section~\ref{Schroedinger}, summarizing the results of \cite{DeNittis_Lein:Schroedinger_formalism_classical_waves:2017} to make this work more self-contained. After a discussion of the relevant material symmetries (Section~\ref{symmetries}), we give a bulk classification of electromagnetic media (Section~\ref{classification}). That includes a precise definition of what we mean by topology and how topological invariants are connected to relevant families of frequency bands. We close this work by contrasting and comparing it to the literature and outlining future developments (Section~\ref{conclusion}). 
\section{The Schrödinger formalism for electromagnetism in linear, dispersionless media} 
\label{Schroedinger}
The first step prior to adapting a quantum mechanical concept such as the symmetry classification is to write Maxwell's equations in Schrödinger form; this is a little more involved, and we have dedicated a separate paper \cite{DeNittis_Lein:Schroedinger_formalism_classical_waves:2017} to address all the intricacies that occur. These intricacies arise because we want to include \emph{gyrotropic} lossless media in our discussion where the material weights 
\begin{align}
	W(x) = \left (
	\begin{matrix}
		\eps(x) & \chi(x) \\
		\chi(x)^* & \mu(x) \\
	\end{matrix}
	\right )
	\label{Schroedinger:eqn:material_weights}
\end{align}
are complex, $W \neq \overline{W}$. Here, $W$ is a $6 \times 6$ matrix-valued function and is usually split into $3 \times 3$ blocks, the electric permittivity $\eps = \eps^*$, the magnetic permeability $\mu = \mu^*$ and the bianisotropic tensor $\chi$. It phenomenologically describes how the microscopic charges inside the material react to impinging electromagnetic fields. To be able to write Maxwell's equations~\eqref{Schroedinger:eqn:Maxwell_equations} we need to impose two conditions on the medium: 
\begin{assumption}[Material weights]\label{Schroedinger:assumption:material_weights}
	\begin{enumerate}[(a)]
		\item The medium is \emph{lossless}, \ie $W(x) = W(x)^*$ takes values in the \emph{hermitian} matrices. 
		\item The medium is \emph{not a negative index material}, \ie the eigenvalues 
		\begin{align*}
			0 < c \leq w_1(x) , \ldots , w_6(x) \leq C < \infty 
		\end{align*}
		of the hermitian matrix $W(x)$ are all positive and bounded away from $0$ and $\infty$ uniformly in $x$. 
	\end{enumerate}
\end{assumption}

\subsection{Maxwell's equations in linear, dispersionless media} 
\label{Schroedinger:Maxwells_equations}
In case the material weights $W \neq \overline{W}$ are complex, it is necessary split the physical field 
\begin{align*}
	\bigl ( \mathbf{E}(t) \, , \, \mathbf{H}(t) \bigr ) = \Psi_+(t) + \Psi_-(t) 
\end{align*}
into a complex wave 
\begin{align}
	\Psi_+(t) = \bigl ( \psi^E_+(t) \, , \, \psi^H_+(t) \bigr ) 
	= \frac{1}{\sqrt{2\pi}} \int_0^{\infty} \dd \omega \, \e^{- \ii t \omega} \, \bigl ( \widehat{\mathbf{E}}(\omega) \, , \, \widehat{\mathbf{H}}(\omega) \bigr ) 
	, 
	\label{Schroedinger:eqn:Psi_plus_dynamical_definition}
\end{align}
consisting only of \emph{positive} frequencies and the analogously defined \emph{negative} frequency contribution $\Psi_-$, defined in terms of the Fourier transformed fields $\bigl ( \widehat{\mathbf{E}}(\omega) \, , \, \widehat{\mathbf{H}}(\omega) \bigr )$. We will similarly have to split charge density $\pmb{\rho}(t) = \bigl ( \rho^D(t) \, , \, 0 \bigr ) = \rho_+(t) + \rho_-(t)$ and current density $\mathbf{J}(t) = \bigl ( \mathbf{J}^D(t) \, , \, 0 \bigr ) = J_+(t) + J_-(t)$ into positive and negative frequency parts. 

$\Psi_+$ and $\Psi_-$ evolve according to \emph{different} Maxwell equations, namely 
\begin{subequations}\label{Schroedinger:eqn:Maxwell_equations}
	\begin{empheq}[left={\omega \geq 0 \qquad \quad \empheqlbrace\,}]{align}
		\left (
		\begin{matrix}
			\eps & \chi \\
			\chi^* & \mu \\
		\end{matrix}
		\right )
		\, \frac{\partial}{\partial t} \left (
		\begin{matrix}
			\psi^E_+(t) \\
			\psi^H_+(t) \\
		\end{matrix}
		\right ) &= \left (
		\begin{matrix}
			+ \nabla \times \psi^H_+(t) \\
			- \nabla \times \psi^E_+(t) \\
		\end{matrix}
		\right ) - \left (
		\begin{matrix}
			J_+^D(t) \\
			0 \\
		\end{matrix}
		\right )
		, 
		\label{Schroedinger:eqn:Maxwell_equations:dynamics}
		\\
		\left (
		\begin{matrix}
			\nabla \cdot \bigl ( \eps \, \psi^E_+(t) + \chi \, \psi^H_+(t) \bigr ) \\
			\nabla \cdot \bigl ( \chi^* \, \psi^E_+(t) + \mu \, \psi^H_+(t) \bigr ) \\
		\end{matrix}
		\right ) &= \left (
		\begin{matrix}
			\rho_+^D(t) \\
			0 \\
		\end{matrix}
		\right )
		, 
		\label{Schroedinger:eqn:Maxwell_equations:constraint}
		\\
		\nabla \cdot J_+^D(t) + \partial_t \rho_+^D(t) &= 0 
		, 
		\label{Schroedinger:eqn:Maxwell_equations:charge_conservation}
	\end{empheq}
\end{subequations}
for non-negative frequencies and an analogous set of equations involving the complex conjugate weights $\overline{W}$ when $\omega \leq 0$. For otherwise $\bigl ( \mathbf{E}(t) \, , \, \mathbf{H}(t) \bigr )$ would acquire a non-vanishing imaginary part over time. 

To readers who would like to know why these equations, consisting of the \emph{dynamical equation}~\eqref{Schroedinger:eqn:Maxwell_equations:dynamics}, the \emph{constraint equation}~\eqref{Schroedinger:eqn:Maxwell_equations:constraint} and \emph{local charge conservation}~\eqref{Schroedinger:eqn:Maxwell_equations:charge_conservation}, are physically sensible, we refer to \cite[Section~2]{DeNittis_Lein:Schroedinger_formalism_classical_waves:2017} where we have derived them from Maxwell's equations for linear, \emph{dispersive} media. The main point is that the real-valuedness of the electromagnetic field as well as current and charge densities translates to 
\begin{align*}
	\bigl ( \widehat{\mathbf{E}}(\omega) \, , \, \widehat{\mathbf{H}}(\omega) \bigr ) = \bigl ( \overline{\widehat{\mathbf{E}}(-\omega)} \, , \, \overline{\widehat{\mathbf{H}}(-\omega)} \bigr )
\end{align*}
after Fourier transform and gives rise to \emph{phase locking condition} 
\begin{align}
	\Psi_- = \overline{\Psi_+} 
	\label{Schroedinger:eqn:phase_locking_condition}
\end{align}
for \emph{complex} waves. Put another way, positive and negative frequency contributions of the wave are \emph{not independent degrees of freedom}, if we know one, we can reconstruct the other. Hence, it suffices to consider \eqref{Schroedinger:eqn:Maxwell_equations} for $\omega \geq 0$ only. Implicitly, we have exploited that we can uniquely represent real fields with finite field energy as complex waves composed solely of non-negative frequencies (see \cite[Proposition~3.3]{DeNittis_Lein:Schroedinger_formalism_classical_waves:2017}); this systematic link to a \emph{complex} Hilbert space is essential if one wants to apply methods from quantum mechanics.%

\subsection{Rewriting Maxwell's equations in Schrödinger form} 
\label{Schroedinger:rewrite}
If we multiply both sides of \eqref{Schroedinger:eqn:Maxwell_equations:dynamics} by $\ii \, W^{-1}$ (which is bounded thanks to our assumptions on $W$), we obtain the \emph{Schrödinger form} of the dynamical law, 
\begin{align}
	\ii \, \partial_t \Psi_+(t) &= M_+ \Psi_+(t) - \ii \, W^{-1} \, J_+(t) 
	,
	&&
	\Psi_+(0) = Q_+ \, (\mathbf{E}_0 , \mathbf{H}_0)
	. 
	\label{Schroedinger:eqn:Maxwell_Schroedinger_equation}
\end{align}
Here, the complex positive frequency wave $\Psi_+$ represents the real electromagnetic field and plays the role of the wave function; the map $Q_+$ which connects $\Psi_+$ to the real fields $(\mathbf{E}_0 , \mathbf{H}_0)$ will be introduced below. The role of the Hamiltonian is played by the \emph{positive frequency Maxwell operator}
\begin{align}
	M_+ = W^{-1} \, \Rot \, \big \vert_{\omega \geq 0}
	\label{Schroedinger:eqn:positive_frequency_Maxwell_operator}
\end{align}
which is defined in terms of the \emph{free Maxwell operator}
\begin{align*}
	\Rot = \left (
	\begin{matrix}
		0 & + \ii \nabla^{\times} \\
		- \ii \nabla^{\times} & 0 \\
	\end{matrix}
	\right )
	. 
\end{align*}
$\nabla^{\times} \mathbf{E} = \nabla \times \mathbf{E}$ denotes the usual curl. 

Of course, we still need to specify the Hilbert space this operator acts on, and prove that $M_+$ is selfadjoint (or in physics parlance, hermitian). To do that, let us drop the frequency restrictions and define the \emph{auxiliary positive frequency Maxwell operator}
\begin{align*}
	\Maux_+ = W^{-1} \, \Rot
	. 
\end{align*}
Because $\Maux_+$ defines a selfadjoint (aka hermitian) operator on the Hilbert space 
\begin{align*}
	L^2_W(\R^3,C^6) = \Bigl \{ \Psi : \R^d \longrightarrow \C^n \; \; \big \vert \; \; \int_{\R^d} \dd x \, \Psi(x) \cdot W(x) \, \Psi(x) < \infty \Bigr \} 
\end{align*}
endowed with the \emph{energy scalar product}
\begin{align}
	\sscpro{\Phi}{\Psi}_W = \bscpro{\Phi}{W \, \Psi} 
	= \int_{\R^3} \dd x \, \Phi(x) \cdot W(x) \, \Psi(x) 
	, 
	\label{Schroedinger:eqn:weighted_scalar_product}
\end{align}
we can give meaning to the map 
\begin{align*}
	Q_+ = 1_{(0,\infty)}(\Maux_+) + \tfrac{1}{2} \, 1_{\{ 0 \}}(\Maux_+)
\end{align*}
with which we can uniquely represent real fields as complex waves composed solely of non-negative frequencies. The factor $\nicefrac{1}{2}$ is necessary so that gradient fields (which are static, \ie eigenfunctions to frequency $0$) are not counted twice (see \cite[Section~3.2.2]{DeNittis_Lein:Schroedinger_formalism_classical_waves:2017} for further explanations). 

Now $M_+ = \Maux_+ \, \big \vert_{\omega \geq 0}$ is the restriction of the auxiliary Maxwell operator to the non-negative frequency states, and $M_+$ acts on the Hilbert space 
\begin{align*}
	\Hil_+ = Q_+ \bigl [ L^2_W(\R^3,\C^6) \bigr ]
\end{align*}
of non-negative frequency states that inherits the energy scalar product; we need to include $\omega = 0$ waves, \ie gradient fields, so that we are able to cope with sources. What is more, there is a one-to-one correspondence between \emph{real} electromagnetic fields $(\mathbf{E},\mathbf{H}) \in L^2(\R^3,\R^6)$ with finite field energy and complex fields of non-negative frequencies (\cf \cite[Proposition~3.3]{DeNittis_Lein:Schroedinger_formalism_classical_waves:2017}), 
\begin{align*}
	(\mathbf{E},\mathbf{H}) \in L^2(\R^3,\R^6)
	\quad \longleftrightarrow \quad 
	\Psi_+ = Q_+ \, (\mathbf{E},\mathbf{H}) \in \Hil_+ 
	. 
\end{align*}
This systematic identification of real and complex fields allows us to employ technqiues from the theory of selfadjoint operators, and that necessarily forces us to work with \emph{complex} vector spaces. 

$M_+$ inherits the selfadjointness of $\Maux_+$, and therefore the evolution group $\e^{- \ii t M_+}$ is unitary with respect to $\scpro{\, \cdot \,}{\, \cdot \,}_W$. 
The unitarity implies the conservation of field energy in the absence of currents, 
\begin{align*}
	\mathcal{E} \bigl ( \mathbf{E}(t) , \mathbf{H}(t) \bigr ) &= \bscpro{\e^{- \ii t M_+} \Psi_+(0) \, }{ \, \e^{- \ii t M_+} \Psi_+(0)}_W
	= \bscpro{\Psi_+(0) \, }{ \, \Psi_+(0)}_W
	= \mathcal{E} \bigl ( \mathbf{E}(0) , \mathbf{H}(0) \bigr )
	, 
\end{align*}
thereby justifying the term \emph{energy scalar product} in the process. 

Lastly, there is the matter of the constraint equation~\eqref{Schroedinger:eqn:Maxwell_equations:constraint}. Provided local charge conservation~\eqref{Schroedinger:eqn:Maxwell_equations:charge_conservation} holds, the solution to \eqref{Schroedinger:eqn:Maxwell_Schroedinger_equation} automatically satisfies \eqref{Schroedinger:eqn:Maxwell_equations:constraint}. The key idea here is to decompose $\Psi_+$ into transversal and longitudinal components, and then use \eqref{Schroedinger:eqn:Maxwell_equations:charge_conservation}. 

\subsection{Symmetry between positive and negative frequency equations} 
\label{Schroedinger:symmetry}
For the symmetry arguments we also need to define the negative frequency counterparts, the Maxwell operator 
\begin{align*}
	M_- = \Maux_- \, \big \vert_{\omega \leq 0} 
	= \overline{W}^{\, -1} \, \Rot \, \big \vert_{\omega \leq 0}
	, 
\end{align*}
the projection onto non-positive states, $Q_- = 1_{(-\infty,0)}(\Maux_-) + \tfrac{1}{2} \, 1_{\{ 0 \}}(\Maux_-)$ and the Hilbert space $\Hil_-$ are defined with the complex conjugated weights $\overline{W}$. Complex conjugation $(C \Psi)(x) = \overline{\Psi(x)}$ defines an antiunitary between $\Hil_{\pm}$ and $\Hil_{\mp}$ that relates positive and negative frequency operators with one another, 
\begin{align*}
	M_- &= - C \, M_+ \, C 
	,
	\\
	Q_- &= C \, Q_+ \, C
	. 
\end{align*}
However, \emph{complex conjugation can never be a physically meaningful symmetry of the Maxwell operator}, because the real-valuedness of the physical fields is an unbreakable tenet of classical electromagnetism and the resulting phase locking condition~\eqref{Schroedinger:eqn:phase_locking_condition} implies that positive and negative frequency fields are not independent degrees of freedom. 

\subsection{Maxwell's equations in other dimensions} 
\label{Schroedinger:other_dimensions}
Maxwell's equations are naturally defined in three spatial dimensions, but electromagnetic waves can be confined to lower-dimensional media. One common way to obtain those lower-dimensional media is to consider \emph{wave guides} where in one or two directions the medium is terminated by a metal. Conversely, there are cases when the effective dimension of the system may exceed $3$. 

We emphasize that \emph{all} the symmetries that we will consider in the next section do not depend on $x$, derivatives $\nabla$ or time $t$, and only impose conditions such as $\eps = \Re \eps$ or $\eps = \mu$. These symmetries then define (anti)unitaries on the relevant Hilbert space on which the (auxiliary) Maxwell operator is defined. Again, we postpone the technical details to a future work.

\subsubsection{Lower-dimensional electromagnetic media} 
\label{Schroedinger:other_dimensions:lower}
The two-dimensional gyrotropic photonic crystal realized in \cite{Wang_et_al:edge_modes_photonic_crystal:2008} which exhibited topologically protected edge modes is an example of a lower-dimensional electromagnetic medium. Here, YIG rods (immersed in a constant magnetic field to tune the material weights) were arranged in a quadratic lattice and sandwiched between two metal plates, thereby forming a waveguide for microwaves with periodic interior. The height $h \approx \unit[7]{mm}$ of the wave guide (which we take to point in the $z$-direction) was comparable to the lattice length $l \approx \unit[40]{mm}$. Usually, the metal walls are approximated by an idealized perfect electric conductor (PEC) where appropriate boundary conditions 
\begin{subequations}\label{Schroedinger:eqn:PEC_boundary_conditions}
	\begin{align}
		\mathbf{E}_x(x,0) &= 0 = \mathbf{E}_x(x,h) 
		, 
		\\
		H_z(x,0) &= 0 = H_z(x,h) 
		, 
	\end{align}
\end{subequations}
are imposed on the electromagnetic field. As the notation suggests, $\mathbf{E}_x = (E_{x_1} , E_{x_2})$ is the in-plane component of the electric field $\mathbf{E} = (\mathbf{E}_x , E_z)$ and $H_z$ the $z$-component of the magnetic field $\mathbf{H} = (\mathbf{H}_x , H_z)$. We will forgo a precise mathematical definition (which is straight-forward, but technical), and only sketch the strategy of defining first the auxiliary, then the physical Maxwell operator. The weights $W(x,z)$ evidently only need to be periodic in the $x$-direction with respect to a two-dimensional periodicity lattice. Straight-forward arguments show that $\Maux_+ = W^{-1} \, \Rot$, endowed with the proper domain, is a selfadjoint (hermitian) operator, and we may impose the condition $\omega \geq 0$ just like before via the projection 
\begin{align*}
	Q_+ = 1_{(0,\infty)}(\Maux_+) + \tfrac{1}{2} \, 1_{\{ 0 \}}(\Maux_+)
\end{align*}
onto the physical states with frequencies $\omega \geq 0$. This gives rise to $M_+ = \Maux_+ \, \vert_{\omega \geq 0}$ and the Hilbert space 
\begin{align*}
	\Hil_+ = \ran Q_+ 
	\subset L^2 \bigl ( \R^2 \times [0,h] , \C^6 \bigr ) 
	. 
\end{align*}
The situation further simplifies if we assume that 
\begin{align*}
	W(x,z) = \left (
	\begin{matrix}
		\eps_x(x) & 0 & 0 & 0 \\
		0 & \eps_z(x) & 0 & 0 \\
		0 & 0 & \mu_x(x) & 0 \\
		0 & 0 & 0 & \mu_z(x) \\
	\end{matrix}
	\right )
\end{align*}
is independent of $z$ and $\eps$ and $\mu$ split cleanly into $x$ and $z$ components. Then we can express every electromagnetic wave as a linear combination of plane waves $\bigl ( \mathbf{E}(x,k_z) , \mathbf{H}(x,k_z) \bigr ) \, \e^{+ \ii k_z \, z}$ in $k_z$, where of course, $k_z$ may only take discrete values. Put another way, the discrete Fourier transform $\Fourier_z$ in the $z$-direction decomposes the auxiliary Maxwell operator 
\begin{align*}
	\Fourier_z \, M_+ \, \Fourier_z^{-1} = \bigoplus_{k_z \in \frac{2\pi}{h} \Z} M_+(k_z)
\end{align*}
into a direct sum of operators $M_+(k_z)$, each associated to a fixed momentum $k_z$ that governs the dynamics of $\bigl ( \mathbf{E}(x,k_z) , \mathbf{H}(x,k_z) \bigr )$. In the aforementioned experiment \cite{Wang_et_al:edge_modes_photonic_crystal:2008} only the $k_z = 0$ mode contributed, and it suffices to consider 
\begin{align*}
	M_+(0) &= \left . W(x) \, \left (
	\begin{matrix}
		0 & + \ii (\nabla_x , 0)^{\times} \\
		- \ii (\nabla_x , 0)^{\times} & 0 \\
	\end{matrix}
	\right ) \right \vert_{\omega \geq 0} 
\end{align*}
acting the positive frequency subspace of $L^2_W(\R^2,\C^6)$; this operator has a more compact expression as a $3 \times 3$-matrix-valued operator (see \eg \cite[Section~2.4]{DeNittis_Lein:symmetries_Maxwell:2014}). We can adapt all of our arguments without any essential changes, \eg employ the Bloch-Floquet transform in $x = (x_1 , x_2)$ in order to obtain frequency bands and Bloch functions that now depend on $k_x = (k_{x_1} , k_{x_2})$ (\cf Section~\ref{classification:Bloch_Floquet}). 

The construction of the Maxwell operator for quasi-one-dimensional waveguides is analogous. Evidently, the more complicated the waveguide geometry is, the less explicit expressions we get. 

\subsubsection{Time-dependent media and media with synthetic dimensions} 
\label{Schroedinger:other_dimensions:higher}
Photonic crystals which are modulated periodically in time could also be treated within this framework by making use of techniques developed for time-dependent quantum systems (see \eg \cite[Section~4.4]{PST:sapt:2002} or \cite{King-Smith_Vanderbilt:polarization:1993,Resta:electric_polarization:1992,Panati_Sparber_Teufel:polarization:2006,DeNittis_Lein:piezo_graphene:2013}). Time then appears alongside $k = (k_1 , \ldots , k_d)$ as a periodic variable, and for topological considerations the system becomes $d+1$-dimensional. 

Physical systems can also be designed to have \emph{synthetic dimensions} by making the system parameter-dependent or adding internal degrees of freedom (see \eg \cite{Ozawa_et_al:synthetic_dimensions_photonics_4d_quantum_Hall_effect:2016,Zilberberg_et_al:4d_quantum_Hall_effect_waveguide_array:2017}).%
\section{Discrete material symmetries of electromagnetic media} 
\label{symmetries}
The standard classification scheme for topological insulators \cite{Altland_Zirnbauer:superconductors_symmetries:1997,Schnyder_Ryu_Furusaki_Ludwig:classification_topological_insulators:2008}, also known as the Ten-Fold Way or the Cartan-Altland-Zirnbauer (CAZ) scheme, distinguishes 10 different topological classes. Which topological class a system belongs to is determined by the symmetries of the Hamilton or Maxwell operator $M$ which enters the dynamical equation $\ii \partial_t \Psi = M \Psi$. Inside of each topological class there are inequivalent phases labeled by a finite set of topological invariants such as Chern numbers or the Kane-Mele invariant. The number and nature of these invariants depends crucially on the symmetries and the dimensionality of the system. 

We have applied this scheme in a previous work \cite{DeNittis_Lein:symmetries_Maxwell:2014} to Maxwell's equations, but as mentioned in the introduction, the equations we used for media with complex material weights included states that were unphysical. Now that we have rewritten these equations in Schrödinger form, we have finished all preparations to have a physically meaningful classification of PTIs. 

Compared to the classification theory developed for quantum mechanics, there is one major difference: \emph{electromagnetic fields are real.} Hence, it is not clear whether we are able to employ the standard classification machinery developed for \emph{complex} vector spaces. The systematic identification of real fields with complex wave functions that was part and parcel of the Schrödinger formalism allows us to overcome this conceptual chasm. We emphasize that all of what we do in this section \emph{applies to homogeneous, random and periodic media} alike, covers lower-dimensional waveguides and time-dependent media\footnote{To simplify our presentation, we will not make the time-dependence explicit in case the medium changes periodically in time. }, and is a prerequisite to the symmetry classification of periodic light conductors in Section~\ref{classification}.

\subsection{Relevant symmetries} 
\label{symmetries:relevant}
For the purpose of classifying photonic topological insulators, we are interested in four basic types of symmetries: \emph{unitary} operators with $U^2 = + \id$ are called \emph{regular} symmetries if 
\begin{align}
	U \, M \, U^{-1} = + M
	, 
	\label{symmetries:eqn:commuting_symmetry}
\end{align}
and \emph{chiral} (pseudo) symmetries in case
\begin{align}
	U \, M \, U^{-1} = - M
	. 
	\label{symmetries:eqn:anticommuting_symmetry}
\end{align}
\emph{Anti}unitaries $U$ are said to be of \emph{time-reversal}-type if \eqref{symmetries:eqn:commuting_symmetry} holds, and are \emph{particle-hole}-type symmetries if just as in \eqref{symmetries:eqn:anticommuting_symmetry} $U$ \emph{anti}commutes with $M$. Regardless of whether they commute or anticommute, antiunitaries come in the even and the odd variety, depending on whether $U^2 = \pm \id$. In what follows, we will refer to all four merely as \emph{symmetries} unless the distinction between \emph{proper}, \ie commuting, symmetries and anticommuting \emph{pseudo} symmetries becomes important. 

Let us emphasize that the terminology originates from quantum mechanics and should \emph{not} be taken literally in this context. The presence of a particle-hole symmetry, \ie an anticommuting antiunitarity, does \emph{not} necessarily postulate the existence of particles and antiparticles. 

The notable absence of commuting unitaries in the CAZ scheme rests on the \emph{assumption} that all commuting unitary symmetries commute with all other types of symmetries listed above. This allows us to reduce out all of these commuting unitary symmetries first and consider the block decomposition with respect to these commuting symmetries; the block operators retain all of the other (time-reversal, particle-hole and chiral) symmetries. However, \emph{a priori} we cannot be sure whether the assumption that regular unitary symmetries commute with the others, this is something that remains to be checked on a case-by-case basis. 

The form of the relevant symmetries is suggested by the problem: we can express the weights 
\begin{align*}
	W = \left (
	\begin{matrix}
		\eps & \chi \\
		\chi^* & \mu \\
	\end{matrix}
	\right )
	= \sum_{j = 0}^3 \sigma_j \otimes w_j
\end{align*}
and the free Maxwell operator 
\begin{align*}
	\Rot = - \sigma_2 \otimes \nabla^{\times} 
\end{align*}
in terms of the identity $\sigma_0 = \id$, the Pauli matrices $\sigma_1$, $\sigma_2$ and $\sigma_3$, and $3 \times 3$ block operators acting on the electric or magnetic fields. The first one $\nabla^{\times} \psi^E = \nabla \times \psi^E$ is just the usual curl. Electric permittivity $\eps$ and magnetic permeability $\mu$ determine $w_0 = \tfrac{1}{2} (\eps + \mu)$ and $w_3 = \tfrac{1}{2} (\eps - \mu)$; hermitian and antihermitian parts of the bianisotropic tensor $\chi$ fix $w_1 = \tfrac{1}{2} (\chi + \chi^*)$ and $w_2 = \tfrac{\ii}{2} (\chi - \chi^*)$. 

Therefore, we shall consider either 
\begin{align}
	U_n = \sigma_n \otimes \id
	, 
	\qquad \qquad 
	n = 1 , 2 , 3
	, 
	\label{symmetries:eqn:linear_symmetry}
\end{align}
as candidates for \emph{linear} symmetries and 
\begin{align}
	T_n = (\sigma_n \otimes \id) \, C 
	, 
	\qquad \qquad 
	n = 0 , 1 , 2 , 3
	, 
	\label{symmetries:eqn:antilinear_symmetry}
\end{align}
for \emph{anti}linear symmetries. $T_1 = \left (
\begin{smallmatrix}
	0 & \id \\
	\id & 0 \\
\end{smallmatrix}
\right ) \, C$ exchanges electric and magnetic fields, then complex conjugates them, and captures whether the medium treats electric and magnetic fields differently; this symmetry will play a role in our analysis later on. Symmetries in our case have to satisfy two conditions which \emph{seemingly} have to be imposed in addition to $U \, M_+ \, U^{-1} = \pm M_+$, namely (1)~$U$ maps non-negative frequency states onto non-negative frequency states and (2)~it needs to satisfy $U^{\ast_W} = W^{-1} \, U^* \, W = U^{-1}$, where $U^{\ast_W}$ denotes the adjoint with respect to the \emph{weighted} scalar product~\eqref{Schroedinger:eqn:weighted_scalar_product} (\cf \cite[Section~3.1.2]{DeNittis_Lein:Schroedinger_formalism_classical_waves:2017}); this adjoint is to be distinguished from $U^*$, the adjoint with respect to the usual, unweighted scalar product. It turns out \emph{only commuting unitaries or time-reversal-type symmetries are admissible.} 
\begin{lemma}[Conditions for admissibility of symmetries]\label{symmetries:lem:symmetry_condition}
	Suppose the material weights $W$ satisfy Assumption~\ref{Schroedinger:assumption:material_weights}. Then all admissible symmetries $U : \Hil_+ \longrightarrow \Hil_+$ of the form \eqref{symmetries:eqn:linear_symmetry} or \eqref{symmetries:eqn:antilinear_symmetry} must commute with 
	\begin{enumerate}[(1)]
		\item the material weights $W$ (due to $\scpro{\; \cdot \,}{\, \cdot \;}_W$-\emph{unitarity}), and 
		\item the Maxwell operator $M_+$ or, equivalently, with $\Maux_+$ or $\Rot$ (due to \emph{non-negative states are mapped onto non-negative states}). 
	\end{enumerate}
	Analogous statements hold for the case $\omega \leq 0$ with weights $\overline{W}$ and $M_- = \overline{W} \, \Rot \, \big \vert_{\omega \leq 0}$. 
\end{lemma}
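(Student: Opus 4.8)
The plan is to peel off the two admissibility requirements in turn: the $\scpro{\, \cdot \,}{\, \cdot \,}_W$-(anti)unitarity $U^{\ast_W} = U^{-1}$ will force $[U,W] = 0$, and then preservation of the non-negative frequency subspace $\Hil_+$ will force $[U,M_+] = 0$ rather than anticommutation. For the first step I would record that each candidate is a standard-(anti)unitary involution: $U_n = \sigma_n \otimes \id$ is hermitian with $U_n^2 = \id$ since the Pauli matrices are, and for $T_n = (\sigma_n \otimes \id)\, C$ a one-line computation with $\overline{\sigma_n}$ gives $T_n^* = T_n^{-1}$ ($T_n^2 = + \id$ for $n = 0,1,3$, $T_2^2 = - \id$). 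Feeding $U^* = U^{-1}$ into the identity $U^{\ast_W} = W^{-1} \, U^* \, W$ (which holds for antilinear $U$ as well because $W = W^*$), the admissibility condition $U^{\ast_W} = U^{-1}$ collapses to $W^{-1} U^{-1} W = U^{-1}$, that is, $[U,W] = 0$. Written out via $W = \sum_{j=0}^3 \sigma_j \otimes w_j$ this is exactly the list of pointwise conditions $\overline{w_j} = \pm w_j$ appearing in Theorem~\ref{intro:thm:classification_media}, but here only the operator identity is needed.

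With $[U,W] = 0$ — hence also $[U,W^{-1}] = 0$ — in hand, I would compute $U \Maux_+ U^{-1} = W^{-1} \, U \Rot U^{-1}$ and use $\Rot = - \sigma_2 \otimes \nabla^{\times}$ together with $C \Rot C = - \Rot$ (immediate from the explicit form of $\Rot$, since $\nabla^{\times}$ is real) to obtain $U \Maux_+ U^{-1} = \eta_U \, \Maux_+$ with a definite sign $\eta_U \in \{ + 1 , - 1 \}$ read off the explicit forms of $\Rot$ and $\sigma_n$. In particular $[U,W] = 0$ alone already makes $U$ a broad-sense symmetry $U \Maux_+ U^{-1} = \pm \Maux_+$ of the Maxwell operator, and all that remains is to exclude $\eta_U = - 1$. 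For that I would use that $U$, being a $W$-isometry commuting with $W$, conjugates the $W$-selfadjoint operator $\Maux_+$ in a way that intertwines its (real) Borel functional calculus: $U \, 1_{(0,\infty)}(\Maux_+) \, U^{-1} = 1_{(0,\infty)}(\eta_U \Maux_+)$ and $U \, 1_{\{ 0 \}}(\Maux_+) \, U^{-1} = 1_{\{ 0 \}}(\Maux_+)$. Hence $U Q_+ U^{-1} = Q_+$ if $\eta_U = + 1$, whereas $U Q_+ U^{-1} = 1_{(-\infty,0)}(\Maux_+) + \tfrac{1}{2} \, 1_{\{ 0 \}}(\Maux_+)$ if $\eta_U = - 1$. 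Since $\Hil_+$ contains states of \emph{strictly} positive frequency — any nonzero real field has nonzero positive-frequency part — the case $\eta_U = - 1$ would carry such states into strictly negative frequency, contradicting $U \Hil_+ \subseteq \Hil_+$. Therefore $\eta_U = + 1$, i.e.\ $[U, \Maux_+] = 0$; restricting to the (both $U$- and $\Maux_+$-invariant) subspace $\Hil_+$ gives $[U, M_+] = 0$, and since $[U,W] = 0$ the identities $[U,\Rot] = 0$, $[U,\Maux_+] = 0$ and $[U,M_+] = 0$ are mutually equivalent. The $\omega \leq 0$ assertion is the same argument verbatim, with $\overline{W}$, $M_-$ and $Q_-$ in place of $W$, $M_+$ and $Q_+$.

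The arithmetic involved — the involution checks of the first step and the sign $\eta_U$ of the second — is routine; the genuine content is the spectral argument, where two points need care. First, $\Maux_+$ is selfadjoint only for $\scpro{\, \cdot \,}{\, \cdot \,}_W$, so the functional calculus and the intertwining relation $U f(\Maux_+) U^{-1} = f(\eta_U \Maux_+)$ must be formulated in that weighted setting; this is precisely where $[U,W] = 0$ from the first step is indispensable, as it turns $U$ into a genuine $W$-isometry and, if preferred, licenses passing to the unitarily equivalent standard-selfadjoint operator $W^{-1/2} \Rot W^{-1/2}$. Second, one must know that $\Maux_+$ really does have spectrum on both sides of $0$, so that $1_{(-\infty,0)}(\Maux_+) \neq 0$ and the contradiction is not vacuous; this holds because $\Rot$ has spectrum symmetric about $0$ — conjugate by $\sigma_1 \otimes \id$, which anticommutes with it — and the bounded-with-bounded-inverse congruence $\Rot \mapsto W^{-1/2} \Rot W^{-1/2}$ preserves the infinite-dimensionality of its positive and negative spectral subspaces, which is in any case implicit in the existence of the non-negative/negative frequency splitting that underlies the definition of $Q_+$.
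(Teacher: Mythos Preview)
Your proof is correct and follows essentially the same route as the paper: first use that each candidate $U$ is an ordinary (anti)unitary involution to reduce the $\scpro{\,\cdot\,}{\,\cdot\,}_W$-unitarity condition $U^{\ast_W} = U^{-1}$ to $[U,W] = 0$, then use the functional calculus to compute $U \, Q_+ \, U^{-1}$ and conclude that only the commuting sign $\eta_U = +1$ is compatible with $U$ preserving $\Hil_+$. Your additional remarks on why the weighted functional calculus is legitimate and why the contradiction in the anticommuting case is not vacuous are points the paper leaves implicit, so if anything your version is slightly more careful.
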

\begin{proof}
	For simplicity, let us only consider the case of non-negative frequencies and write (anti)unitary to mean either unitary or antiunitary. 
	
	Let us initially work on $L^2_W(\R^3,\C^6)$, the Hilbert space prior to restricting to $\omega \geq 0$. Because all of the operators $U$ are defined in terms of the Pauli matrices, they not only preserve the domain of $\Maux_+$, $U \, \mathcal{D}(\Maux_+) = \mathcal{D}(\Maux_+)$, they are also automatically (anti)unitaries with respect to the ordinary scalar product. Therefore, the ordinary adjoint 
	\begin{align*}
		U^* = U^{-1} \overset{!}{=} U^{\ast_W} = W^{-1} \, U^* \, W 
	\end{align*}
	necessarily agrees with the weighted adjoint $U^{\ast_W} = W^{-1} \, U^* \, W$ on $L^2_W(\R^3,\C^6)$, and this means that $U$ needs to commute with $W$ (condition~(1)). 
	
	Initially, these equalities hold on $L^2_W(\R^3,\C^6)$, and we have to check whether $U : \Hil_+ \longrightarrow \Hil_+$, restricted to the non-negative frequencies, is well-defined and still (anti)unitary. Mathematically, we have to impose $[U , Q_+] = 0$ or, alternatively, 
	\begin{align*}
		U \, Q_+ \, U^{-1} &= U \, \Bigl ( 1_{(0,\infty)}(\Maux_+) + \tfrac{1}{2} \, 1_{\{0\}}(\Maux_+) \Bigr ) \; U^{-1} 
		\\
		&= 1_{(0,\infty)} \bigl ( U \, \Maux_+ \, U^{-1} \bigr ) + \tfrac{1}{2} \, 1_{\{0\}} \bigl ( U \, \Maux_+ \, U^{-1} \bigr )
		\\
		&\overset{!}{=} Q_+ = 1_{(0,\infty)}(\Maux_+) + \tfrac{1}{2} \, 1_{\{0\}}(\Maux_+) 
		. 
	\end{align*}
	Thus, not only the domain of $\Maux_+$ but also that of $M_+$ is preserved, 
	\begin{align*}
		U \, \mathcal{D}(M_+) = U \, Q_+ \, \mathcal{D}(\Maux_+)
		= Q_+ \, U \, \mathcal{D}(\Maux_+) 
		= \mathcal{D}(M_+)
	\end{align*}
	and we conclude that anticommuting symmetries are forbidden, only commuting symmetries are admissible. Furthermore, due to the product structure of $\Maux_+ = W \, \Rot$ and $[U , W] = 0$, this is equivalent to saying $[U , \Rot] = 0$, and we have shown condition~(2). 
	
	Evidently, these arguments hold verbatim for $M_- = \overline{W} \, \Rot \, \big \vert_{\omega \leq 0}$ after replacing $W$ with $\overline{W}$ and $Q_+$ with $Q_-$. What is more, because complex conjugation relates $M_+$ and $M_- = - C \, M_+ \, C$ (including $C : \Hil_{\pm} \longrightarrow \Hil_{\mp}$), if $U$ is a symmetry of $M_+$, then $C \, U \, C$ is a symmetry of $M_-$. 
\end{proof}
%

\subsection{Conditions on $\eps$, $\mu$ and $\chi$} 
\label{symmetries:translation}
Now we check which of the symmetries~\eqref{symmetries:eqn:linear_symmetry} and \eqref{symmetries:eqn:antilinear_symmetry} is admissible and translate their presence to conditions on $\eps$, $\mu$ and $\chi$. With the help of Lemma~\ref{symmetries:lem:symmetry_condition} and the algebra of Pauli matrices, we can summarize the results as follows: 
\begin{proposition}[Symmetry conditions on the $w_j$]\label{symmetries:prop:admissible_symmetries}
	Suppose $W$ satisfies Assumption~\ref{Schroedinger:assumption:material_weights}. Of the 7 operators considered in \eqref{symmetries:eqn:linear_symmetry} and \eqref{symmetries:eqn:antilinear_symmetry} only three are admissible, and their presence ($[ T_j , W ] = 0$) translates to the following conditions on the material weights: 
	\begin{center}
		\renewcommand{\arraystretch}{1.25}
		\begin{tabular}{c | c | c | c | c | c}
			Symmetry & $w_0 = $ & $w_1 = $ & $w_2 = $ & $w_3 = $ & \emph{Symmetry Type} \\ \hline\hline
			$T_1 = (\sigma_1 \otimes \id) \, C$ & $\Re w_0$   & $\Re w_1$   & $\Re w_2$ & $\ii \, \Im w_3$ & +TR \\ \hline
			$U_2 = \sigma_2 \otimes \id$        & $w_0$       & $0$         & $w_2$     & $0$           & unitary, commuting \\ \hline
			$T_3 = (\sigma_3 \otimes \id) \, C$ & $\Re w_0$   & $\ii \, \Im w_1$   & $\Re w_2$ & $\Re w_3$ & +TR \\
		\end{tabular}
	\end{center}
	All of them are either proper symmetries or even time-reversal symmetries. 
\end{proposition}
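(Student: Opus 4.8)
The plan is to bootstrap off Lemma~\ref{symmetries:lem:symmetry_condition}, which already reduces the question of admissibility to two commutation conditions: an operator $U$ of the form \eqref{symmetries:eqn:linear_symmetry} or \eqref{symmetries:eqn:antilinear_symmetry} is an admissible symmetry of $M_+$ if and only if it commutes with the free Maxwell operator $\Rot = - \sigma_2 \otimes \nabla^{\times}$ \emph{and} with the material weights $W = \sum_{j=0}^{3} \sigma_j \otimes w_j$. The first condition depends only on the \emph{form} of $U$ and will knock out four of the seven candidates; the second depends on the \emph{medium} and, for the three survivors, produces exactly the table. Throughout I would use the Pauli identities $\sigma_a \sigma_b \sigma_a = \sigma_b$ when $a = b$ or $a = 0$, and $\sigma_a \sigma_b \sigma_a = - \sigma_b$ when $a \neq b$ are both nonzero, together with the facts that $\overline{\sigma_2} = - \sigma_2$, $\overline{\sigma_n} = \sigma_n$ for $n = 0,1,3$, and $C \, \nabla^{\times} \, C = \nabla^{\times}$ (the curl is a real operator).

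First I would screen the candidates against $\Rot$. A one-line computation gives $U_n \, \Rot \, U_n^{-1} = - (\sigma_n \sigma_2 \sigma_n) \otimes \nabla^{\times}$ and, using $C (\sigma_n \otimes \id) C = \overline{\sigma_n} \otimes \id$ and $C \, \Rot \, C = - \Rot$, also $T_n \, \Rot \, T_n^{-1} = (\sigma_n \sigma_2 \sigma_n) \otimes \nabla^{\times}$. Evaluating $\sigma_n \sigma_2 \sigma_n$ shows that $U_2$, $T_1$ and $T_3$ commute with $\Rot$, whereas $U_1$, $U_3$, $T_0$ and $T_2$ \emph{anti}commute with it; by Lemma~\ref{symmetries:lem:symmetry_condition} the latter four are inadmissible, since an anticommuting symmetry would map non-negative frequency states to non-positive ones.

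Next I would impose $[U, W] = 0$ on each of the three survivors. Writing $\overline{W} = \sigma_0 \otimes \overline{w_0} + \sigma_1 \otimes \overline{w_1} - \sigma_2 \otimes \overline{w_2} + \sigma_3 \otimes \overline{w_3}$ and conjugating by the relevant $\sigma_n$, one obtains
\begin{align*}
	U_2 \, W \, U_2^{-1} &= \sigma_0 \otimes w_0 - \sigma_1 \otimes w_1 + \sigma_2 \otimes w_2 - \sigma_3 \otimes w_3 , \\
	T_1 \, W \, T_1^{-1} &= \sigma_0 \otimes \overline{w_0} + \sigma_1 \otimes \overline{w_1} + \sigma_2 \otimes \overline{w_2} - \sigma_3 \otimes \overline{w_3} , \\
	T_3 \, W \, T_3^{-1} &= \sigma_0 \otimes \overline{w_0} - \sigma_1 \otimes \overline{w_1} + \sigma_2 \otimes \overline{w_2} + \sigma_3 \otimes \overline{w_3} .
\end{align*}
Comparing coefficients of the linearly independent $\sigma_j$ with those of $W$ forces: for $U_2$, $w_1 = 0$ and $w_3 = 0$ with $w_0, w_2$ unconstrained; for $T_1$, $w_0 = \overline{w_0}$, $w_1 = \overline{w_1}$, $w_2 = \overline{w_2}$ and $w_3 = - \overline{w_3}$; and for $T_3$, $w_0 = \overline{w_0}$, $w_2 = \overline{w_2}$, $w_3 = \overline{w_3}$ and $w_1 = - \overline{w_1}$. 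Since the $w_j$ are hermitian, the condition $w_j = \overline{w_j}$ says $w_j$ is real symmetric, i.e. $w_j = \Re w_j$, while $w_j = - \overline{w_j}$ says $w_j$ is purely imaginary hermitian, i.e. $w_j = \ii \, \Im w_j$; this is precisely the content of the three rows of the table.

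Finally I would identify the symmetry types by squaring: $U_2^2 = \sigma_2^2 \otimes \id = \id$, so $U_2$ is a proper (commuting) unitary symmetry, and $T_1^2 = (\sigma_1 \otimes \id)(\overline{\sigma_1} \otimes \id) = \id$, likewise $T_3^2 = \id$, so $T_1$ and $T_3$ are \emph{even} time-reversal-type symmetries, as claimed. There is no genuine obstacle here — the whole argument is a finite check — and the only place that demands care is the sign bookkeeping in the third paragraph: namely that complex conjugation flips the sign of $\sigma_2$ but not of $\nabla^{\times}$, and that the hermiticity-plus-(anti)reality conditions on the $w_j$ must be translated correctly into the $\Re/\ii\,\Im$ notation of the table.
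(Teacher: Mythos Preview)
Your proof is correct and follows exactly the same approach as the paper's own proof: invoke Lemma~\ref{symmetries:lem:symmetry_condition}, compute the sign in $U \, \Rot \, U^{-1} = \pm \Rot$ via the Pauli algebra to single out $T_1$, $U_2$, $T_3$, and then read off the constraints on the $w_j$ from $U \, W \, U^{-1} = W$. The paper merely sketches these steps in a few lines, whereas you have written out the explicit conjugations and sign bookkeeping; nothing is missing or wrong.
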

Note that since $T_1 \, T_3 = \ii U_2$ the presence of any two symmetries implies the presence of the third. 
\begin{proof}
	We have to check conditions enumerated in Lemma~\ref{symmetries:lem:symmetry_condition}: Computing the signs of $U \, \Rot \, U^{-1} = \pm \Rot$ is straightforward. Imposing $U \, W \, U^{-1} = W$ then singles out those three above, because only they commute with $\Rot$ and hence, lead to $U \, M_+ \, U^{-1} = + M_+$. The conditions on the $w_j$ listed in the table can then be obtained by comparing $U \, W \, U^{-1}$ and $W$. 
\end{proof}
%

\subsection{Four topologically distinct media} 
\label{symmetries:experimental_realizations}
At least three experimentally realized materials fall within our classification scheme; though the first two can be understood within the ordinary CAZ classification scheme, the last belongs to none of the 10 classes, and could therefore exhibit novel topological effects. 
\begin{theorem}[Symmetry classification of media]\label{symmetries:thm:classification_media}
	Suppose the material weights satisfy Assumption~\ref{Schroedinger:assumption:material_weights}. Then there are four topologically distinct electromagnetic media: 
	\begin{center}
		\renewcommand{\arraystretch}{1.25}%
		\newcolumntype{A}{>{\centering\arraybackslash\normalsize} m{30mm} }
		\begin{tabular}{A | c | c | c}
			\emph{Material} & \emph{Realizations} & \emph{Symmetries} & \emph{CAZ Class} \\ \hline \hline 
			Dual symmetric, non-gyrotropic & vacuum, \cite{Fernandez-Corbaton_et_al:helicity_angular_momentum_dual_symmetry:2012,Fernandez-Corbaton_et_al:electromagnetic_duality_symmetry:2013} & $T_1$, $T_3$, $U_2$ & N/A \\ \hline 
			Non-dual symmetric, non-gyrotropic & \cite{Bliokh_Bliokh:Berry_curvature_optical_Magnus_effect:2004,Onoda_Murakami_Nagaosa:Hall_effect_light:2004,Ochiai_Onoda:edge_states_photonic_graphene:2009} & $T_3$ & AI \\ \hline
			Magneto-electric & \cite{Tellegen:gyrator:1948,Lin_et_al:realization_magneto_electric_medium_static_fields:2008} & $T_1$ & AI \\ \hline
			Gyrotropic & \cite{Wang_et_al:edge_modes_photonic_crystal:2008,Lin_et_al:topological_photonic_states:2014} & None & A \\ 
		\end{tabular}
	\end{center}
	The specific conditions which arise from imposing $[T_j , W] = 0$ can be read off of the table in Proposition~\ref{symmetries:prop:admissible_symmetries}. 
\end{theorem}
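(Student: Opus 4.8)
The plan is to reduce the statement to the symmetry bookkeeping already done in Proposition~\ref{symmetries:prop:admissible_symmetries}, plus the standard ten-fold-way assignment. By that proposition, under Assumption~\ref{Schroedinger:assumption:material_weights} the only operators among~\eqref{symmetries:eqn:linear_symmetry}--\eqref{symmetries:eqn:antilinear_symmetry} that can ever be symmetries of $M_+$ are the two anti-linear maps $T_1$ and $T_3$ and the commuting unitary $U_2 = \sigma_2 \otimes \id$, and these are tied together by the algebraic identity $T_1 \, T_3 = \ii \, U_2$. The first step is to record the two facts that fix the CAZ labels: a one-line computation with $\overline{\sigma_1} = \sigma_1$, $\overline{\sigma_3} = \sigma_3$ and $\sigma_j^2 = \id$ gives $T_1^2 = T_3^2 = + \id$, so each of $T_1$, $T_3$ is an \emph{even} time-reversal-type symmetry; and, by Proposition~\ref{symmetries:prop:admissible_symmetries} together with Lemma~\ref{symmetries:lem:symmetry_condition}, every admissible symmetry \emph{commutes} with $M_+$, so no anti-commuting symmetry of the forms~\eqref{symmetries:eqn:linear_symmetry}--\eqref{symmetries:eqn:antilinear_symmetry} --- that is, no chiral and no particle-hole symmetry --- can be present.

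The second step is the enumeration. Since $T_1 \, T_3 = \ii \, U_2$ and $U_2^2 = \id$, one also has $T_3 = - \ii \, T_1 \, U_2$ and $T_1 = \ii \, U_2 \, T_3$, so whether $U_2$ commutes with $W$ is completely slaved to whether $T_1$ and $T_3$ do: the possible configurations are (i) neither $T_1$ nor $T_3$ commutes with $W$; (ii) exactly one of $T_1$, $T_3$ does; (iii) both do, in which case $U_2$ commutes with $W$ as well; the only remaining logical possibility, namely $U_2$ commuting with $W$ while neither $T_1$ nor $T_3$ does, splits $M_+$ along the two eigenspaces of $U_2$ into block operators that carry no residual anti-linear or chiral symmetry and hence still fall under case~(i). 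This produces exactly the four rows of the table, and the CAZ column follows at once: absence of any anti-unitary symmetry is class~A (gyrotropic media, and the reduced $U_2$-only branch); a single even time-reversal symmetry with no particle-hole and no chiral symmetry is class~AI --- and \emph{not} AII, precisely because $T_j^2 = + \id$ rather than $- \id$, which is the point that rules out any ``symplectic'' (odd-time-reversal) electromagnetic medium and with it any classical counterpart of the Quantum Spin Hall class; and the simultaneous presence of \emph{two} even (and, one checks, mutually anti-commuting) time-reversal symmetries lies outside the ten CAZ classes, whence the entry ``N/A'', this non-standard class being analysed separately in Section~\ref{classification:Bloch_bundle:dual_symmetric}.

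The third step is to translate each configuration into conditions on the $w_j$ and to name realizations. The $w_j$-conditions are read off by intersecting the per-symmetry conditions tabulated in Proposition~\ref{symmetries:prop:admissible_symmetries}: e.g.\ imposing $[T_1 , W] = 0 = [T_3 , W]$ forces both $w_1 = \Re w_1$ and $w_1 = \ii \, \Im w_1$, hence $w_1 = 0$, and likewise $w_3 = 0$, while $w_0$ and $w_2$ remain hermitian --- these are exactly the dual-symmetric constraints; the three other rows come out the same way, and the rows are mutually exclusive because the conditions defining the more symmetric media are strictly stronger. For the realizations I would exhibit explicit representatives and check each against the corresponding line of the table: vacuum, i.e.\ $\eps = \mu$ and $\chi = 0$, together with the engineered dual media of~\cite{Fernandez-Corbaton_et_al:helicity_angular_momentum_dual_symmetry:2012,Fernandez-Corbaton_et_al:electromagnetic_duality_symmetry:2013} for the first row, and for the remaining three rows the media constructed or measured in~\cite{Bliokh_Bliokh:Berry_curvature_optical_Magnus_effect:2004,Onoda_Murakami_Nagaosa:Hall_effect_light:2004,Ochiai_Onoda:edge_states_photonic_graphene:2009}, in~\cite{Tellegen:gyrator:1948,Lin_et_al:realization_magneto_electric_medium_static_fields:2008} and in~\cite{Wang_et_al:edge_modes_photonic_crystal:2008,Lin_et_al:topological_photonic_states:2014}, respectively.

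The heavy lifting has been front-loaded into Proposition~\ref{symmetries:prop:admissible_symmetries}, so no individual step is deep; the part that needs care is making the enumeration honestly \emph{exhaustive and non-redundant}. Concretely, I expect the main obstacle to be the clean treatment of the $U_2$-only branch --- verifying that, after reducing out the commuting unitary $U_2$, the resulting block operators carry no inherited symmetry (in particular that complex conjugation, which relates $M_+$ to $M_-$ but is never a symmetry of $M_+$ itself, descends to nothing usable on the blocks), so that this branch collapses into class~A rather than appearing as a spurious fifth type --- together with keeping scrupulous track of the signs in $T_j^2 = \pm \id$, on which the AI-versus-AII dichotomy, and hence the entire ``no Quantum Spin Hall analog'' conclusion, turns.
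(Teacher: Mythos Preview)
Your proposal is correct and follows the same approach as the paper, which in fact gives no explicit proof of this theorem: the result is presented as an immediate enumeration of the symmetry configurations singled out by Proposition~\ref{symmetries:prop:admissible_symmetries}, together with the relation $T_1 \, T_3 = \ii \, U_2$. Your treatment is more detailed than the paper's in one respect: you explicitly address the ``$U_2$-only'' branch (where $U_2$ commutes with $W$ but neither $T_j$ does), which the paper handles by imposing Assumption~\ref{intro:assumption:no_additional_symmetries} in the introduction's version of the theorem rather than analysing it directly; your reduction of that branch to class~A after block decomposition is correct and is in fact the argument the paper sketches informally in the paragraph following Assumption~\ref{intro:assumption:no_additional_symmetries}. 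One small slip: in the dual-symmetric row your phrase ``$w_0$ and $w_2$ remain hermitian'' undersells the constraint --- both $T_1$ and $T_3$ force $w_0 = \Re w_0$ and $w_2 = \Re w_2$, so these blocks are real symmetric, matching the conditions listed in Theorem~\ref{intro:thm:classification_media}.
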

One of our main motivations is to give a first-principles explanation of the Quantum Hall Effect for Light \cite{Raghu_Haldane:quantum_Hall_effect_photonic_crystals:2008,Wang_et_al:unidirectional_backscattering_photonic_crystal:2009}. The above result sheds some light on its inner workings: while making $W$ complex is the right thing to do, the symmetry to be broken is the even time-reversal symmetry $T_3 = (\sigma_3 \otimes \id) \, C$ rather than complex conjugation $C$ as claimed in \cite{Raghu_Haldane:quantum_Hall_effect_photonic_crystals:2008}. In fact, $C$ is \emph{not even an admissible symmetry} in the Schrödinger formalism, since it swaps positive and negative frequency states. Even in the non-gyrotropic case where $W = \overline{W}$, complex conjugation acts as an even \emph{particle-hole}-type symmetry of the auxiliary Maxwell operator $\Maux_+ = - C \, \Maux_+ \, C$, and therefore no matter the framework the nature of $C$ can never be of time-reversal type. Evidently, breaking time-reversal $T_3$ is a \emph{necessary} condition for the existence of \emph{unidirectional} edge modes. For otherwise modes come in counterpropagating pairs related by time-reversal. Before applying these results to photonic crystals, though, we will briefly consider the relationship of symmetries and sources. 

\subsection{Symmetries imposed on sources} 
\label{symmetries:sources}
Apart from the reality of the physical fields, a second difference to quantum mechanics is the potential presence of sources in the Schrödinger equation~\eqref{Schroedinger:eqn:Maxwell_Schroedinger_equation}, and we may ask what role symmetries play here. From our discussion above, we only need to consider ordinary symmetries and those of time-reversal type. 

The idea here is to generalize the condition 
\begin{align*}
	U \, \e^{- \ii t M_+} = \e^{- \ii (\pm t) M_+} \, U
	, 
\end{align*}
with $+$ chosen for linear, commuting symmetries and $-$ for time-reversal symmetries, to solutions of the equation with sources, 
\begin{align*}
	(\mathcal{U} \Phi)(t) &= \e^{- \ii t M_+} \Phi - \ii \int_0^t \dd s \, \e^{- \ii (t-s) M_+} \, W^{-1} \, J_+(s) 
	. 
\end{align*}
Imposing $U (\mathcal{U} \Phi)(t) = \bigl ( \mathcal{U}(U \Phi) \bigr )(\pm t)$ yields $U \, J_+(t) = J_+(\pm t)$, \ie the signs need to match. 
\section{Bulk classification of topological photonic crystals} 
\label{classification}
Generally, we need \emph{two} ingredients to create topological effects: (1)~We need to break or preserve the right \emph{symmetries}. And (2), we need a \emph{spectral gap}. In electromagnetic media, gaps can either be created by periodic patterning \cite{Yablonovitch:photonic_band_gap:1993} or by using dispersion \cite{Silveirinha:Z2_topological_index_continuous_photonic_systems:2016}. While Silveirinha's recent works \cite{Silveirinha:Z2_topological_index_continuous_photonic_systems:2016,Gangaraj_Silveirinha_Hanson:Berry_connection_Maxwell:2017} make first steps to classify homogeneous, dispersive media, we focus on photonic crystals, \ie electromagnetic media with periodic structure. 

Phase relations are at the root of all topological effects, and one way to encode them is to construct vector bundles. If necessary, these vector bundles are endowed with symmetries that are inherited from the Maxwell operator. This follows the exact same playbook as in the quantum case, pioneered by \cite{Thouless_Kohmoto_Nightingale_Den_Nijs:quantized_hall_conductance:1982}. Just like in the theory of crystalline solids, the Bloch bundle is obtained from a collection of Bloch waves which arise naturally in the context of periodic systems.

\subsection{The frequency band picture} 
\label{classification:Bloch_Floquet}
Let us start with a time-independent three-dimensional medium where $d = 3$, the dimension where Maxwell's are naturally defined. The periodicity of the weights $W$ and of the Maxwell operator $M_+$ with respect to the lattice $\Gamma \cong \Z^3$ can be exploited via the Bloch-Floquet-Zak transform 
\begin{align}
	(\Fourier \Psi)(k,x) = \sum_{\gamma \in \Gamma} \e^{- \ii k \cdot (x + \gamma)} \, \Psi(x + \gamma)
	\label{classification:eqn:Bloch_Floquet_transform}
\end{align}
which maps onto the space-periodic part of the Bloch functions. This is a standard tool in the theory of periodic operators and has been applied to great effect to various types of equations \cite{Grosso_Parravicini:solid_state_physics:2003,Kuchment:Floquet_theory:1993,Kuchment:math_photonic_crystals:2001,DeNittis_Lein:adiabatic_periodic_Maxwell_PsiDO:2013}. $\Fourier$ defines a unitary map between the Hilbert spaces before restricting to non-negative frequencies, $L^2_W(\R^3,\C^6)$ and $L^2(\BZ) \otimes L^2_W(\T^3,\C^6)$; the first factor $L^2(\BZ)$ is the usual, unweighted $L^2$-space over the Brillouin torus $\BZ$ and $L^2_W(\T^3,\C^6)$ is the Hilbert space over the Wigner-Seitz cell, also seen as a torus, and endowed with a weighted scalar product akin to \eqref{Schroedinger:eqn:weighted_scalar_product}, 
\begin{align}
	\bscpro{\phi(k)}{\psi(k)}_W &= \bscpro{\phi(k)}{W \, \psi(k)} 
	= \int_{\T^3} \dd x \, \phi(k,x) \cdot W(x) \psi(k,x) 
	. 
	\label{classification:eqn:weighted_scalar_product_torus}
\end{align}
Both, the auxiliary and the frequency constrained Maxwell operators are periodic and therefore admit a fiber decomposition in Bloch momentum $k \in \BZ$. Starting with the \emph{auxiliary} Maxwell operator, we see that 
\begin{align*}
	\Fourier \, \Maux_+ \, \Fourier^{\, -1} = \int_{\BZ}^{\oplus} \dd k \, \Maux_+(k) 
\end{align*}
consists of a \emph{collection of operators}
\begin{align*}
	\Maux_+(k) = W \, \Rot(k) 
	= \left (
	\begin{matrix}
		\eps & \chi \\
		\chi^* & \mu \\
	\end{matrix}
	\right ) \, \left (
	\begin{matrix}
		0 & - (- \ii \nabla + k)^{\times} \\
		+ (- \ii \nabla + k)^{\times} & 0 \\
	\end{matrix}
	\right )
\end{align*}
acting on $L^2_W(\T^3,\C^6)$, the Hilbert space associated to electromagnetic fields defined on the unit cell in real space. Here, the operator $v^{\times} \mathbf{E} = v \times \mathbf{E}$ denotes the matrix form associated to the crossed product with any vectorial quantity $v = (v_1 , v_2 , v_3)$ from the left. Following the procedure of the quantum case, we arrive at the \emph{frequency band picture} by looking at solutions to 
\begin{align*}
	\Maux_+(k) \varphi_n(k) = \omega_n(k) \, \varphi_n(k)
\end{align*}
where $\varphi_n(k) = \bigl ( \varphi_n^E(k) , \varphi_n^H(k) \bigr )$ is a (necessarily complex) Bloch function and $\omega_n(k)$ an eigenvalue. 

Properties of $\Maux_+(k)$ have been investigated extensively in the past (\eg in \cite{Kuchment:math_photonic_crystals:2001,DeNittis_Lein:adiabatic_periodic_Maxwell_PsiDO:2013}), and there are a few features of note that set it apart from the condensed matter case: first of all, the longitudinal gradient fields contribute an infinitely degenerate flat band $\omega_0(k) = 0$, and these bands only play a role if sources are present. In the absence of charge densities, electromagnetic waves that satisfy~\eqref{Schroedinger:eqn:Maxwell_equations:constraint} are necessarily transversal. Apart from the infinitely degenerate eigenvalue $0$, all other (positive \emph{and} negative!) eigenvalues $\omega_n(k)$ that make up the frequency bands have finite degeneracy, and their Bloch functions span the space $L^2_W(\T^3,\C^6)$. The analyticity of $\Maux_+(k)$ in $k$ (the operator is linear in $k$ and its domain is independent of $k$ \cite[p.~68]{DeNittis_Lein:adiabatic_periodic_Maxwell_PsiDO:2013}) means that these eigenvalues form (frequency) bands. By convention, the flat band $\omega_0(k) = 0$ due to the gradient fields has band index $0$, frequency bands for which $\omega_n(k) > 0$ when $k \neq 0$ are enumerated with positive integers $n > 0$ while negative indices $n < 0$ are reserved for the unphysical Bloch functions with $\omega_n(k) < 0$ for $k \neq 0$. 

Secondly, there are always \emph{ground state bands}, \ie two positive and two negative frequency bands (including degeneracy) with approximately linear dispersion at $k \approx 0$ and $\omega \approx 0$. (That is why we had to exclude $k = 0$ when labeling frequency bands.) The ground state Bloch functions are necessarily discontinuous at $k = 0$, since the transversality condition degenerates there and consequently, the dimensionality if the eigenspace changes by $2$ \cite[Section~3.2–3.3]{DeNittis_Lein:adiabatic_periodic_Maxwell_PsiDO:2013}. Apart from the ground state bands, though, both Bloch functions (after a judicious choice of phase) and frequency band functions are locally analytic away from band crossings; in these respects, they mimic the Schrödinger case. 

The physical Maxwell operator is the restriction of $\Maux_+$ to complex waves with $\omega \geq 0$, \ie we discard unphysical waves of frequency $\omega < 0$; these waves are unphysical, because the physically relevant negative frequency waves are subject to the complex conjugate weights $\overline{W}$ \cite[Section~2.3]{DeNittis_Lein:Schroedinger_formalism_classical_waves:2017}. For periodic systems, we can make this restriction more explicit: The Bloch waves associated to the physically relevant Bloch bands $\omega_n(k) \geq 0$ span the Hilbert space of physically relevant waves 
\begin{align*}
	\Hil_+(k) &= \Jphys(k) \oplus \mathcal{G}(k) 
	\subset L^2_W(\T^3,\C^6)
\end{align*}
where we distinguish between the (positive frequency) transversal fields 
\begin{align*}
	\Jphys(k) &= \mathrm{span} \Bigl \{ \varphi \in L^2_W(\T^3,\C^6) \; \; \big \vert \; \; \Maux_+(k) \varphi = \omega_n(k) \, \varphi , \; \; n > 0 \Bigr \} 
\end{align*}
and the longitudinal gradient fields 
\begin{align*}
	\mathcal{G}(k) = \Bigl \{ \varphi \in L^2_W(\T^3,\C^6) \; \; \big \vert \; \; \Maux_+(k) \varphi = 0 \Bigr \} 
	. 
\end{align*}
The symbol $\oplus$ for the orthogonal sum means that we can uniquely write any $\Psi = \Psi_{\perp} + \Psi_{\parallel} \in \Hil_+(k)$ as the sum of a transversal field $\Psi_{\perp}(k) \in \Jphys(k)$ and a longitudinal field $\Psi_{\parallel}(k) \in \mathcal{G}(k)$ that are orthogonal to each other with respect to the weighted scalar product~\eqref{classification:eqn:weighted_scalar_product_torus}. The restriction of the \emph{auxiliary} operator $\Maux_+(k)$ to the physically relevant subspace of fields with non-negative frequencies yields the $k$-dependent \emph{Maxwell operator}
\begin{align*}
	M_+(k) = \Maux_+(k) \big \vert_{\Hil_+(k)}
	, 
\end{align*}
whose Bloch functions are those of $\Maux_+(k)$ for non-negative frequency bands. 

Symbolically we write that the Bloch-Floquet-Zak transform is a unitary between the Hilbert spaces 
\begin{align*}
	\Fourier : \Hil_+ \longrightarrow \int_{\BZ}^{\oplus} \dd k \, \Hil_+(k) 
\end{align*}
that fiber-decomposes the Maxwell operator 
\begin{align*}
	\Fourier \, M_+ \, \Fourier^{-1} = \int_{\BZ}^{\oplus} \dd k \, M_+(k) 
	. 
\end{align*}
All of these arguments can be straightforwardly adapted to include time-dependent media or lower-dimensional photonic crystals (\cf Section~\ref{Schroedinger:other_dimensions}). 

\subsection{The Bloch bundle and its topological classification} 
\label{classification:Bloch_bundle}
A vector bundle is a collection of vector spaces, indexed by a (base space) variable, that is glued together in a continuous or analytic fashion; for more information, we refer the interested reader to \cite[Section~3.3]{DeNittis_Lein:piezo_graphene:2013}, \cite[Section~IV.A]{DeNittis_Lein:exponentially_loc_Wannier:2011} and references therein. The Bloch bundle is a vector bundle constructed from a family of (energy or frequency) bands. When physicists use expressions such as “band topology”, what they actually mean is the following: they pick a family of isolated bands, which are relevant to the discussion.%
\begin{assumption}[Gap Condition]\label{classification:assumption:gap_condition}
	Suppose $\specrel(k) = \bigl \{ \omega_n(k) \bigr \}_{n \in \mathcal{I}}$ is a \emph{finite} family of relevant bands, indexed by a set of positive integers $\mathcal{I}$, that does not cross or merge with other bands. Put another way, they are separated by local spectral gaps from the other bands,
	\begin{align}
		\inf_{k \in \BZ} \mathrm{dist} \Bigl ( \specrel(k) \, , \, \sigma \bigl ( M_+(k) \bigr ) \setminus \specrel(k) \Bigr ) > 0 
		. 
		\label{classification:eqn:gap_condition}
	\end{align}
\end{assumption}
Once the relevant bands have been selected, the collection of vector spaces 
\begin{align*}
	\Hil_{\mathrm{rel}}(k) = \mathrm{span} \, \bigl \{ \varphi_n(k) \bigr \}_{n \in \mathcal{I}}
\end{align*}
which make up the Bloch bundle are the eigenspaces spanned by the relevant Bloch functions and indexed by Bloch momentum $k$. “Band topology” refers to how $\Hil_{\mathrm{rel}}(k)$ “twists and turns” as $k$ is varied; note that at this level the actual shape of the relevant frequency band functions $\omega_n(k)$ is irrelevant. 

Physically, the significance of the Gap Condition is that states supported in such an isolated family of bands decouple from the others, because band transitions outside of $\specrel(k)$ are typically exponentially suppressed. Note that the assumption that $\specrel(k)$ consists of \emph{finitely} many bands \emph{excludes ground state bands}, since they merge into the infinitely degenerate gradient field band $\omega_0(k) = 0$. This is not a mere technical obstacle either: since ground state Bloch waves for $k \approx 0$ have very long wavelengths, they no longer see the periodic structure but just \emph{homogeneous} material weights that are averaged over the unit cell. And homogeneous media require a different classification approach than periodic media. Mathematically, this manifests itself in the fact that the ground state Bloch functions are necessarily discontinuous, hence non-analytic, at $k = 0$, and this discontinuity prevents us from defining a vector bundle over the whole Brillouin torus $\BZ$. 

Symmetries of the Maxwell operator are inherited by the bundle: Should $M_+(k)$ possess a time-reveral symmetry $T$, for example, \ie $T$ is an antiunitary operator with $T \, M_+(k) = M_+(-k) \, T$, then $T$ relates the fibers $\Hil_{\mathrm{rel}}(k)$ and $\Hil_{\mathrm{rel}}(-k)$. Mathematically, symmetries of $M_+(k)$ give rise to additional structures on the vector bundle \cite{DeNittis_Gomi:AI_bundles:2014,DeNittis_Gomi:AII_bundles:2014,DeNittis_Gomi:AIII_bundles:2015}; this will be explained in Section~\ref{classification:Bloch_bundle:AI} below.

\subsubsection{The mathematical definition of the Bloch bundle} 
\label{classification:Bloch_bundle:definition}
One convenient way to think of $\Hil_{\mathrm{rel}}(k) = \ran P_{\mathrm{rel}}(k)$ is as the range of the projection 
\begin{align*}
	P_{\mathrm{rel}}(k) = \sum_{n \in \mathcal{I}} \sopro{\varphi_n(k)}{\varphi_n(k)} 
	, 
\end{align*}
although it is the \emph{weighted} scalar product that is implicit in the bra-ket notation, 
\begin{align}
	\sopro{\varphi_n(k)}{\varphi_n(k)} \psi(k) = \bscpro{\varphi_n(k)}{\psi(k)}_W \, \varphi_n(k)
	. 
	\label{classification:eqn:braket_weighted_scalar_product}
\end{align}
The Gap Condition ensures that $k \mapsto P_{\mathrm{rel}}(k)$ is not just locally analytic, but analytic over the whole Brillouin torus; put another way, locally around any point $k_0$ there is family of $k$-dependent unitaries $U(k,k_0)$ so that they are analytic and relate the projections at $k$ and $k_0$, $P_{\mathrm{rel}}(k) = U(k,k_0) \, P_{\mathrm{rel}}(k_0) \, U(k,k_0)^*$. Consequently, the dependence of $\Hil_{\mathrm{rel}}(k)$ on Bloch momentum $k$ is also analytic, and the Bloch bundle is the triple 
\begin{align}
	\mathcal{E}_{\BZ}(P_{\mathrm{rel}}) : \bigsqcup_{k \in \BZ} \Hil_{\mathrm{rel}}(k) \overset{\pi}{\longrightarrow} \BZ
	, 
\end{align}
consisting of the total space, the disjoint union of all the $\Hil_{\mathrm{rel}}(k)$s, the Brillouin torus $\BZ$ as base space, and the projection $\pi \bigl ( \Psi(k) \bigr ) = k$ onto the base point. Necessarily, the dimension $m = \dim \Hil_{\mathrm{rel}}(k)$ of the fiber vector space, the \emph{rank} of the vector bundle, has to be independent of $k$. In some contexts we need to distinguish between continuous and analytic vector bundles, although here, thanks to the so-called Oka principle, this is not necessary for vector bundles over the torus (the reader may find a detailed argument in \cite[Section~II.F]{DeNittis_Lein:exponentially_loc_Wannier:2011}). 

In the simplest case, the vector bundle is a \emph{trivial} complex vector bundle, meaning it is isomorphic to the \emph{product bundle} $\mathcal{E}_{\BZ}(P_{\mathrm{rel}}) \cong \BZ \times \C^m$ of base space and fiber. However, in general vector bundles can only be trivialized locally, \ie only in a sufficiently small neighborhood $\mathcal{U}$ of a point $k_0$ do we have $\pi^{-1}(\mathcal{U}) \cong \mathcal{U} \times \C^m$. In fact, a second and equivalent way to assemble the vector bundle from a trivializing open covering $\{ \mathcal{U}_j \}$ is to glue together $\pi^{-1}(\mathcal{U}_j) \cong \mathcal{U}_j \times \C^m$ using \emph{transition functions}, which then contain all the information on the “twists”. 

The Bloch vector bundle is suited to describe continuous deformations of the system: as $M_+(k)$ is deformed continuously, then also the frequency bands and the relevant subspaces $\Hil_{\mathrm{rel}}(k)$ change continuously as well — provided that the spectral gap does not close. Should additional symmetries be present, then these must not be broken during the deformation. Thus, continuous deformations of physical systems translate to continuous deformations of vector bundles. 

\subsubsection{Classification of complex vector bundles (CAZ class A)} 
\label{classification:Bloch_bundle:A}
To give rigorous meaning to the notion of “vector bundle up to continuous deformations”, we need to say when two vector bundles are considered equivalent and then \emph{classify equivalence classes of vector bundles}. This is quite standard for complex vector bundles and explicit criteria are known when the base space is low-dimensional ($d \leq 4$ suffices to cover time-dependent systems) and has such a simple structure as $\BZ \cong \T^d$ \cite{Nenciu:exponential_loc_Wannier:1983,Panati:triviality_Bloch_bundle:2006,DeNittis_Lein:exponentially_loc_Wannier:2011}. All of these are immediately relevant for our discussion of periodic light conductors, starting from layered media ($d = 1$) \cite{Choi_et_al:Zak_phase_quarter_wave_plates:2016}, two-dimensional \cite{van_Driel_et_al:tunable_2d_photonic_crystals:2004,Wang_et_al:edge_modes_photonic_crystal:2008,Wu_Hu:topological_photonic_crystal_with_time_reversal_symmetry:2015,Khanikaev_et_al:photonic_topological_insulators:2013} and three-dimensional \cite{Johnson_Joannopoulos:3d_photonic_crystal_band_gap:2000,Egen_et_al:PhCs_opals:2004,Joannopoulos_Johnson_Winn_Meade:photonic_crystals:2008,Kuramochi:fabrication_PhCs:2011} photonic crystals and the as-of-yet unrealized case of a three-dimensional photonic crystal which is deformed periodically in time.

\paragraph{Mathematical definition of equivalence} 
\label{par:mathematical_definition_of_equivalence}
Now let us explain when vector bundles are mathe\-matically equivalent: So let $\mathcal{E}_j = \bigl ( \xi_j , X , \pi_j \bigr )$, $j = 1 , 2$, be two vector bundles over the same base space. An $X$-map $f : \xi_1 \longrightarrow \xi_2$ is a continuous function between the total spaces so that the fiberwise restriction $f_x = f \vert_{\pi_1^{-1}(\{ x \})}$ defines a linear homomorphism between the vector spaces $\pi_1^{-1}(\{ x \})$ and $\pi_2^{-1}(\{ x \})$ attached to the same base point $x$. Put another way, $f$ preserves fibers and is compatible with the linear structure in each of the fibers. The set of such maps is denoted by $\mathrm{Hom}(\mathcal{E}_1,\mathcal{E}_2)$. If in addition $f$ restricts fiberwise to vector space \emph{iso}morphisms for all $x \in X$, then $f$ is in fact a homeomorphism between $\xi_1$ and $\xi_2$, and therefore defines an $X$-isomorphism between the bundles $\mathcal{E}_1$ and $\mathcal{E}_2$ \cite[Lemma~1.1]{Hatcher:vector_bundles_K_theory:2009}. This defines an equivalence relation $\mathcal{E}_1 \simeq \mathcal{E}_2$, and because isomorphic vector bundles have the same rank, we write $\mathrm{Vec}_{\C}^m(X)$ for the set of equivalence classes of isomorphic rank $m$ hermitian vector bundles. \emph{Classification theory of complex vector bundles concerns itself with the description of $\mathrm{Vec}^m_{\C}(X)$ for different $m$ and $X$.} One particularly important element is that associated to the \emph{trivial} vector bundle $\epsilon^m = \bigl ( X \times \C^m , X , \mathrm{proj}_1 \bigr )$ of rank $m$ where the total space is just the cartesian product of base space and fiber, and the projection $\mathrm{proj}_1(x,\psi) = x$ onto the first element. 

Mimicking the construction above, we could define equivalence of bundles in terms of \emph{analytic} $X$-isomorphisms, something that enters when establishing the existence of \emph{exponentially} localized Wannier functions \cite{Panati:triviality_Bloch_bundle:2006,Kuchment:exponential_decaying_wannier:2009,DeNittis_Lein:exponentially_loc_Wannier:2011}. Fortunately, though, in the present case we need not distinguish between continuous and analytic equivalence of vector bundles, because the so-called \emph{Oka principle} \cite{Grauert:analytische_Faserungen:1958} holds for $\BZ \cong \T^d$ (see \cite[Section~II.F]{DeNittis_Lein:exponentially_loc_Wannier:2011} for the detailed mathematical argument). 

\paragraph{Abstract classification results} 
\label{par:abstract_classification_results}
Now that we have defined $\mathrm{Vec}_{\C}^m(\T^d)$ as the set of vector bundles up to equivalence, two natural questions arise: first of all, \emph{how many} different equivalence classes are there? And secondly, given a concrete realization, can we \emph{compute} what equivalence class it belongs to? We postpone the second question and focus on the first. This classification problem is quite standard, and there are many different mathematical tools (\eg K-theory \cite{Husemoller:fiber_bundles:1966,Karoubi:K_theory:2008} or the vector bundle theoretic methods \cite{DeNittis_Gomi:AI_bundles:2014,DeNittis_Gomi:AII_bundles:2014,DeNittis_Gomi:AIII_bundles:2015}) with which we all arrive at the same conclusion: 
\begin{theorem}[Classification of $\mathrm{Vec}_{\C}^m(\T^d)$]\label{classification:thm:classification}
	For the cases of rank-$m$ vector bundles over the $d$-dimensional torus listed below, the set of equivalence classes is countable and given by: 
	\begin{enumerate}[(1)]
		\item $d = 1$, $m \geq 1$: $\mathrm{Vec}_{\C}^m(\mathbb{S}^1) \cong \{ 0 \}$
		\item $d \geq 2$, $m = 1$: $\mathrm{Vec}_{\C}^1(\T^d) \cong \Z$
		\item $d = 2$, $m \geq 2$: $\mathrm{Vec}_{\C}^m(\T^2) \cong \Z$
		\item $d = 3$, $m \geq 2$: $\mathrm{Vec}_{\C}^m(\T^3) \cong \Z^3$
		\item $d = 4$, $m \geq 2$: $\mathrm{Vec}_{\C}^m(\T^2) \cong \Z^6 \oplus \Z$
	\end{enumerate}
\end{theorem}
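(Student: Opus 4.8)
The plan is to translate the classification into homotopy theory and then read it off a four-dimensional Postnikov tower. Since $\T^d$ is a compact manifold, hence a finite CW complex, the classification theorem for complex vector bundles identifies $\mathrm{Vec}_{\C}^m(\T^d)$ with the set $[\T^d , BU(m)]$ of homotopy classes of maps into the classifying space $BU(m)$; and because $\dim \T^d = d \leq 4$, every such map factors, uniquely up to homotopy, through the $d$-th Postnikov stage, so $[\T^d , BU(m)] \cong [\T^d , P_d BU(m)]$. It therefore suffices to understand $P_d BU(m)$ through dimension $4$, and the entire content splits into one stable regime plus a single trivial leftover.

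\emph{Stable regime.} Whenever $2m \geq d$ --- which covers case~(1) for all $m \geq 1$ and cases~(3), (4) and (5) as stated --- the fibration $\mathbb{S}^{2m+1} \to BU(m) \to BU(m+1)$ shows that the inclusion $BU(m) \hookrightarrow BU$ induces isomorphisms on $\pi_k$ in degrees $k \leq 2m \geq d$, hence $P_d BU(m) \simeq P_d BU$ and $[\T^d , BU(m)] \cong [\T^d , BU] \cong \widetilde{K}^0(\T^d)$. I would then compute $\widetilde{K}^0(\T^d)$ from the Postnikov tower of $BU$: since $BU$ is $1$-connected with $\pi_{2j}(BU) \cong \Z$ and $\pi_{\mathrm{odd}}(BU) = 0$, its tower through dimension $4$ carries a single potential $k$-invariant, the one splicing $\pi_4$ onto $\pi_2$, and it lies in $H^5\bigl ( K(\Z,2) ; \Z \bigr ) = 0$ because $H^{\ast}(K(\Z,2);\Z)$ is a polynomial ring on one generator in degree $2$. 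Therefore $P_4 BU \simeq K(\Z,2) \times K(\Z,4)$, so $\widetilde{K}^0(X) \cong H^2(X;\Z) \oplus H^4(X;\Z)$ for every complex $X$ with $\dim X \leq 4$ --- equivalently, the Atiyah--Hirzebruch spectral sequence of $K^0(X)$ collapses with no extension problem. Feeding in $H^{\ast}(\T^d;\Z) \cong \Lambda^{\ast}\Z^d$, so that $H^2(\T^d;\Z) \cong \Z^{\binom{d}{2}}$ and $H^4(\T^d;\Z) \cong \Z^{\binom{d}{4}}$ (the latter $= 0$ for $d < 4$), produces $\{0\}$ for $d = 1$, $\Z$ for $d = 2$, $\Z^3$ for $d = 3$ and $\Z^6 \oplus \Z$ for $d = 4$, with the isomorphism implemented by the first Chern class in dimensions $d \leq 3$ and by the pair $(c_1 , c_2)$ in dimension $4$.

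\emph{The leftover.} The only case not covered by the above is $m = 1$, where no stability is needed at all: $BU(1) = K(\Z,2)$, so $[\T^d , BU(1)] \cong H^2(\T^d;\Z) \cong \Z^{\binom{d}{2}}$, which in particular is $\Z$ for $d = 2$. Because $H^{\ast}(\T^d;\Z)$ is torsion-free, no extension ambiguity ever arises in any step of the argument.

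The computation is brief; the real substance hides in two places. The first, and in my view the hard part, is the Postnikov bookkeeping --- that $BU(m) \hookrightarrow BU$ is an isomorphism on $\pi_k$ for $k \leq 2m$ and, crucially, that the one relevant $k$-invariant of $BU$ vanishes because $H^5(K(\Z,2);\Z) = 0$, which is exactly the statement that the $K$-theory Atiyah--Hirzebruch spectral sequence over $\T^d$ degenerates at its second page with no nontrivial extensions. The second, genuinely separate, matter is identifying the abstract generators of these groups with honest Chern \emph{numbers}, obtained by pairing Chern classes with the fundamental classes of the coordinate $2$- and $4$-subtori of $\T^d$; that is the ``which equivalence class does a given material realize'' question the text postpones, and it is not needed for the counting asserted by the theorem.
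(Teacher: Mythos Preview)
Your argument is correct and self-contained, whereas the paper does not actually prove this theorem at all: it states the result, calls the classification ``quite standard,'' and defers to the literature (K-theory texts and the vector-bundle-theoretic classification papers of De~Nittis and Gomi). So your proposal supplies precisely what the paper deliberately outsources.

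The route you take --- reducing to $[\T^d, BU(m)]$, passing to the Postnikov section $P_d BU(m)$, using stability to replace $BU(m)$ by $BU$ in the range $2m \geq d$, and then observing that the one relevant $k$-invariant of $BU$ vanishes because $H^5(K(\Z,2);\Z) = 0$ --- is one of the standard arguments, equivalent (as you note) to the collapse of the Atiyah--Hirzebruch spectral sequence for $\widetilde{K}^0(\T^d)$ with no extensions. This lines up with the paper's intent: the Corollary immediately following the theorem records that first and second Chern classes label the equivalence classes, which is exactly what your splitting $P_4 BU \simeq K(\Z,2) \times K(\Z,4)$ delivers.

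One remark worth flagging explicitly: for case~(2) you correctly obtain $\mathrm{Vec}^1_{\C}(\T^d) \cong H^2(\T^d;\Z) \cong \Z^{\binom{d}{2}}$, which equals $\Z$ only when $d = 2$. The paper's statement ``$d \geq 2$, $m = 1$: $\Z$'' is therefore literally wrong for $d \geq 3$ (as is the $\T^2$ appearing in case~(5), which should read $\T^4$). These are evidently typos on the paper's side; your computation is the correct one and you were right not to force it to match.
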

%

\paragraph{Computing topological invariants} 
\label{par:computing_topological_invariants}
The above classification would be quite academic if we were not able to actually compute the “coordinates” of a concretely given vector bundle in $\mathrm{Vec}_{\C}^m(\T^d)$; the coordinates which make up the address are \emph{topological invariants}, in this case first and second \emph{Chern classes}. To define the Chern classes 
\begin{align*}
	c_j(\mathcal{E}_{\BZ}) \in H^{2j}(\BZ,\Z) = \Z^{n(d,j)}
	, 
	&&
	j = 1 , 2 
	, 
\end{align*}
abstractly, one approach (see \eg \cite{Luke_Mischenko:vector_bundles_applications:1984,Milnor_Stasheff:characteristic_classes:1974}) is to view $\mathcal{E}_{\BZ}$ as the pullback of the universal vector bundle and define $\mathcal{E}_{\BZ}$'s Chern classes as the pullbacks of universal Chern classes. They are elements of the $2j$th cohomology group $H^{2j}(\BZ,\Z)$ over the Brillouin torus with integer coefficients; these cohomology groups can be computed explicitly by recursion to be $\Z$ to the power $n(d,j) = \nicefrac{d!}{j! \, (d-j)!}$ for all $0 \leq j \leq d$ and $n(d,j) = 0$ if $j > d$. 

Thanks to the Universal Coefficient Theorem \cite[Theorem~3.2]{Hatcher:algebraic_topology:2002}, we can embed $H^{2j}(\BZ,\Z)$ into the \emph{de Rham cohomology} $H^{2j}_{\mathrm{dR}}(\BZ)$, and Chern-Weil Theory \cite[Appendix~C]{Milnor_Stasheff:characteristic_classes:1974} allows us to connect the \emph{algebraically} defined Chern classes with \emph{differential geometric objects} such as the \emph{Berry curvature}. For details we refer the interested reader to \cite[pp.~28]{DeNittis_Lein:exponentially_loc_Wannier:2011}. 

The upshot is that these arguments not only ensure that they are \emph{integer}-valued, but also allow us to \emph{compute} Chern classes via the usual formulas. In case $d = 2$ the first Chern class can be identified with the first Chern number 
\begin{align}
	c_1(\mathcal{E}_{\BZ}) = \frac{1}{2\pi} \int_{\BZ} \dd k \, \mathrm{Tr}_{\Hil_+(k)} \, \bigl ( \Omega(k) \bigr ) 
	\label{classification:eqn:first_Chern_class}
\end{align}
whose equation involves the Berry \emph{curvature} $\Omega(k) = \dd \mathcal{A} + \mathcal{A} \wedge \mathcal{A}$, which in turn is defined \emph{locally} in terms of the Berry \emph{connection} $\mathcal{A}(k) = \bigl ( \mathcal{A}_{jn}(k) \bigr )_{1 \leq j , n \leq m}$
\begin{align*}
	\mathcal{A}_{jn}(k) = \ii \scpro{\varphi_j(k) \,}{\nabla_k \varphi_n(k)}_W
\end{align*}
with respect to a locally trivializing basis. Alternatively, we can recast 
\begin{align*}
	c_1(\mathcal{E}_{\BZ}) = \frac{1}{2\pi} \int_{\BZ} \dd k \, \mathrm{Tr}_{\Hil_+(k)} \Bigl ( \ii \, P_{\mathrm{rel}}(k) \; \bigl [ \partial_{k_1} P_{\mathrm{rel}}(k) \, , \, \partial_{k_2} P_{\mathrm{rel}}(k) \bigr ] \Bigr )
\end{align*}
solely in terms of the projection onto the relevant bands. Similar formulas exist for $c_1(\mathcal{E}_{\BZ})$ in $d \geq 3$ and for the second Chern class $c_2(\mathcal{E}_{\BZ})$ (\cf \eg \cite[equations~(5.9)–(5.10)]{DeNittis_Lein:exponentially_loc_Wannier:2011}). Collecting our results, we can restate the above classification Theorem~\ref{classification:thm:classification} as follows: 
\begin{corollary}\label{classification:cor:classification}
	The equivalence class a Bloch bundle belongs to is given in terms of the first two Chern classes. More specifically: 
	\begin{enumerate}[(1)]
		\item For the cases (2)–(4) enumerated in Theorem~\ref{classification:thm:classification} the Bloch vector bundle is classified by its first Chern class alone. 
		\item For case (5) ($d = 4$, $m \geq 2$), the Bloch vector bundle is classified by its first \emph{and} second Chern classes. 
	\end{enumerate}
	Put another way, two vector bundles can be continuously deformed into one another \emph{if and only if} all (first and second) Chern numbers agree; a vector bundle is \emph{trivial}, \ie isomorphic to the product bundle $\BZ \times \C^m$, if and only if all Chern numbers vanish. 
\end{corollary}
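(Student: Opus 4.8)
The statement is essentially a dictionary entry: it rephrases the abstract classification of Theorem~\ref{classification:thm:classification} in the language of Chern classes, so the plan is to identify the groups $\mathrm{Vec}_\C^m(\T^d)$ appearing there with cohomology of the Brillouin torus and to check that the bijection is realized by the Chern class map $\mathcal{E}_{\BZ} \mapsto \bigl ( c_1(\mathcal{E}_{\BZ}) , c_2(\mathcal{E}_{\BZ}) \bigr )$. First I would recall, from the theory of characteristic classes (via Chern--Weil theory \cite[Appendix~C]{Milnor_Stasheff:characteristic_classes:1974} or the classifying space $BU(m)$), that Chern classes are isomorphism invariants of complex vector bundles and are preserved under pullback along homotopic maps; since a continuous, gap-preserving deformation of the medium induces a homotopy of the spectral projections $P_{\mathrm{rel}}(k)$ and hence of the classifying map of $\mathcal{E}_{\BZ}(P_{\mathrm{rel}})$, this already gives one direction of the last assertion. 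I work throughout in the continuous category, which by the Oka principle agrees with the analytic one over $\T^d$ (\cf \cite[Section~II.F]{DeNittis_Lein:exponentially_loc_Wannier:2011}).

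Second I would compute $H^\ast(\BZ,\Z)$ for $\BZ \cong \T^d$ with $d \leq 4$. As $\T^d$ is a closed $d$-manifold, $H^k(\T^d,\Z)$ is free and vanishes for $k > d$. Hence for $d = 1$ no Chern class is available at all ($H^2(\mathbb{S}^1,\Z) = 0$) and only the rank survives, so the bundle is trivial; for $d = 2, 3$ only $c_1 \in H^2(\T^d,\Z)$ is informative ($H^4(\T^d,\Z) = 0$); and for $d = 4$ both $c_1 \in H^2(\T^4,\Z)$ and $c_2 \in H^4(\T^4,\Z)$ survive, in complementary direct summands. This is precisely the case split of the Corollary, and the resulting groups $H^2(\T^d,\Z) \oplus H^4(\T^d,\Z)$ reproduce the targets $\{0\}$, $\Z$, $\Z^3$ and $\Z^6 \oplus \Z$ of Theorem~\ref{classification:thm:classification}.

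Third, I would upgrade ``invariant'' to ``complete''. For line bundles this is the classical isomorphism $\mathrm{Vec}_\C^1(X) \xrightarrow{\,c_1\,} H^2(X,\Z)$. For rank $m \geq 2$ over $\T^d$ with $d \leq 4$ one is in the stable range $m \geq \lceil d/2 \rceil$, so $\mathrm{Vec}_\C^m(\T^d)$ is computed by reduced topological K-theory; since $H^\ast(\T^d,\Z)$ is torsion-free and supported in degrees $\leq 4$, the Chern character is injective and, through degree $4$, carries the same data as $(c_1, c_2)$. Injectivity of the Chern class map together with the group structure already supplied by Theorem~\ref{classification:thm:classification} forces it to be an isomorphism onto the stated group; alternatively, surjectivity can be exhibited directly by taking Whitney sums of line bundles (plus trivial summands) with arbitrary prescribed $c_1$, supplemented for $d = 4$ by one rank-$2$ bundle of unit second Chern class. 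The deformability and triviality claims then follow at once: isomorphic bundles over $\T^d$ are joined by a homotopy of rank-$m$ projections because that space is path connected, hence correspond to deformable media (\cite[Lemma~1.1]{Hatcher:vector_bundles_K_theory:2009}), and by completeness this is equivalent to equality of all first and second Chern numbers; the product bundle $\BZ \times \C^m$ is the unique class all of whose Chern numbers vanish.

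I expect the only non-formal step to be the completeness of $(c_1, c_2)$ in the higher-rank cases --- ruling out ``exotic'' rank-$m$ bundles with the same Chern data but no isomorphism to the expected model. This is controlled by the low dimension $d \leq 4$, the torsion-freeness of $H^\ast(\T^d,\Z)$ and the stable range $m \geq 2$, which together exclude higher Chern classes or torsion obstructions; and since Theorem~\ref{classification:thm:classification} already fixes the cardinality of $\mathrm{Vec}_\C^m(\T^d)$, in the end one only needs the Chern class map to be injective, after which it is automatically an isomorphism.
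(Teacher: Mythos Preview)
Your proposal is correct and, in fact, more detailed than the paper's treatment. The paper does not give a separate proof of this Corollary: it is presented simply as a restatement of Theorem~\ref{classification:thm:classification} in light of the preceding discussion of Chern classes (``Collecting our results, we can restate the above classification Theorem as follows''), with the underlying completeness of the Chern class map left to the cited references \cite{Milnor_Stasheff:characteristic_classes:1974,Hatcher:algebraic_topology:2002,DeNittis_Lein:exponentially_loc_Wannier:2011}. Your argument makes this step explicit --- invoking the stable range and the injectivity of the Chern character on $\widetilde{K}^0(\T^d)$ for torsion-free $H^{\ast}(\T^d,\Z)$ --- and your observation that injectivity of the Chern class map already suffices (because Theorem~\ref{classification:thm:classification} pins down the target group) is a clean shortcut that the paper does not spell out. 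The only place you might tighten things is the stable-range boundary case $d = 4$, $m = 2$: you are at the edge $2m = d$ of the cancellation theorem, which is still fine for complex bundles, but worth flagging explicitly rather than folding into the phrase ``in the stable range''.
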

%

\subsubsection{Classification of vector bundles in the presence of even time-reversal symmetries (class AI)} 
\label{classification:Bloch_bundle:AI}
Since the work of Nenciu \cite{Nenciu:exponential_loc_Wannier:1983} (note also the later works \cite{Panati:triviality_Bloch_bundle:2006,DeNittis_Lein:exponentially_loc_Wannier:2011,Kuchment:exponential_decaying_wannier:2009}), it was understood that in $d \leq 3$ the presence of an anti-unitary $T$ satisfying 
\begin{align}
	T \, P_{\mathrm{rel}}(k) = P_{\mathrm{rel}}(-k) \, T
	\label{classification:eqn:triviality_condition_projection}
\end{align}
guarantees that the \emph{first} Chern numbers vanish. However, in general the presence of additional symmetries does \emph{not} mean the bundle is necessarily trivial, and the triviality of Chern numbers does not preclude the existence of \emph{other} topological invariants. In fact, if $T$ is odd, \ie $T^2 = - \id$ (class~AII), a new, $\Z_2$-valued topological invariant appears \cite{Furuta_et_al:Seibert-Witten_invariants:2000,Kane_Mele:Z2_ordering_spin_quantum_Hall_effect:2005,DeNittis_Gomi:AII_bundles:2014}. However, in the context of Maxwell operators time-reversal symmetries are necessarily even ($T^2 = + \id$) and have to satisfy \eqref{classification:eqn:triviality_condition_projection} (\cf Proposition~\ref{symmetries:prop:admissible_symmetries}). Consequently, if only one time-reversal symmetry is present, the classification theory for \emph{class~AI} vector bundles applies \cite{DeNittis_Gomi:AI_bundles:2014}. 
\begin{theorem}[{{{\cite[Theorem~1.6]{DeNittis_Gomi:AI_bundles:2014}}}}]\label{classification:thm:class_AI_bundles}
	Suppose there exists an anti-unitary operator $T$ satisfies \eqref{classification:eqn:triviality_condition_projection}. Then the first Chern class of the Bloch bundle $\mathcal{E}_{\BZ}(P_{\mathrm{rel}})$ vanishes and we have: 
	\begin{enumerate}[(1)]
		\item For $d \leq 3$ the Bloch bundle is trivial. 
		\item For $d = 4$ the Bloch bundle is trivial if and only if the second Chern class vanishes. 
	\end{enumerate}
\end{theorem}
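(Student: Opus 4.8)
The plan is to observe that the hypothesis equips the Bloch bundle with a \emph{Real structure} in the sense of Atiyah, and then to run the classification of ``Real'' hermitian vector bundles over the torus. Since for Maxwell operators any admissible time-reversal symmetry is \emph{even}, $T^2 = +\id$ (Proposition~\ref{symmetries:prop:admissible_symmetries}), condition~\eqref{classification:eqn:triviality_condition_projection} means that $T$ restricts to an anti-unitary bundle automorphism $\Theta$ of $\mathcal{E}_{\BZ}(P_{\mathrm{rel}})$ which covers the involution $\theta \colon k \mapsto -k$ on $\BZ \cong \T^d$, is fibrewise anti-isometric for the weighted hermitian metric~\eqref{classification:eqn:braket_weighted_scalar_product}, and satisfies $\Theta^2 = \id$. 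Thus $\mathcal{E}_{\BZ}(P_{\mathrm{rel}})$ is a rank-$m$ Real hermitian vector bundle over the ``Real torus'' $\widetilde{\T}^d := (\T^d , \theta)$, and the assertion to be proved is the computation of the set of isomorphism classes of such bundles, together with the identification of the surviving invariant.

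First I would dispose of the first Chern class. Read as a bundle map $\mathcal{E}_{\BZ}(P_{\mathrm{rel}}) \to \theta^* \overline{\mathcal{E}_{\BZ}(P_{\mathrm{rel}})}$, the anti-unitary $\Theta$ is complex-\emph{linear}, so these two complex bundles are isomorphic; applying $c_1$ and using $c_1(\overline{\mathcal{E}}) = -c_1(\mathcal{E})$ gives $c_1 = -\theta^* c_1$ in $H^2(\T^d,\Z)$. Since $\theta^*$ acts by $-\id$ on $H^1(\T^d,\Z) \cong \Z^d$ and $H^2(\T^d,\Z) \cong \Lambda^2 H^1(\T^d,\Z)$, it acts by $+\id$ on $H^2$, whence $2 c_1 = 0$; as $H^2(\T^d,\Z)$ is torsion-free, $c_1 = 0$. (The identical bookkeeping gives $\theta^* = +\id$ on $H^4$, which is precisely why this argument is silent about $c_2$, in line with the non-trivial $d = 4$ case.)

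The heart of the matter is the classification of the Real bundle itself, which I would approach via equivariant obstruction theory, equivalently by stable ``Real'' $K$-theory $KR$. Rank-$m$ Real bundles over a $\Z_2$-CW complex $X$ are classified by equivariant homotopy classes $[X , BU(m)]_{\Z_2}$, with complex conjugation acting on $BU(m)$ and fixed locus $BO(m)$; one attempts to build a Real trivialization cell-by-cell over the equivariant skeleta of $\widetilde{\T}^d$, the obstruction to crossing an $n$-cell lying in a Bredon cohomology group $H^n_{\Z_2}\bigl(\widetilde{\T}^d ; \underline{\pi}_{n-1}\bigr)$ whose coefficient system is governed, on the free stratum, by $\pi_{n-1}(U(m))$ and, over the $2^d$ fixed points, by $\pi_{n-1}(O(m))$ — the structure group there dropping from $U(m)$ to $O(m)$ because the Real structure becomes an honest real structure on the fibre. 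In the stable range (automatic for $d \le 4$) one reduces, by an equivariant Künneth / Mayer–Vietoris analysis, to the low-dimensional $\widetilde{KR}$-groups of iterated smashes of $\widetilde{\T}^1$; these all vanish when $d \le 3$, so the trivialization extends over all of $\widetilde{\T}^d$ and the bundle is Real-trivial, whereas for $d = 4$ exactly one group — the top-cell contribution — survives and is isomorphic to $\Z$. It then remains to identify this $\Z$ with the second Chern class: forgetting the Real structure and composing with $c_2 \colon \widetilde{KU}^0(\T^4) \to H^4(\T^4,\Z) \cong \Z$ is injective on the classes in question (where $c_1 = 0$ is used), and a Real refinement of the quaternionic Hopf bundle, pulled back along the collapse $\widetilde{\T}^4 \to S^4$ of the equivariant $3$-skeleton, realizes $c_2 = 1$; hence $\mathcal{E}_{\BZ}(P_{\mathrm{rel}}) \mapsto c_2\bigl(\mathcal{E}_{\BZ}(P_{\mathrm{rel}})\bigr)$ is the claimed bijection onto $\Z$.

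I expect the equivariant $KR$-theoretic bookkeeping in the third step to be the main obstacle: one must handle carefully the interplay between the free part of $\theta$ and its $2^d$ fixed points and then verify that all obstruction (respectively $\widetilde{KR}$-) groups below degree $4$ genuinely vanish while the degree-$4$ one contributes exactly a single copy of $\Z$. This is also the point at which class~AI ($\Theta^2 = +\id$, Real bundles, trivial for $d \le 3$) diverges from class~AII ($\Theta^2 = -\id$, Quaternionic bundles, which already carry a $\Z_2$-valued invariant for $d = 2, 3$), so the argument must genuinely use $\Theta^2 = +\id$ and not merely the existence of some anti-unitary intertwiner.
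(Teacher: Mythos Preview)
The paper does not give a proof of this theorem at all: it is stated with an explicit citation to \cite[Theorem~1.6]{DeNittis_Gomi:AI_bundles:2014} and then used as a black box in the subsequent classification results. So there is no ``paper's own proof'' to compare against; the only relevant question is whether your sketch is a plausible route to the cited result.

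On that score your outline is sound and is in the spirit of the cited reference. The observation that \eqref{classification:eqn:triviality_condition_projection} together with $T^2 = +\id$ endows $\mathcal{E}_{\BZ}(P_{\mathrm{rel}})$ with a Real structure over the involutive torus $(\T^d,\theta)$ is exactly the starting point of \cite{DeNittis_Gomi:AI_bundles:2014}, and your Chern-class parity argument for $c_1 = 0$ is the standard one. Where you diverge slightly is in the classification step: the cited paper proceeds more concretely, building an explicit equivariant frame over successive skeleta of $\widetilde{\T}^d$ (a ``Real'' Oka-type argument) rather than invoking Bredon cohomology or $\widetilde{KR}$-groups directly. Both routes lead to the same answer, and your $KR$-theoretic phrasing is perfectly legitimate, but be aware that the stable/unstable range distinction and the passage from $\widetilde{KR}$ to genuine isomorphism classes of bundles require care that your sketch only gestures at. The identification of the $d = 4$ invariant with $c_2$ via the collapse map and a Real lift of the instanton generator is correct in outline.
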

Note, however, it is critical for this Theorem to hold that $T$ relates $P_{\mathrm{rel}}(k)$ and $P_{\mathrm{rel}}(-k)$. If $T \, P_{\mathrm{rel}}(k) = P_{\mathrm{rel}} \bigl ( \tau(k) \bigr ) \, T$ held for some other involution $\tau(k) \neq -k$, then the presence of the even time-reversal symmetry $T$ would \emph{not} necessarily ensure the triviality of the bundle \cite[Section~4]{DeNittis_Gomi:AI_bundles:2014}. 

\subsubsection{Classification of dual symmetric, non-gyrotropic materials} 
\label{classification:Bloch_bundle:dual_symmetric}
Dual symmetric, non-gyrotropic media have two even time-reversal symmetries and fall outside of the standard Cartan-Altland-Zirnbauer classification scheme. While media in $d \leq 3$ with a single even time-reversal symmetry are trivial, it is not permissible to simply omit the other. Moreover, because these two symmetries \emph{anti}commute with each other, their interplay needs to be carefully studied. We will perform the analysis of this topological class here. 

As luck would have it we can derive the classification with relatively simple, straight-forward arguments and do not need to perform a rather technical analysis along the lines of \cite{DeNittis_Gomi:AI_bundles:2014,DeNittis_Gomi:AII_bundles:2014,DeNittis_Gomi:AIII_bundles:2015} that involves advanced tools from the theory of vector bundles and $K$-theory. 

So we will have to perform the analysis ourselves here. The presence of two even time-reversal symmetries in such media, 
\begin{align*}
	T_j \, M_+(k) \, T_j^{-1} &= M_+(-k) 
	,
	\\
	T_j^2 &= + \id 
	, 
\end{align*}
where $j = 1 , 3$, means they automatically possess one unitary, commuting symmetry $U_2 = \sigma_2 \otimes \id = - \ii \, T_1 \, T_3$, 
\begin{align*}
	U_2 \, M_+(k) \, U_2^{-1} &= M_+(k) 
	, 
\end{align*}
which up to a factor of $- \ii$ equals the product of $T_1 = (\sigma_1 \otimes \id) \, C$ and $T_3 = ( \sigma_3 \otimes \id) \, C$. The presence of this symmetry means that each frequency band is \emph{helicity degenerate}, and we may choose Bloch functions with a specific (left- or right-handed) helicity. 

Put another way, the Maxwell operator 
\begin{align*}
	M_+ = \left (
	\begin{matrix}
		M_{+,+} & 0 \\
		0 & M_{+,-} \\
	\end{matrix}
	\right )
\end{align*}
admits a block decomposition into helicity components, where the operators $M_{+,\pm} = Q_{\pm} \, M_+ \, Q_{\pm}$ are obtained from the projections
\begin{align*}
	Q_{\pm} = \tfrac{1}{2} (\id \pm U_2) 
\end{align*}
onto right-handed (eigenvalue $+1$) and left-handed (eigenvalue $-1$) circularly polarized waves. 

Consequently, we may split the projection onto the relevant bands $P_{\mathrm{rel}}(k) = P_{\mathrm{rel},+}(k) + P_{\mathrm{rel},-}(k)$ as well as the Bloch bundle 
\begin{align*}
	\mathcal{E}_{\BZ}(P_{\mathrm{rel}}) = \mathcal{E}_{\BZ}(P_{\mathrm{rel},+}) \oplus \mathcal{E}_{\BZ}(P_{\mathrm{rel},-}) = \mathcal{E}_+ \oplus \mathcal{E}_-
\end{align*}
into a right-handed ($+$) and left-handed ($-$) component. Note that the ranks of $\mathcal{E}_{\pm}$ necessarily have to agree because each band has even helicity degeneracy. 

Below, we will show that the time-reversal symmetries are compatible with this decomposition (Lemma~\ref{classification:lem:block_diagonalization_symmetries}), independently of whether the time-reversal symmetries commute or anticommute (Lemma~\ref{classification:lem:dual_symmetric_commutating_symmetries}). That means they do not mix left- and right-handed states, and we may write $T_j = T_{j,+} \oplus T_{j,-}$ where $T_{j,\pm} = Q_{\pm} \, T_j \, Q_{\pm}$. What is more, the time-reversal symmetries restricted to each helicity component, $T_{1,\pm}$ and $T_{2,\pm}$, are no longer distinct physical symmetries (Theorem~\ref{classification:thm:block_decomposition_system}), and effectively, $\mathcal{E}_{\pm}$ come with only a \emph{single} even time-reversal symmetry. 

To summarize, for dual symmetric gyrotropic media the Bloch bundle is the sum of two class~AI bundles, and $\mathcal{E}_{\pm}$ can be classified with standard theory (see Theorem~\ref{classification:thm:class_AI_bundles}). 
\begin{theorem}[Classification of dual symmetric gyrotropic media]\label{classification:thm:dual_symmetric_materials}
	Suppose the weights $W$ are periodic with respect to $\Gamma \cong \Z^d$ and possess the two even time-reversal symmetries $T_1$ and $T_3$. Then the Bloch bundle associated to an isolated family of bands (in the sense of the Gap Condition~\ref{classification:assumption:gap_condition}) $\mathcal{E}_{\BZ}(P_{\mathrm{rel}}) = \mathcal{E}_+ \oplus \mathcal{E}_-$ is the sum of two class~AI bundles that can be classified as follows: 
	\begin{enumerate}[(1)]
		\item Independently of the dimension of the periodicity lattice, one topological invariant is the total rank of the bundle, $\mathrm{rank} \, \mathcal{E}_{\BZ} = 2 \, \mathrm{rank} \, \mathcal{E}_{\pm}$. 
		\item When the rank is fixed, then in dimensions $d = 1 , 2 , 3$ the Bloch bundle is the sum of two trivial vector bundles, \ie all vector bundles are topologically equivalent. 
		\item When the rank is fixed, then in dimensions $d = 4$ the Bloch vector bundles are distinguished by the two second Chern numbers of $\mathcal{E}_{\pm}$. 
	\end{enumerate}
\end{theorem}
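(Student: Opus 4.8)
The plan is to reduce the classification entirely to the already-established one for class~AI bundles (Theorem~\ref{classification:thm:class_AI_bundles}) by exploiting the helicity block decomposition induced by the unitary symmetry $U_2 = - \ii \, T_1 \, T_3$. First I would set up the decomposition: by Proposition~\ref{symmetries:prop:admissible_symmetries} the presence of $T_1$ and $T_3$ forces $U_2 = \sigma_2 \otimes \id$ to be a commuting unitary symmetry, and the spectral projections $Q_{\pm} = \tfrac{1}{2} ( \id \pm U_2 )$ are \emph{independent of $k$}. Since $Q_{\pm}$ commutes with $M_+(k)$, it also commutes with $P_{\mathrm{rel}}(k)$, so $P_{\mathrm{rel},\pm}(k) = Q_{\pm} \, P_{\mathrm{rel}}(k)$ inherits the analyticity of $k \mapsto P_{\mathrm{rel}}(k)$ together with the Gap Condition (Assumption~\ref{classification:assumption:gap_condition}); hence $\mathcal{E}_{\pm} = \mathcal{E}_{\BZ}(P_{\mathrm{rel},\pm})$ are bona fide analytic vector bundles over $\BZ$ with $\mathcal{E}_{\BZ}(P_{\mathrm{rel}}) = \mathcal{E}_+ \oplus \mathcal{E}_-$. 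Because each frequency band has even helicity degeneracy, $\mathrm{rank} \, \mathcal{E}_+ = \mathrm{rank} \, \mathcal{E}_-$, and since the rank is a topological invariant of any vector bundle this already gives item~(1).

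The core step is to show that each $\mathcal{E}_{\pm}$ is a class~AI bundle. By Lemma~\ref{classification:lem:block_diagonalization_symmetries} and Lemma~\ref{classification:lem:dual_symmetric_commutating_symmetries} both $T_1$ and $T_3$ commute with $U_2$, \ie they preserve the two helicity subspaces rather than swapping them, so they restrict to $T_{j,\pm} = Q_{\pm} \, T_j \, Q_{\pm}$; by Theorem~\ref{classification:thm:block_decomposition_system} the two restrictions $T_{1,\pm}$ and $T_{3,\pm}$ agree up to a phase and therefore define a \emph{single} even time-reversal symmetry $T_{\pm}$ on $\mathcal{E}_{\pm}$. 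From $T_j \, M_+(k) \, T_j^{-1} = M_+(-k)$ and the fact that $T_j$ preserves the relevant spectral subspace one gets $T_j \, P_{\mathrm{rel}}(k) = P_{\mathrm{rel}}(-k) \, T_j$; applying $Q_{\pm}$ on both sides (using that $Q_{\pm}$ commutes with $T_j$ and with $P_{\mathrm{rel}}(k)$) yields $T_{\pm} \, P_{\mathrm{rel},\pm}(k) = P_{\mathrm{rel},\pm}(-k) \, T_{\pm}$ with $T_{\pm}^2 = + \id$. This is exactly hypothesis~\eqref{classification:eqn:triviality_condition_projection} of Theorem~\ref{classification:thm:class_AI_bundles} for $\mathcal{E}_{\pm}$, with the crucial feature that $T_{\pm}$ relates the fiber of $\mathcal{E}_{\pm}$ at $k$ with the fiber of the \emph{same} bundle at $-k$, not with the opposite helicity component.

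It then remains to apply Theorem~\ref{classification:thm:class_AI_bundles} and Corollary~\ref{classification:cor:classification} to $\mathcal{E}_+$ and $\mathcal{E}_-$ separately. In dimensions $d \leq 3$ both $\mathcal{E}_{\pm}$ are trivial, hence $\mathcal{E}_{\BZ} = \mathcal{E}_+ \oplus \mathcal{E}_-$ is the trivial bundle of rank $\mathrm{rank} \, \mathcal{E}_{\BZ}$, and any two such bundles of equal rank are isomorphic — item~(2). In dimension $d = 4$, once the rank is fixed each class~AI bundle $\mathcal{E}_{\pm}$ is classified by its second Chern number $c_2(\mathcal{E}_{\pm}) \in H^4(\BZ,\Z) \cong \Z$ (all first Chern numbers vanish because $\mathcal{E}_{\pm}$ carries an even time-reversal symmetry); since the media under consideration always possess $U_2$, the natural notion of equivalence respects the helicity splitting $\mathcal{E}_{\BZ} = \mathcal{E}_+ \oplus \mathcal{E}_-$, so the problem reduces to classifying the pair $(\mathcal{E}_+,\mathcal{E}_-)$ and the complete invariant at fixed rank is $\bigl ( c_2(\mathcal{E}_+) , c_2(\mathcal{E}_-) \bigr ) \in \Z^2$ — item~(3).

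The main obstacle is the core step, specifically making sure that the restricted time-reversal on each helicity block relates $P_{\mathrm{rel},\pm}(k)$ with $P_{\mathrm{rel},\pm}(-k)$ — the same helicity at $\pm k$ rather than opposite helicities — since only then can Theorem~\ref{classification:thm:class_AI_bundles} be invoked block-by-block and the final invariant be read off as ``the two second Chern numbers of $\mathcal{E}_{\pm}$''. This hinges on $T_1$ and $T_3$ commuting with $U_2$, which is a short computation in the Pauli algebra (using $\overline{\sigma_2} = - \sigma_2$, so that $\sigma_1 \overline{\sigma_2} = \sigma_2 \sigma_1$ and $\sigma_3 \overline{\sigma_2} = \sigma_2 \sigma_3$) but is mildly counterintuitive given that $T_1$ and $T_3$ \emph{anti}commute with \emph{each other}; this is precisely the content of Lemmas~\ref{classification:lem:block_diagonalization_symmetries} and~\ref{classification:lem:dual_symmetric_commutating_symmetries} and Theorem~\ref{classification:thm:block_decomposition_system}, which I take for granted here.
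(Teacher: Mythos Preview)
Your proposal is correct and follows essentially the same route as the paper: you establish the helicity decomposition via $U_2$ and the projections $Q_{\pm}$, invoke Lemmas~\ref{classification:lem:dual_symmetric_commutating_symmetries}, \ref{classification:lem:block_diagonalization_symmetries} and Theorem~\ref{classification:thm:block_decomposition_system} to see that each $\mathcal{E}_{\pm}$ carries a single even time-reversal symmetry relating fibers at $\pm k$, and then apply Theorem~\ref{classification:thm:class_AI_bundles} block-by-block. The paper packages the middle step as Corollary~\ref{classification:cor:classification_dual_time_symmetry} and then cites it, whereas you unfold its content directly; your explicit verification that $T_{\pm}$ stays within the \emph{same} helicity component (rather than swapping $\mathcal{E}_+$ and $\mathcal{E}_-$) is exactly the point the paper's Corollary is there to secure.
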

We emphasize that the notion of trivial vector bundle depends on the symmetries, \ie on the \emph{topological class}: for complex vector bundles (the quantum Hall class, class~A, with no symmetries) a trivial bundle by definition is one that is isomorphic to a product bundle. Given that the Chern numbers for product bundles all vanish, this gives a simple criterion for bundle triviality. For vector bundles that are endowed with additional symmetries, the situation can be more delicate. For low-dimensional class~AI bundles over the torus there is only one phase and all vector bundles of the same rank can be continuously deformed into one another (provided the time-reversal symmetry is preserved); put another way, not just a specific bundle, but the whole class of bundles is trivial. However, for other topological classes (\eg class~AIII) where bundles are characterized by \emph{relative} topological invariants, there exists no canonical notion of trivial bundle. 
\medskip

\noindent
We now proceed with the derivation of this classification result; our arguments are divided up into several steps. The proofs are all non-technical and we think they are instructive for the reader to follow our reasoning with which we arrive at Theorem~\ref{classification:thm:dual_symmetric_materials}. We will broaden our setting: suppose $H$ is a selfadjoint (hermitian) operator, a stand-in for a quantum Hamiltonian or the Maxwell operator, that comes furnished with two antiunitary symmetries, 
\begin{subequations}\label{classification:eqn:epsilon_lambda_type_symmetry_conditions}
	\begin{align}
		T_j \, H \, T_j^{-1} &= \epsilon_j H 
		, 
		&&
		j = 1 , 2 
		, 
		\label{classification:eqn:epsilon_lambda_type_symmetry_conditions:TR_vs_PH}
		\\
		T_j^2 &= \lambda_j \id 
		. 
		\label{classification:eqn:epsilon_lambda_type_symmetry_conditions:even_odd}
	\end{align}
\end{subequations}
That means we do not restrict ourselves to even time-reversal symmetries, but admit any combination of even ($\lambda = 1$) or odd ($\lambda = -1$) symmetries, be it of time-reversal- ($\epsilon = + 1$) or of particle-hole-type ($\epsilon = -1$). For the sake of brevity, we will say that $T_j$ is of type $(\epsilon_j,\lambda_j)$. 

First, we have a certain amount of freedom when picking antiunitary symmetries, for example we may multiply them with a phase to transform an \emph{anti}commuting to a commuting pair of symmetries. 
\begin{lemma}\label{classification:lem:equivalence_symmetries}
	Suppose $T$ is a symmetry of type $(\epsilon,\lambda)$ in the sense of equation~\eqref{classification:eqn:epsilon_lambda_type_symmetry_conditions} and $\e^{\ii \varphi} \in \C$ a phase. Then $T' = \e^{\ii \varphi} \, T$ is a symmetry of type $(\epsilon,\lambda)$. 
\end{lemma}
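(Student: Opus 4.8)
The plan is to use two elementary facts: conjugating $H$ by a scalar multiple of the identity has no effect, while an antilinear operator conjugates scalars when commuted past them. First I would observe that $T' = \e^{\ii \varphi} \, T$ is again antiunitary, since it is the composition of the unitary ``multiplication by $\e^{\ii \varphi}$'' with the antiunitary $T$; hence $T'$ is an admissible candidate for a symmetry in the sense of \eqref{classification:eqn:epsilon_lambda_type_symmetry_conditions}, and only the two type labels $\epsilon$ and $\lambda$ remain to be checked.

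Next I would verify that the sign $\epsilon_j$ in the (anti)commutation relation \eqref{classification:eqn:epsilon_lambda_type_symmetry_conditions:TR_vs_PH} is unchanged. Using $(T')^{-1} = T^{-1} \, \e^{-\ii \varphi}$ and the fact that the phase $\e^{\pm \ii \varphi}$ is a multiple of the identity and therefore commutes with $H$, one computes
\[
	T' \, H \, (T')^{-1} = \e^{\ii \varphi} \, T \, H \, T^{-1} \, \e^{-\ii \varphi} = T \, H \, T^{-1} = \epsilon \, H ,
\]
so $T'$ inherits the same value of $\epsilon$ as $T$.

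Finally I would compute the square $(T')^2$, which is the only step where one must be a little careful: because $T$ is antilinear, moving it past a phase conjugates that phase, i.e. $T \, \e^{\ii \varphi} = \e^{-\ii \varphi} \, T$. Therefore
\[
	(T')^2 = \e^{\ii \varphi} \, T \, \e^{\ii \varphi} \, T = \e^{\ii \varphi} \, \e^{-\ii \varphi} \, T^2 = T^2 = \lambda \, \id ,
\]
so $\lambda$ is likewise unchanged, and $T'$ is a symmetry of type $(\epsilon,\lambda)$. There is no real obstacle here; the only subtlety — and the reason the statement is worth recording — is precisely this interplay of the phase with the antilinearity of $T$, which is what guarantees that rephasing preserves the even/odd character $\lambda$ (and, in particular, why one cannot turn an odd symmetry into an even one by a phase). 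This observation is then used in the sequel to normalize an anticommuting pair $\{T_1, T_3\}$ into a commuting pair.
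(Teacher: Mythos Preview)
Your proof is correct and essentially the same as the paper's: both verify the $\lambda$-label via $(T')^2 = \e^{\ii\varphi}\e^{-\ii\varphi}T^2$ using antilinearity, and the $\epsilon$-label by cancelling the phase in the conjugation. The only cosmetic difference is that the paper writes $T = U\,C$ to compute $(T')^{-1} = \e^{+\ii\varphi}\,T^{-1}$ explicitly, whereas you use $(T')^{-1} = T^{-1}\,\e^{-\ii\varphi}$ directly; these are of course the same operator since $T^{-1}$ is antilinear.
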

\begin{proof}
	The antiunitarity of $T$ implies ${T'}^2 = \e^{\ii \varphi} \, \e^{- \ii \varphi} \, T^2 = \lambda \, \id$, and $T'$ is even or odd whenever $T$ is. 
	
	To show that $T'$ (anti)commutes with $H$ if and only if $T$ does, we express $T = U \, C$ as the product of a unitary $U$ and complex conjucation $C$ (which is always possible). Therefore, the inverse of $T'$, 
	\begin{align*}
		{T'}^{-1} &= \bigl ( \e^{\ii \varphi} \, U \, C \bigr )^{-1} 
		= C \, \bigl ( \e^{\ii \varphi} \, U \bigr )^* 
		= C \, \e^{- \ii \varphi} \, U^*
		\\
		&= \e^{+ \ii \varphi} \, C \, U^* 
		= \e^{\ii \varphi} \, T^{-1} 
		, 
	\end{align*}
	is just $\e^{\ii \varphi}$ times the inverse of $T$, and hence, $T'$ is a symmetry of the same type, 
	\begin{align*}
		T' \, H \, {T'}^{-1} &= T \, \e^{- \ii \varphi} \, H \, \e^{\ii \varphi} \, T^{-1} 
		= T \, H \, T^{-1} 
		= \epsilon \, H 
		. 
	\end{align*}
\end{proof}
Secondly, if the two antiunitary symmetries commute up to a phase, 
\begin{align}
	T_1 \, T_2 = \e^{\ii \varphi} \, T_2 \, T_1 
	, 
	\label{classification:eqn:commutation_up_to_phase}
\end{align}
then we can find equivalent symmetries which commute; here, the previous Lemma . In particular, two anticommuting, antiunitary symmetries are equivalent to a pair of antiunitary, \emph{commuting} symmetries. 
\begin{lemma}\label{classification:lem:dual_symmetric_commutating_symmetries}
	Suppose $T_1$ and $T_2$ are two antiunitary symmetries of types $(\epsilon_j , \lambda_j)$, $j = 1 , 2$, which commute up to a phase $\e^{\ii \varphi} \in \C$. Then $T_1' = T_1$ and $T_2' = \e^{\ii \frac{\varphi}{2}} \, T_2$ are a pair of \emph{commuting} symmetries of the same types. 
\end{lemma}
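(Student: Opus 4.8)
The plan is to dispense with the claim about types almost immediately and then establish the commutation relation by a single short computation. Since $T_1' = T_1$, there is nothing to check for it, and since $T_2' = \e^{\ii \frac{\varphi}{2}} \, T_2$ differs from $T_2$ only by a phase, Lemma~\ref{classification:lem:equivalence_symmetries} at once yields that $T_2'$ is again a symmetry of type $(\epsilon_2,\lambda_2)$. So the real content is the commutation of $T_1'$ and $T_2'$, and the only ingredient I would need is the \emph{anti}linearity of $T_1$: pulling a scalar $\e^{\ii \frac{\varphi}{2}}$ from the right of $T_1$ to its left replaces it by its complex conjugate $\e^{- \ii \frac{\varphi}{2}}$.

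Concretely, I would write
\begin{align*}
	T_1' \, T_2' = T_1 \, \e^{\ii \frac{\varphi}{2}} \, T_2 = \e^{- \ii \frac{\varphi}{2}} \, T_1 \, T_2 = \e^{- \ii \frac{\varphi}{2}} \, \e^{\ii \varphi} \, T_2 \, T_1 = \e^{\ii \frac{\varphi}{2}} \, T_2 \, T_1 = T_2' \, T_1'
	,
\end{align*}
where the second equality uses antilinearity of $T_1$, the third is the hypothesis~\eqref{classification:eqn:commutation_up_to_phase}, and the last equality only records $\e^{\ii \frac{\varphi}{2}} \, T_2 \, T_1 = T_2' \, T_1'$ (no conjugation enters here because the phase already stands to the left of the antilinear operators). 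That is the whole argument.

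There is essentially no obstacle; the only point worth a remark is the harmless ambiguity in the branch of $\e^{\ii \frac{\varphi}{2}}$. Choosing the other root replaces $T_2'$ by $- T_2'$, which by Lemma~\ref{classification:lem:equivalence_symmetries} is still a symmetry of the same type and, by the computation above, still commutes with $T_1'$ — so the conclusion does not depend on the choice. I would also note explicitly that the case relevant to dual symmetric, non-gyrotropic media is the \emph{anti}commuting one, $\e^{\ii \varphi} = -1$, where $T_2' = \ii \, T_2$; rewriting the anticommuting pair $(T_1,T_3)$ as the commuting pair $(T_1, \ii T_3)$ is precisely what lets the helicity block-decomposition arguments in Lemma~\ref{classification:lem:block_diagonalization_symmetries} and Theorem~\ref{classification:thm:block_decomposition_system} proceed uniformly.
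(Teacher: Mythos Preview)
Your proof is correct and essentially identical to the paper's: both invoke Lemma~\ref{classification:lem:equivalence_symmetries} for the type claim and then verify commutation via the same one-line computation using antilinearity of $T_1$ and the hypothesis~\eqref{classification:eqn:commutation_up_to_phase}. Your added remarks on the branch ambiguity and the anticommuting case $\e^{\ii\varphi}=-1$ are fine supplementary commentary but not part of the core argument.
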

\begin{proof}
	According to the preceding Lemma, $T_2'$ is also of type $(\epsilon_2 , \lambda_2)$, the same as $T_2$. And a quick computation shows that $T_2'$ commutes with $T_1' = T_1$ because the phase factors cancel: 
	\begin{align*}
		T_1' \, T_2' &= T_1 \, \e^{\ii \frac{\varphi}{2}} \, T_2 
		= \e^{- \ii \frac{\varphi}{2}} \, T_1 \, T_2 
		= \e^{- \ii \frac{\varphi}{2}} \, \e^{\ii \varphi} \, T_2 \, T_1 
		= T_2' \, T_1' 
	\end{align*}
\end{proof}
Now suppose the two symmetries commute and are of the same type, $(\epsilon_1,\lambda_1) = (\epsilon_2,\lambda_2)$. As we have just seen, assuming $T_1 \, T_2 = T_2 \, T_1$ instead of \eqref{classification:eqn:commutation_up_to_phase} imposes no additional restrictions. Then their product 
\begin{align*}
	U = T_1 \, T_2 
\end{align*}
is a unitary that squares to $+ \id$, 
\begin{align}
	U^2 &= \bigl ( T_1 \, T_2 \bigr )^2 
	= T_1^2 \, T_2^2 
	= \lambda^2 \, \id 
	= + \id 
	, 
	\label{classification:eqn:V_squares_to_1}
\end{align}
and commutes with $H$, 
\begin{align}
	U \, H \, U^{-1} &= T_1 \, \bigl ( T_2 \, H \, T_2^{-1} \bigr ) \, T_1^{-1} 
	= \epsilon \, T_1 \, H \, T_1^{-1} 
	= \epsilon^2 \, H 
	= H 
	. 
	\label{classification:eqn:V_commutes_with_H}
\end{align}
Combining the unitarity of $U$ with \eqref{classification:eqn:V_squares_to_1}, we deduce that $U = U^* = U^{-1}$ is selfadjoint (hermitian), and due to $U^2 = + \id$ the spectrum $\sigma(U) \subseteq \{ -1 , +1 \}$ consists of either one or two eigenvalues. Let us exclude the trivial cases $U = \pm \id$ where $T_1$ and $T_2 = \pm T_1^{-1} = \pm \lambda T_1$ are equivalent as they differ only by a phase (Lemma~\ref{classification:lem:equivalence_symmetries}). When $U \neq \pm 1$ both eigenvalues occur and the corresponding spectral projections 
\begin{align*}
	Q_{\pm} = \tfrac{1}{2} (\id \pm U) 
\end{align*}
map onto the eigenspaces associated with the eigenvalues $\pm 1$. For dual symmetric electromagnetic media these are the projections onto right-handed ($+1$) and left-handed ($-1$) electromagnetic waves. 

These give rise to a decomposition of the Hilbert space 
\begin{align*}
	\Hil = \Hil_+ \oplus \Hil_- 
\end{align*}
where the two subspaces $\Hil_{\pm} = Q_{\pm}[\Hil]$ are the ranges of $Q_{\pm}$. As $U$ commutes with $H$ (equation~\eqref{classification:eqn:V_commutes_with_H}), so do the associated spectral projections, which leads us to conclude 
\begin{align*}
	H = H_+ \oplus H_- = \left (
	\begin{matrix}
		H_+ & 0 \\
		0 & H_- \\
	\end{matrix}
	\right )
	, 
\end{align*}
since $Q_{\pm} \, H \, Q_{\mp} = H \, Q_{\pm} \, Q_{\mp} = 0$ and only the block-diagonal contributions $H_{\pm} = Q_{\pm} \, H \, Q_{\pm}$ remain. It turns out that also the two antiunitary symmetries $T_j = T_{j,+} \oplus T_{j,-}$ are block-diagonal with respect to this decomposition. 
\begin{lemma}\label{classification:lem:block_diagonalization_symmetries}
	Suppose $T_1$ and $T_2$ are two commuting, antiunitary symmetries of the same type $(\epsilon,\lambda)$, and their product $U \neq \pm \id$ is not trivial. Then $T_1$, $T_2$ and $U$ are block-diagonal, 
	\begin{align*}
		T_j &= \left (
		\begin{matrix}
			T_{j,+} & 0 \\
			0 & T_{j,-} \\
		\end{matrix}
		\right )
		, 
		&&
		j = 1 , 2 
		, 
		\\
		U &= \left (
		\begin{matrix}
			+ \id & 0 \\
			0 & - \id \\
		\end{matrix}
		\right )
		. 
	\end{align*}
	Moreover, the block components of the two antiunitaries are related, 
	\begin{align}
		T_{2,\pm} &= \pm T_{1,\pm}^{-1} 
		= \pm \lambda T_{1,\pm} 
		. 
		\label{classification:eqn:relation_U_1_U_2_componentwise}
	\end{align}
\end{lemma}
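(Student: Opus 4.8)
The plan is to exploit the algebraic identity $U = T_1 \, T_2 = T_2 \, T_1$ together with $T_j^2 = \lambda \id$ to show that both antiunitaries commute with $U$, and then to use the \emph{reality} of the eigenvalues of $U$ to conclude that $T_1$ and $T_2$ preserve each eigenspace.

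First I would verify that $T_1$ and $T_2$ each commute with $U$. Using the hypothesis that the two symmetries commute, $T_1 \, T_2 = T_2 \, T_1$, one computes $U \, T_1 = T_1 \, T_2 \, T_1 = T_1 \, (T_2 \, T_1) = T_1 \, (T_1 \, T_2) = T_1^2 \, T_2 = \lambda \, T_2$ and likewise $T_1 \, U = T_1^2 \, T_2 = \lambda \, T_2$, so $[T_1 , U] = 0$; the same manipulation gives $U \, T_2 = T_2 \, U = \lambda \, T_1$. Note that the value of $\epsilon$ plays no role in this step — only $\lambda$ and the commutativity of $T_1$ and $T_2$ enter — which is why the lemma makes no assumption of time-reversal versus particle-hole type.

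Next, since $\Hil_{\pm}$ are the eigenspaces of $U$ for the eigenvalues $\pm 1$, and these eigenvalues are real, the antilinearity of $T_j$ is no obstruction: for $v \in \Hil_{\pm}$ one has $U \, (T_j v) = T_j \, (U v) = T_j \, (\pm v) = \pm \, T_j v$, so $T_j v \in \Hil_{\pm}$. Hence each $T_j$ leaves $\Hil_+$ and $\Hil_-$ invariant, which is exactly the statement $Q_{\mp} \, T_j \, Q_{\pm} = 0$, i.e.\ $T_j = T_{j,+} \oplus T_{j,-}$ with $T_{j,\pm} = Q_{\pm} \, T_j \, Q_{\pm}$ acting on $\Hil_{\pm}$. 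The block form of $U$ itself is immediate from the definition of $\Hil_{\pm}$ as its $\pm 1$ eigenspaces. Finally, to obtain the relation between the blocks, I would restrict the identity $U \, T_2 = \lambda \, T_1$ to each subspace: on $\Hil_{\pm}$ the operator $U$ acts as $\pm \id$, and since $T_2$ preserves $\Hil_{\pm}$ this yields $\pm \, T_{2,\pm} = \lambda \, T_{1,\pm}$, that is $T_{2,\pm} = \pm \lambda \, T_{1,\pm}$; because $\lambda = \pm 1$ and $T_{1,\pm}^2 = \lambda \id$ this equals $\pm \, T_{1,\pm}^{-1}$, which is precisely \eqref{classification:eqn:relation_U_1_U_2_componentwise}.

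I do not expect a genuine obstacle here. The only point that requires a moment's care is that commuting with $U$ passes through the \emph{antilinear} $T_j$ to the eigenspaces of $U$ precisely because $\sigma(U) \subset \R$, so that no spurious complex conjugation of the eigenvalue appears; once that is noted, everything reduces to the short computations above with the relations $U = T_1 T_2 = T_2 T_1$ and $T_j^2 = \lambda \id$.
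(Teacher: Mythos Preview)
Your proof is correct and follows essentially the same route as the paper: both establish $[T_j,U]=0$ from the commutativity of $T_1$ and $T_2$, deduce block-diagonality with respect to the $\pm 1$ eigenspaces of $U$, and then read off the relation between $T_{1,\pm}$ and $T_{2,\pm}$ by restricting the identity $U=T_1T_2$ (the paper compares $T_{1,\pm}T_{2,\pm}=\pm\id$ directly, while you restrict $U\,T_2=\lambda\,T_1$, which amounts to the same thing). Your explicit remark that the reality of $\sigma(U)\subset\{\pm 1\}$ is what allows the antilinear $T_j$ to preserve the eigenspaces is a nice clarification the paper leaves implicit.
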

The condition $U \neq \pm \id$ is necessary to ensure that the two time-reversal symmetries are \emph{distinct} and the block decomposition meaningful. 
\begin{proof}
	Since $T_1$ and $T_2$ commute with one another by assumption, $U$ also necessarily commutes with $T_j$, 
	\begin{align*}
		U \, T_j &= T_1 \, T_2 \, T_j 
		= T_j \, T_1 \, T_2 
		= T_j \, U
		. 
	\end{align*}
	Consequently, they commute with the spectral projections $Q_{\pm} = \tfrac{1}{2} (\id \pm U)$ as well, and the symmetries are all block-diagonal with respect to $\Hil = \Hil_+ \oplus \Hil_-$. 
	
	Computing $U$ block-wise yields 
	\begin{align*}
		U &= (+\id) \oplus (-\id) 
		= \bigl ( T_{1,+} \, T_{2,+} \bigr ) \oplus \bigl ( T_{1,-} \, T_{2,-} \bigr ) 
		, 
	\end{align*}
	and comparing left- and right-hand side yields equation~\eqref{classification:eqn:relation_U_1_U_2_componentwise}. 	
\end{proof}
The main result is now within reach: 
\begin{theorem}\label{classification:thm:block_decomposition_system}
	Suppose $T_1$ and $T_2$ are two antiunitary symmetries of type $(\epsilon,\lambda)$ that commute up to the phase $\e^{\ii \varphi}$ according to equation~\eqref{classification:eqn:commutation_up_to_phase}. Then the system block decomposes with respect to $\Hil = \Hil_+ \oplus \Hil_-$ (defined as above), namely $H = H_+ \oplus H_-$ and $T_j = T_{j,+} \oplus T_{j,-}$ for $j = 1 , 2$. Moreover, on $\Hil_{\pm}$ one of the symmetries is redundant, because 
	\begin{align*}
		T_{2,\pm} &= \pm \e^{- \ii \frac{\varphi}{2}} \, T_{1,\pm}^{-1} = \pm \lambda \e^{- \ii \frac{\varphi}{2}} \, T_{1,\pm} 
	\end{align*}
	equals $T_{1,\pm}$ up to a phase. 
\end{theorem}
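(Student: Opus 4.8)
The plan is to obtain the theorem as a direct packaging of Lemmas~\ref{classification:lem:equivalence_symmetries}, \ref{classification:lem:dual_symmetric_commutating_symmetries} and \ref{classification:lem:block_diagonalization_symmetries}, since the substantive arguments have already been made there. First I would use Lemma~\ref{classification:lem:dual_symmetric_commutating_symmetries} to trade the given pair $T_1$, $T_2$, which commute only up to the phase $\e^{\ii \varphi}$, for the honestly \emph{commuting} pair $T_1' = T_1$ and $T_2' = \e^{\ii \frac{\varphi}{2}} \, T_2$; by Lemma~\ref{classification:lem:equivalence_symmetries} both are still of type $(\epsilon,\lambda)$. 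Now the hypotheses that precede equations~\eqref{classification:eqn:V_squares_to_1}--\eqref{classification:eqn:V_commutes_with_H} are met, so $U = T_1' \, T_2' = \e^{\ii \frac{\varphi}{2}} \, T_1 \, T_2$ is a unitary with $U = U^* = U^{-1}$ that commutes with $H$. Setting aside the degenerate cases $U = \pm \id$ (where $T_1$ and $T_2$ differ only by a phase and no decomposition is called for), both eigenvalues $\pm 1$ of $U$ occur; the spectral projections $Q_{\pm} = \tfrac{1}{2}(\id \pm U)$ then furnish the decomposition $\Hil = \Hil_+ \oplus \Hil_-$ appearing in the statement.

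From here I would read off the block structure. Because $U$ commutes with $H$ and with $T_1'$, $T_2'$ — the latter being exactly the computation in the proof of Lemma~\ref{classification:lem:block_diagonalization_symmetries} — the projections $Q_{\pm}$ commute with all of $H$, $T_1$, $T_2$. Hence the off-diagonal blocks vanish, $Q_{\pm} \, H \, Q_{\mp} = 0$ and $Q_{\pm} \, T_j \, Q_{\mp} = 0$, giving $H = H_+ \oplus H_-$ and $T_j = T_{j,+} \oplus T_{j,-}$ with $(\cdot)_{\pm} = Q_{\pm} (\cdot) Q_{\pm}$. Applying Lemma~\ref{classification:lem:block_diagonalization_symmetries} to $T_1'$, $T_2'$ yields $T_{2,\pm}' = \pm \, {T_{1,\pm}'}^{-1} = \pm \lambda \, T_{1,\pm}'$. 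Since $T_1' = T_1$ we have $T_{1,\pm}' = T_{1,\pm}$, and since $T_2' = \e^{\ii \frac{\varphi}{2}} \, T_2$ and $Q_{\pm}$ is linear, $T_{2,\pm}' = \e^{\ii \frac{\varphi}{2}} \, T_{2,\pm}$. Substituting and solving for $T_{2,\pm}$ gives
\[
	T_{2,\pm} = \pm \e^{- \ii \frac{\varphi}{2}} \, T_{1,\pm}^{-1} = \pm \lambda \, \e^{- \ii \frac{\varphi}{2}} \, T_{1,\pm}
	,
\]
where the last equality uses that each block inherits ${T_{1,\pm}}^2 = \lambda \, \id$ from $T_1^2 = \lambda \, \id$, so that $T_{1,\pm}^{-1} = \lambda \, T_{1,\pm}$. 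In particular $T_{2,\pm}$ is a scalar multiple of $T_{1,\pm}$, which is the claimed redundancy.

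I do not anticipate a real obstacle: the theorem is essentially bookkeeping on top of the three preceding lemmas. The only places that need a moment's care are (i) keeping track of how the auxiliary phase $\e^{\ii \frac{\varphi}{2}}$ introduced via Lemma~\ref{classification:lem:dual_symmetric_commutating_symmetries} reappears in the $T_{2,\pm}$ block when one passes back to the original (non-commuting) pair, and (ii) noting that the restrictions of $H$ and the $T_j$ to $\Hil_{\pm}$ are well defined precisely because $Q_{\pm}$ commutes with all three operators. It is also worth recording the proviso $U \neq \pm \id$ explicitly, since it is exactly the condition guaranteeing that $T_{1,\pm}$ and $T_{2,\pm}$ are genuinely \emph{one} symmetry (up to a phase) on each helicity sector — the fact that makes the later reduction to a single even time-reversal symmetry per sector, and hence to a sum of two class~AI bundles, legitimate.
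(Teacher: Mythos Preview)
Your proposal is correct and follows essentially the same route as the paper's proof: reduce to the commuting case via Lemma~\ref{classification:lem:dual_symmetric_commutating_symmetries}, invoke Lemma~\ref{classification:lem:block_diagonalization_symmetries} for the block structure and the relation $T_{2,\pm}' = \pm \lambda T_{1,\pm}'$, and then undo the auxiliary phase to recover the stated formula for $T_{2,\pm}$. Your version is in fact a bit more explicit than the paper's (you spell out why $Q_{\pm}$ commutes with $H$ and the $T_j$, and you flag the proviso $U \neq \pm \id$), but the argument is the same.
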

\begin{proof}
	For commuting symmetries where $\e^{\ii \varphi} = 1$ the statement is an immediate consequence of Lemma~\ref{classification:lem:block_diagonalization_symmetries} and Lemma~\ref{classification:lem:equivalence_symmetries}. 
	
	When $\e^{\ii \varphi} \neq 1$ we apply the above argument to $T_1' = T_1$ and $T_2' = \e^{\ii \frac{\varphi}{2}} \, T_2$ as they are equivalent \emph{commuting} symmetries (Lemma~\ref{classification:lem:dual_symmetric_commutating_symmetries}). To translate that back to the generic case, all we need to do is add the phase factor $\e^{- \ii \frac{\varphi}{2}}$, 
	\begin{align*}
		T_2 &= \e^{- \ii \frac{\varphi}{2}} \, T_2' 
		= \bigl ( + \lambda \e^{- \ii \frac{\varphi}{2}} \, T_{1,+} \bigr ) \oplus \bigl ( - \lambda \e^{- \ii \frac{\varphi}{2}} \, T_{1,-} \bigr )
		. 
	\end{align*}
\end{proof}
The classification of Bloch bundles with two even time-reversal symmetries is just a special case: 
\begin{corollary}\label{classification:cor:classification_dual_time_symmetry}
	Suppose $M_+$ is the Maxwell operator for a periodic, dual symmetric gyrotropic medium, and $P_{\mathrm{rel}}(k)$ the projection associated to a family of isolated frequency bands (\ie they satisfy the Gap Condition~\ref{classification:assumption:gap_condition}). Then the associated Bloch bundle 
	\begin{align}
		\mathcal{E}_{\BZ}(P_{\mathrm{rel}}) = \mathcal{E}_+ \oplus \mathcal{E}_- 
		\label{classification:eqn:Bloch_bundle_Maxwell_decomposition}
	\end{align}
	decomposes into the sum of two class~AI vector bundles $\mathcal{E}_{\pm} = \mathcal{E}_{\BZ}(P_{\mathrm{rel},\pm})$, each associated to one helicity component. 
\end{corollary}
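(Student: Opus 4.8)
The plan is to read this corollary off Theorem~\ref{classification:thm:block_decomposition_system} applied fiberwise to the Maxwell operator, and then to invoke the class~AI classification (Theorem~\ref{classification:thm:class_AI_bundles}) on each helicity block. The operative hypothesis is that $M_+$ carries the two even time-reversal symmetries $T_1$ and $T_3$; both are of type $(\epsilon,\lambda) = (+1,+1)$ and, as noted above, they \emph{anti}commute, i.e.\ commute up to the phase $\e^{\ii\varphi}$ with $\varphi = \pi$, so the hypotheses of Theorem~\ref{classification:thm:block_decomposition_system} are met. First I would record the structural facts about the helicity operator: in the dual symmetric case $W = \sigma_0 \otimes w_0 + \sigma_2 \otimes w_2$ commutes with $U_2 = \sigma_2 \otimes \id$ (Proposition~\ref{symmetries:prop:admissible_symmetries}), so $U_2$ is a $\scpro{\,\cdot\,}{\,\cdot\,}_W$-unitary involution commuting with $M_+$ and with each fiber operator $M_+(k)$; consequently $Q_{\pm} = \tfrac12(\id \pm U_2)$ are $\scpro{\,\cdot\,}{\,\cdot\,}_W$-orthogonal projections that commute with the Bloch-Floquet fibration and split $\Hil_+(k) = Q_+[\Hil_+(k)] \oplus Q_-[\Hil_+(k)]$ and $M_+(k) = M_{+,+}(k) \oplus M_{+,-}(k)$.

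Next I would apply Theorem~\ref{classification:thm:block_decomposition_system} with $H = M_+(k)$ and $(T_1,T_2) = (T_1,T_3)$: it yields $T_j = T_{j,+} \oplus T_{j,-}$ for $j = 1,3$ and shows that on each sector $\Hil_{+,\pm}(k)$ the symmetry $T_{3,\pm}$ equals $T_{1,\pm}$ up to a phase, so each block retains precisely one anti-unitary symmetry $T_{1,\pm}$; since $T_1$ is block diagonal with $T_1^2 = \id$, the restriction is still \emph{even}, $T_{1,\pm}^2 = \id$. Because $U_2$ commutes with every $M_+(k)$ it commutes with the spectral projection $P_{\mathrm{rel}}(k)$ onto the relevant bands, hence $P_{\mathrm{rel}}(k) = P_{\mathrm{rel},+}(k) + P_{\mathrm{rel},-}(k)$ with $P_{\mathrm{rel},\pm}(k) := Q_{\pm} \, P_{\mathrm{rel}}(k)$. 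I would then check that $k \mapsto P_{\mathrm{rel},\pm}(k)$ inherits analyticity over $\BZ$ from $P_{\mathrm{rel}}(k)$ (as $Q_{\pm}$ is independent of $k$) and that the Gap Condition~\ref{classification:assumption:gap_condition} descends, since $\sigma\bigl(M_{+,\pm}(k)\bigr) \subseteq \sigma\bigl(M_+(k)\bigr)$ and the portion of $\specrel(k)$ lying in a given sector is separated from the rest of that sector's spectrum by at least the global gap. Thus $\mathcal{E}_{\pm} := \mathcal{E}_{\BZ}(P_{\mathrm{rel},\pm})$ is a genuine vector bundle over $\BZ$, and restricting the identity $T_1 \, P_{\mathrm{rel}}(k) = P_{\mathrm{rel}}(-k) \, T_1$ (valid because $T_1$ is an even time-reversal symmetry of $M_+$) to the block gives $T_{1,\pm} \, P_{\mathrm{rel},\pm}(k) = P_{\mathrm{rel},\pm}(-k) \, T_{1,\pm}$, which is exactly condition~\eqref{classification:eqn:triviality_condition_projection}; hence $\mathcal{E}_{\pm}$ is a class~AI bundle.

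To finish, the $\scpro{\,\cdot\,}{\,\cdot\,}_W$-orthogonal fiberwise decomposition $\Hil_{\mathrm{rel}}(k) = \ran P_{\mathrm{rel},+}(k) \oplus \ran P_{\mathrm{rel},-}(k)$ assembles into a Whitney sum, giving $\mathcal{E}_{\BZ}(P_{\mathrm{rel}}) = \mathcal{E}_+ \oplus \mathcal{E}_-$, which is the claim. I expect no genuinely new obstacle here, since the representation-theoretic core is already packaged in Theorem~\ref{classification:thm:block_decomposition_system}; the points that need real care are bookkeeping ones, namely that $Q_{\pm}$ are orthogonal projections for the \emph{weighted} inner product (not merely the unweighted one) and commute with the \emph{fibered} operators, and that analyticity of $k \mapsto P_{\mathrm{rel},\pm}(k)$ together with the Gap Condition survives the restriction to each helicity sector — this is where I would be most careful.
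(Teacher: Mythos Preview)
Your proposal is correct and follows essentially the same route as the paper: apply Theorem~\ref{classification:thm:block_decomposition_system} to $M_+$ with the anticommuting pair $(T_1,T_3)$ (phase $\e^{\ii\pi}$), obtain the helicity decomposition via $U_2 = \sigma_2 \otimes \id$ and the $k$-independent projections $Q_\pm$, and observe that on each block the two time-reversal symmetries collapse to one even one, so $\mathcal{E}_\pm$ is class~AI. Your additional bookkeeping (weighted orthogonality of $Q_\pm$, analyticity of $P_{\mathrm{rel},\pm}(k)$, persistence of the Gap Condition) is more explicit than the paper's proof but not a different argument.
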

\begin{proof}
	First of all, $M_+$ is a selfadjoint operator furnished with two even time-reversal symmetries that anticommute (\ie the phase is $-1 = \e^{\ii \pi}$). Hence, we are in the setting of Theorem~\ref{classification:thm:block_decomposition_system}. Instead of $T_3 = (\sigma_3 \otimes \id) \, C$ we may use $\ii T_3$ so that it commutes with $T_1 = (\sigma_1 \otimes \id) \, C$ and their product yields $U = T_1 \, \ii T_3 = \sigma_2 \otimes \id$. The eigenspaces of $U$ correspond to electromagnetic fields with right- ($+1$) and left-handed ($-1$) helicities. Therefore, $M_+ = M_{+,+} \oplus M_{+,-}$ and $P_{\mathrm{rel}}(k) = P_{\mathrm{rel},+}(k) \oplus P_{\mathrm{rel},-}(k)$ as well as the two time-reversal symmetries decompose into helicity components. The components of the projection $P_{\mathrm{rel},\pm}(k) = Q_{\pm} \, P_{\mathrm{rel}}(k) \, Q_{\pm}$ are defined analogously to $H_{\pm}$; the projections $Q_{\pm} = \tfrac{1}{2} (\id + U)$ do not depend on crystal momentum because $U$ is independent of $x$ and $- \ii \nabla$. 
	
	Setting $\mathcal{E}_{\pm} = \mathcal{E}_{\BZ}(P_{\mathrm{rel},\pm})$, we see that the Bloch bundle splits into helicity components as given in equation~\eqref{classification:eqn:Bloch_bundle_Maxwell_decomposition}. On each helicity component, the two symmetries $T_{1,\pm}$ and $T_{3,\pm} = \pm \e^{- \ii \frac{\pi}{2}} \, T_{1,\pm}$ are equivalent. Thus, the vector bundles $\mathcal{E}_{\pm}$ are endowed with only \emph{one} even time-reversal symmetry, and we may consider them as class~AI vector bundles. 
\end{proof}
Note that analogous statements hold for when $H$ has two odd time-reversal or two (even or odd) particle-hole symmetries. In those cases, the Bloch bundle splits into two class~AII, class~D or class~C bundles, respectively. However, these cases are not relevant in our analysis of photonic crystals. 
\begin{proof}[Theorem~\ref{classification:thm:dual_symmetric_materials}]
	\begin{enumerate}[(1)]
		\item The fact that the frequency bands for right- and left-handed circularly polarized Bloch waves always have the same multiplicity stems from the fact that the degeneracy of $+1$ and $-1$ of $\sigma_n$ are the same. Therefore, the ranks of the right-handed and left-handed circularly polarized sub bundles $\mathcal{E}_{\pm}$ are always the same. 
		\item Corollary~\ref{classification:cor:classification_dual_time_symmetry} tells us that the Bloch bundle can be seen as the sum of two class~AI bundles. In dimension $d = 1 , 2 , 3$ vector bundles over the torus where the time-reversal symmetry relates fibers at $\pm k$ are trivial (Theorem~\ref{classification:thm:class_AI_bundles}). 
		\item This again follows from Corollary~\ref{classification:cor:classification_dual_time_symmetry} and the fact that class~AI bundles are distinguished topologically by the second Chern number of which there are two (Theorem~\ref{classification:thm:class_AI_bundles}). 
	\end{enumerate}
\end{proof}
%
\section{Conclusion and future developments} 
\label{conclusion}
We close this work by contrasting and comparing our results to the literature and sketch what avenues we would like to explore in the future. Our main aim in this paper was to make precise what “topological photonic crystal” means, and how to differentiate between topologically distinct types of electromagnetic media. Our first principles approach is very different from most attempts in the literature where the keyword “topological” is tacked onto a lot of physical effects even if the link to topology is not made explicit. We emphasize that phenomenological similarities such as a locking of spin and propagation direction are \emph{not conclusive evidence of a topological origin.} Instead, a direct causal link should be established between physical effects and the topology of a mathematical object.

\subsection{Topological effects due to material symmetries} 
\label{conclusion:summary}
We framed this article with two specific questions, and we owe it to the reader to provide the answers we have promised in the introduction. The first concerns the similarities between the Quantum Hall Effect in photonic crystals and condensed matter physics.

\subsubsection{Haldane's Quantum Hall Effect for light} 
\label{conclusion:comparison:QHE}
We can only give a partial answer: Gyrotropic media belong to the same topological class as quantum systems exhibiting the Quantum Hall Effect — in both cases an even time-reversal symmetry is broken so that they both belong to \emph{class~A}. That is consistent with Haldane's conjecture, and we intend to give a complete derivation in a future work. 

Our analysis provides a number of new insights: first of all, the relevant symmetry that is broken is the even time-reversal symmetry $T_3$ rather than complex conjugation $C$ as is argued in some of the literature (see \eg \cite{Raghu_Haldane:quantum_Hall_effect_photonic_crystals:2008,Wu_Hu:topological_photonic_crystal_with_time_reversal_symmetry:2015}). In fact, complex conjugation symmetry of Maxwell's equations can \emph{never} be broken since one of the tenets of classical electromagnetism is that fields $(\mathbf{E},\mathbf{H})$ are necessarily real; this reality constraint is preserved in the correct complexified Maxwell equations (\cf \cite[Section~2.2]{DeNittis_Lein:ray_optics_photonic_crystals:2014}) where \emph{$C$ acts as an even particle-hole symmetry}. Such subtle distinctions are essential for a topological classification, because it is crucial we correctly identify the nature of the relevant symmetries. 

The bulk classification of gyrotropic photonic crystals as class~A topological insulators made here is an important first step towards establishing Haldane's Photonic Bulk-Edge Correspondence conjecture. At first glance, adapting derivations of the Quantum Hall Effect may now seem completely straightforward and further research unnecessary: there is a wealth of literature (\eg \cite{Thouless_Kohmoto_Nightingale_Den_Nijs:quantized_hall_conductance:1982,Bellissard_van_Elst_Schulz_Baldes:noncommutative_geometry_quantum_hall_effect:1994,Kellendonk_Richter_Schulz-Baldes:edge_currents_Chern_numbers_quantum_Hall:2002,Kellendonk_Schulz-Baldes:quantization_edge_currents:2004,Kellendonk_Schulz-Baldes:boundary_maps_Cstar_quantum_Hall:2004,Prodan_Schulz_Baldes:complex_topological_insulators:2016}) with different approaches to deriving bulk-edge correspondences that generalize the early works of Hatsugai \cite{Hatsugai:Chern_number_edge_states:1993,Hatsugai:edge_states_Riemann_surface:1993}. 

However, there are important mathematical \emph{and} physical differences between the quantum system and its electromagnetic analog, and these differences are not mere technical footnotes but essential. Taking these differences into account was already important to properly understand the analogy between semiclassical limits and the derivation of ray optics equations \cite[Section~5]{DeNittis_Lein:ray_optics_photonic_crystals:2014}. One of the issues that was discussed there was the \emph{form of “typical” states}: in solid state physics, the relevant states are perturbations of the Fermi projection where all states up to the Fermi energy $E_{\mathrm{F}}$ are completely filled; in the context of topological insulators, the Fermi energy is usually assumed to lie in a spectral gap or, more generally, a zone of dynamical localization. Electromagnetic waves, though, are not fermions, so there is no physical principle that forbids us to arbitrarily populate bands. Instead, states are often excited by a laser, and therefore, states are peaked around some wave vector $k_0$ and frequency $\omega_0$, \ie wave packet states. Another method is to use an antenna to excite a given frequency in multiple directions at the same time. Nevertheless, it stands to reason that the analog of the “Fermi projection” may enter the \emph{derivation} of this photonic bulk-edge correspondence as an \emph{auxiliary object} (as opposed to being interpreted as the physical state of the system). 

A second difference is that there does not seem to be a photonic “bulk observable” and only the edge observable “net number of boundary states” (right moving vs.\ left-moving) enters equations~\eqref{intro:eqn:potential_symmetries} so that we need to prove the following two equalities 
\begin{align*}
	\mbox{signed $\sharp$ edge modes} = C_{1,\mathrm{edge}} = C_{1,\mathrm{bulk}} 
	. 
\end{align*}
A prerequisite for proving this photonic bulk-edge correspondence is to be able to compute the Chern number of the “photonic Fermi projection” (again, seen only as an auxiliary object). Unfortunately, our definition of Bloch bundle from Section~\ref{classification:Bloch_bundle} does not apply without modification. We had to exclude the so-called ground state bands, \ie the bands which have approximately linear dispersion near $k = 0$ and $\omega = 0$; these necessarily exist in \emph{any} periodic medium \cite[Theorem~1.4~(iii)]{DeNittis_Lein:adiabatic_periodic_Maxwell_PsiDO:2013}. If the relevant bands include the ground state bands, then the dimension $\mathrm{dim} \, \Hil_{\mathrm{rel}}(0) = \mathrm{dim} \, \Hil_{\mathrm{rel}}(k) + 2$ is larger by $2$ than for $k \neq 0$  as the transversality condition  
\begin{align*}
	\left (
	\begin{matrix}
		(- \ii \nabla + k) \cdot \psi^E(k) \\
		(- \ii \nabla + k) \cdot \psi^H(k) \\
	\end{matrix}
	\right ) = 
	\left (
	\begin{matrix}
		0 \\
		0 \\
	\end{matrix}
	\right )
\end{align*}
degenerates there; this is further explained in \cite[Sections~3.2–3.3]{DeNittis_Lein:adiabatic_periodic_Maxwell_PsiDO:2013}. We intend to revisit this question in a future work. 

A third, perhaps more mathematical question concerns how to create interfaces (edges and surfaces) in the first place, and one can choose from at least three different options: 
\begin{enumerate}[(1)]
	\item We can sandwich two different photonic crystals \cite{Khanikaev_et_al:photonic_topological_insulators:2013,Wu_Hu:topological_photonic_crystal_with_time_reversal_symmetry:2015} whose relevant photonic band gaps overlap. 
	\item We can terminate the photonic crystal with a metal \cite{Wang_et_al:unidirectional_backscattering_photonic_crystal:2009}; the metal is typically modeled as a perfect electric conductor and enters as boundary conditions imposed on solutions to Maxwell's equations. 
	\item In principle, we could also choose to terminate with a perfect \emph{magnetic} conductor, which translates to a different set of boundary conditions. 
\end{enumerate}
While in spirit this closely resembles the quantum case, we may obtain \emph{different} photonic bulk-edge correspondences for some or all of these choices. Such differences already appear when considering interfaces between \emph{homogeneous} (as opposed to periodic) electromagnetic media, where the existence and polarization of surface modes depends on the details \cite[Supplementary Material]{Bliokh_Smirnova_Nori:spin_orbit_light:2015}. Therefore, it is not a foregone conclusion that the form of the photonic bulk-edge correspondences is independent of how boundaries are created. 

\subsubsection{Absence of novel topological effects in $d \leq 3$ due to material symmetries} 
\label{conclusion:summary:absence_of_novel_topological_effects}
The second question concerned the existence of novel topological effects, \ie other than the Quantum Hall Effect for light. When we began our investigation, we were hoping to find that certain media belong to topological classes that support topological invariants other than Chern numbers, which in turn would indicate the existence of other topological effects. Unfortunately, this is not the case: barring additional (isospin) symmetries, our finding here is that apart from gyrotropic materials, which are in the same topological class as the Quantum Hall Effect, the other three types of media do not support as-of-yet unknown topological effects in $d \leq 3$; for three-dimensional photonic crystals with periodic time-dependence, there could be topological effects related to the second Chern number — independently of which of the four topological classes the material belongs to. While that does not exclude topological effects due to crystallographic symmetries (\eg \cite{Choi_et_al:Zak_phase_quarter_wave_plates:2016,Fang_Gilbert_Bernevig:topological_insulators_point_group_symmetries:2012,Alexandradinata_Fang_Gilbert_Bernevig:spin_orbit_free_topological_insulators_without_TRS:2014,Chen_et_al:topological_photonic_crystals_crystallographic_symmetries:2015}), those will have to be considered separately. The case of time-reversal symmetric media with \emph{additional} (linear, commuting) symmetries reduces either to class~AI (topologically trivial) or the Quantum Hall Class, class~A; we will discuss both of these in more detail next. 

\subsection{Comparison with the literature} 
\label{conclusion:comparison}
Ever since Raghu and Haldane's first proposed topological phenomena in periodic electromagnetic media \cite{Raghu_Haldane:quantum_Hall_effect_photonic_crystals:2008}, there has been a growing body of work on the subject; for recent reviews we point to \cite{Lu_Joannopoulos_Soljacic:topological_photonics:2014,Lu_Joannopoulos_Soljacic:topological_photonics:2016}. To keep the discussion brief, we will focus on a few select publications that are directly related to the core of this work and representative for a number of others.

\subsubsection{Haldane's Quantum Hall Effect for light} 
\label{conclusion:comparison:QHE}
While the experimental confirmation \cite{Wang_et_al:unidirectional_backscattering_photonic_crystal:2009} of Quantum Hall Effect for light settled the question that topological phenomena exist, very little effort was made to probe the \emph{quantitative} validity of Haldane's Photonic Bulk-Edge Correspondence and derive it from first principles.

\paragraph{Ray optics equations} 
\label{par:ray_optics_equations}
Raghu and Haldane based their arguments on \emph{postulating} ray optics equations which included an “anomalous velocity term”; however, the form of the sub-leading terms was a topic of discussion, one that was settled only recently with our rigorous work \cite{DeNittis_Lein:ray_optics_photonic_crystals:2014} (see \cite[Section~5.2]{DeNittis_Lein:ray_optics_photonic_crystals:2014} for an in-depth discussion). Unfortunately, it is not possible to derive bulk-boundary correspondences purely on the basis of ray optics equations — not only because those govern the light dynamics in \emph{the bulk}, but also because the semiclassical arguments with which one may show the quantization of the transverse conductivity (see \eg \cite[Section~1]{PST:effective_dynamics_Bloch:2003}) do \emph{not} generalize to electromagnetism due to the fundamental differences between both physical theories (\cf the discussion in \cite[Section~5.1]{DeNittis_Lein:ray_optics_photonic_crystals:2014}). 

\paragraph{Justification for effective “Hamiltonians”} 
\label{par:justification_for_effective_hamiltonians}
A great deal of theoretical works on topological phenomena in photonic crystals argue in two steps: first, a system is identified which has the desired features in its frequency band diagram (in the simplest case a photonic band or a conical intersection). Then based on the dispersion of the frequency bands of interest, a simpler, effective Hamiltonian is \emph{postulated} that shares the same essential features in its band spectrum. This \emph{ad hoc} procedure is performed without making any reference to the dynamical problem – which has been falsely considered to be well-understood by analogy to the Bloch electron’s quantum dynamics. 

The reason for this is that in a great number of cases, physicists study the \emph{second}-order Maxwell equations obtained by “squaring” equation~\eqref{Schroedinger:eqn:Maxwell_equations}: that is because in media where the bianisotropic tensor $\chi = 0$ vanishes, the second-order eigenvalue problem for electric and magnetic field can be solved separately, \eg 
\begin{align*}
	\eps^{-1} \, \nabla \times \bigl ( \mu^{-1} \nabla \times \varphi_n^E(k) \bigr ) = \omega_n(k)^2 \, \varphi_n^E(k) 
\end{align*}
for the electric Bloch functions and an analogous equation for the magnetic Bloch functions.\footnote{We will ignore the problem of properly defining this equation when $\epsilon$ or $\mu$ is hermitian instead of real-symmetric; because we lose the information on the sign and we are no longer able to implement the restriction to positive frequencies in a straightforward fashion.} The “effective Hamiltonian” then supposedly approximates the physics of the full equations associated to $M^2_{EE} = \eps^{-1} \, \nabla^{\times} \, \mu^{-1} \nabla^{\times}$ (see \eg \cite{Wu_Hu:topological_photonic_crystal_with_time_reversal_symmetry:2015}). 

However, things are not as simple as they appear. In the absence of sources the dynamical equation for the electric field is the wave equation 
\begin{align*}
	\partial_t^2 \psi^E(t) + M^2_{EE} \psi^E(t) &= 0 
	, 
\end{align*}
and as it is second order in time, we not only need to specify $\psi^E(t_0)$ but also the time derivative $\partial_t \psi^E(t_0) = \eps^{-1} \, \nabla \times \psi^H(t_0)$, which is determined through the \emph{magnetic} field at initial time $t_0$. Consequently, the solution \emph{cannot} be of the form $\e^{- \ii (t - t_0) M^2_{EE}} \psi^E(t_0)$, which would allow us to replace the “evolution group” $\e^{- \ii t M^2_{EE}}$ by the effective “evolution” $\e^{- \ii t H_{\mathrm{eff}}}$. 

Again, this problem disappears if we stick to the proper Schrödinger formalism where the evolution equation is first-order in time and is mathematically of the form of a Schrödinger equation, namely 
\begin{align*}
	\ii \partial_t \Psi(t) = M \Psi(t)
	,
	&& 
	\Psi(t_0) = \Phi
	,
\end{align*}
where $M = W^{-1} \, \Rot \big \vert_{\omega \geq 0}$ is selfadjoint (hermitian). Now suppose we are given a closed subspace $\Hil_{\mathrm{rel}}$ spanned by states which we deem relevant, and that this subspace is left invariant by the dynamics; one common example would be states from a given finite frequency range. That means if we start with $\Phi \in \Hil_{\mathrm{rel}}$ then the time-evolved state $\Psi(t) = \e^{- \ii t M} \Phi \in \Hil_{\mathrm{rel}}$ remains in the relevant subspace. This translates to $M \, P_{\mathrm{rel}} = P \, M_{\mathrm{rel}}$ for the orthogonal projection $P_{\mathrm{rel}}$ onto $\Hil_{\mathrm{rel}}$. Now if you can approximate $M \, P_{\mathrm{rel}} \approx M_{\mathrm{eff}} \, P_{\mathrm{rel}}$ by some effective Maxwell operator $M_{\mathrm{rel}}$ for states from $\Hil_{\mathrm{rel}}$, then a Duhamel argument 
\begin{align}
	\e^{- \ii t M} \, P_{\mathrm{eff}} - \e^{- \ii t M_{\mathrm{eff}}} \, P_{\mathrm{eff}} &= 
	\int_0^t \dd s \, \frac{\dd }{\dd s} \Bigl ( \e^{- \ii t M} \, \e^{- \ii (t - s) M_{\mathrm{eff}}} \Bigr ) \, P_{\mathrm{eff}} 
	\notag \\
	&\approx \int_0^t \dd s \,  \e^{- \ii t M} \, \bigl ( M - M_{\mathrm{eff}} \bigr ) \, P_{\mathrm{eff}} \, \e^{- \ii (t - s) M_{\mathrm{eff}}}
	\approx 0 
	\label{conclusion:eqn:Duhamel_argument}
\end{align}
yields that the effective dynamics remain close to the full dynamics, $\e^{- \ii t M_{\mathrm{eff}}} \, P_{\mathrm{eff}} \approx \e^{- \ii t M} \, P_{\mathrm{eff}}$. Evidently, this argument (which we have made rigorous in \cite{DeNittis_Lein:sapt_photonic_crystals:2013}) crucially relies on the fact that we deal with first-order equations in time. 

If $V$ is a commuting or anticommuting linear or antilinear symmetry of $M$ that leaves $\Hil_{\mathrm{rel}}$ invariant (\ie $[V , P_{\mathrm{rel}}] = 0$), then $M_{\mathrm{eff}}$ necessarily inherits $V$ as at least an approximate symmetry, 
\begin{align*}
	V \, M_{\mathrm{eff}} \, P_{\mathrm{eff}} &\approx V \, M \, P_{\mathrm{eff}}
	= M \, P_{\mathrm{eff}} \, V 
	\approx M_{\mathrm{eff}} \, V \, P_{\mathrm{eff}} 
	. 
\end{align*}
Of course, we can also make this argument in reverse: any symmetry of the effective Maxwell operator is at least an approximate symmetry of the original, full Maxwell operator $M$. This immediately disqualifies effective “Hamiltonians” that possess an odd time-reversal symmetry due to the choice of material. 

\paragraph{Electric vs.\ magnetic vs.\ electromagnetic Chern numbers} 
\label{par:electric_vs_magnetic_vs_electromagnetic_chern_numbers}
The second-order formalism suggests that knowledge of the electric part of the Bloch functions suffices for all of our subsequent arguments. Indeed, many works (\eg \cite{Wang_et_al:edge_modes_photonic_crystal:2008}) compute the Chern numbers based only on the electric field. That is because to any given family of relevant frequency bands we can associated an \emph{electric} Bloch vector bundle $\mathcal{E}_{\BZ}^E$: just as described in Section~\ref{classification:Bloch_bundle} we glue together the electric subspaces $\Hil_{\mathrm{rel}}^E(k) = \mathrm{span} \bigl \{ \varphi_n^E(k) \bigr \}_{n \in \mathcal{I}}$ spanned by the \emph{electric} parts of the relevant Bloch functions. Such a bundle is then characterized up to continuous deformations by \emph{electric} (first and second) Chern numbers $C_j^E$, $j = 1 , 2$. 

Of course, we could base our arguments off of the magnetic field and obtain a \emph{magnetic} Bloch vector bundle $\mathcal{E}_{\BZ}^H$, characterized by \emph{magnetic} Chern numbers $C_j^H$, $j = 1 , 2$. That is all in addition to the \emph{electromagnetic} Bloch bundle $\mathcal{E}_{\BZ} = \mathcal{E}_{\BZ}^{EH}$ constructed in Section~\ref{classification:Bloch_bundle} and electromagnetic Chern numbers $C_j^{EH}$. \emph{A priori} there is no reason to believe that there necessarily exist simple relations between them, \eg we do \emph{not} know whether they sum up, $C_j^{EH} = C_j^E + C_j^H$, or whether $C_j^{EH} \neq 0$ implies $C_j^E \neq 0$ and $C_j^H \neq 0$ — or vice versa. Any such relations need to be verified at the very least by example, preferable through a mathematical proof. 

While the first-order formalism clearly singles out electromagnetic Chern number, from the vantage point of the second-order formalism it seems as if we have three choices. From a mathematical perspective, $C^E_j$ are collections of Chern numbers and as such integers, but it is no longer clear what physical significance they hold. Even though we do not know whether $C^E_1 \neq 0$ implies that also the electromagnetic Chern numbers $C^{EH}_1 \neq 0$ are non-zero, it is not unreasonable to use $C^E_1 \neq 0$ as a way to \emph{qualitatively} decide whether unidirectional boundary modes exist. However, we cannot expect this to translate to the \emph{quantitative} prediction Haldane has made (Conjecture~\ref{intro:conjecture:photonic_bulk_edge_correspondence}). 

\subsubsection{No analog of the Quantum Spin Hall Effect exists} 
\label{conclusion:comparison:QSHE}
There are several distinct effects which claim to be photonic analogs of the “Quantum Spin Hall Effect”, because the propagation direction of a boundary mode is locked to its spin or isospin degree of freedom (\eg \cite{Khanikaev_et_al:photonic_topological_insulators:2013,Chen_et_al:topological_photonic_crystals_crystallographic_symmetries:2015,Bliokh_Smirnova_Nori:spin_orbit_light:2015,Wu_Hu:topological_photonic_crystal_with_time_reversal_symmetry:2015}). The naming at the very least suggests that topology is at the heart of this spin-momentum locking. 

In the context of quantum solid state physics, the Quantum Spin Hall Effect \cite{Kane_Mele:Z2_ordering_spin_quantum_Hall_effect:2005} has a very precise meaning: the system is of \emph{class~AII}, meaning it possesses an \emph{odd} time-reversal symmetry, and the presence of this symmetry is the immediate cause for the spin-momentum locking. However, as we have discovered here electromagnetic media do not support \emph{odd} time-reversal symmetries — only even ones are admissible (\cf Proposition~\ref{symmetries:prop:admissible_symmetries}). Thus, in a topological sense, these phenomena are \emph{not} photonic analogs of the Quantum Spin Hall Effect. Nevertheless, of the four works mentioned, one of them \cite{Wu_Hu:topological_photonic_crystal_with_time_reversal_symmetry:2015} actually \emph{does} describe a topological effect.

\paragraph{Bliokh et al's “Quantum Spin Hall Effect for light”} 
\label{par:bliokh_et_al_s_quantum_spin_hall_effect}
Interfaces between electromagnetic media such as between a dielectric and a metal (so that the sign of $\eps$ flips) can support surface modes which are localized to the vicinity of the surface layer. Such surface waves are also known as surface plasmon-polaritons. 
For simplicity, let us focus on interfaces between two \emph{homogeneous} media, as that situation admits an explicit analytical solution.
Bliokh, Smirnova and Nori \cite{Bliokh_Smirnova_Nori:spin_orbit_light:2015} gave a very concise and elegant explanation for the following effect: if these surface waves are excited with circularly polarized light, then the propagation direction of the surface wave is locked to the handedness of the incoming light. To be clear, despite calling this an analog of the “Quantum Spin Hall effect” Bliokh et al did \emph{not} claim that this effect is of topological origin. Instead, spin-momentum locking here is due to the \emph{transversality constraint} and the fact that \emph{surface modes come in a single, fixed polarization} — they are necessarily transverse magnetic waves. 

The localization of the surface wave to the interface means that both, the local wave vector $k = (k_x , \ii k_z)$ and the polarization vector $\Psi_{\mathrm{surf}}(k) = \bigl ( \psi_{\mathrm{surf}}^E(k) \, , \, \psi_{\mathrm{surf}}^H(k) \bigr )$ are necessarily complex (\cf \cite[equation~(4)]{Bliokh_Smirnova_Nori:spin_orbit_light:2015}), for otherwise the transversality constraint 
\begin{align*}
	k \cdot \psi^E_{\mathrm{surf}}(k) = 0 = k \cdot \psi^H_{\mathrm{surf}}(k) 
\end{align*}
could not be satisfied. The fact that these polarization vectors are complex forces the electric field into a rotation in the plane of propagation, which, in turn, gives rise to a spin angular momentum \emph{transverse} to the plane. And the sense of rotation, \ie the sign of the transverse spin is locked to the propagation direction of the surface wave (\cf \cite[Figure~3A]{Bliokh_Smirnova_Nori:spin_orbit_light:2015}); time-reversal symmetry, which flips spin and and the in-plane momentum $k_x$, relates these two counterpropagating waves. 

Because the incoming circularly polarized light can only excite surface modes whose sense of rotation matches its own, the sense of rotation dictates which of the two counter-propagating surface modes will be excited — and therefore the propagation direction. This robust mechanism, which is responsible for spin-momentum locking, is at the heart of a broad range of phenomena \cite{Bliokh_et_al:spin_orbit_light:2015,Hatano_et_al:photo_rectification_effect:2009,Kurosawa_et_al:photo_rectification_effect:2012}. However, this is not a topological effect, in particular not an analog of the Quantum Spin Hall Effect in the topological sense. 

\paragraph{Time-reversal symmetric media with isospin symmetry} 
\label{par:time_reversal_symmetric_media_with_isospin_symmetry}
Manufacturing electromagnetic media for which time-reversal symmetry is broken at optical frequencies is difficult. To circumvent that difficulty several researchers had had the idea to \emph{add} an isospin symmetry to a medium \emph{with} time-reversal symmetry. One way to do that is by introducing a crystallographic symmetry: Wu and Hu \cite{Wu_Hu:topological_photonic_crystal_with_time_reversal_symmetry:2015} proposed to arrange identical rods made of a dielectric in a hexagonal lattice and sandwiched between two metal plates. To be specific, the electric permittivity $\eps(x) = \eps_{\mathrm{d}}(x) \; \id$ is a real scalar, the magnetic permeability $\mu(x) = \mu_0$ equals the vacuum value and the bianisotropic tensor $\chi = 0$ is absent; were it not for the additional isospin symmetry, this material would be of class~AI. Instead, it possesses \emph{three} symmetries, namely the \emph{even} time-reversal symmetry $T_3$, the (unitary, commuting) isospin symmetry as well as their product. 

Provided the parameters are chosen correctly, such photonic crystals have a photonic band gap for TM modes and there are edge modes at the boundary (\cf \cite[Figure~3]{Wu_Hu:topological_photonic_crystal_with_time_reversal_symmetry:2015}). Due to the presence of the time-reversal symmetry $T_3$ these boundary modes come in \emph{pairs} that are mirror symmetric with respect to reflection $k_x \mapsto -k_x$; here, $k_x$ is the Bloch momentum associated to the remaining periodic direction parallel to the edge. Moreover, each of these boundary modes have a \emph{definite pseudospin}. This leads to a locking between pseudospin and propagation direction: according to the band diagram \cite[Figure~5]{Wu_Hu:topological_photonic_crystal_with_time_reversal_symmetry:2015} the boundary modes associated to isospin $\uparrow$ has a strictly positive group velocity ($\partial_x \omega_{\mathrm{edge},\uparrow} > 0$) while its symmetric partner $\downarrow$ has strictly negative group velocity ($\partial_x \omega_{\mathrm{edge},\downarrow} < 0$). Moreover, they propose that this is described by an effective Hamiltonian with an \emph{odd} time-reversal symmetry. Therefore, Wu and Hu incorrectly argue that this is an analogue to the Quantum \emph{Spin} Hall Effect and the system possesses a $\Z_2$-valued invariant. 

It is worthwhile to outline their argument in order to trace the mistakes in their analysis: first of all, they employ the second-order formalism which is not suited for the symmetry classification (\cf \cite[Section~3]{DeNittis_Lein:symmetries_Maxwell:2014}) as it becomes impossible to distinguish between commuting and anticommuting symmetries of the auxiliary Maxwell operator $\Maux_+ = \Maux_-$. Wu and Hu erroneously identify complex conjugation as being of time-reversal type,  
\begin{align*}
	C \, \bigl ( \Maux_+ \bigr )^2 \, C = (-1)^2 \, \bigl ( \Maux_+ \bigr )^2 = \bigl ( \Maux_+ \bigr )^2 
\end{align*}
whereas $C$ actually \emph{anti}commutes with $\Maux_+$. In fact, because the real-valuedness of electromagnetic fields is one of the tenets of electromagnetism, the equations describing electromagnetic waves in media can \emph{never} break complex conjugation symmetry. Therefore, this unbreakable symmetry is not relevant for the topological classification. 

Not only this point, but also the next item illustrates why the second-order formalism is unsuitable for making quantum-wave analogies rigorous: The authors then note the (avoided) conical intersection at $k = 0$ and propose an effective operator for those which possesses an odd time-reversal symmetry. Because this effective $4$-band operator \cite[Supplementary Material, equations~(S28)–(S29)]{Wu_Hu:topological_photonic_crystal_with_time_reversal_symmetry:2015} supposedly approximates $\bigl ( \Maux_+ \bigr )^2$, it is quadratic instead of linear in $k$. We caution against making such \emph{ad hoc} arguments without making any reference to the dynamical problem – the reasoning to replace a Hamiltonian with an effective one \emph{crucially} relies on the equations of motion being \emph{first order in time} (\eg via a Duhamel argument akin to equation~\eqref{conclusion:eqn:Duhamel_argument}). Without having established a direct link between the “effective Hamiltonian” and Maxwell's equations, the presence of an odd time-reversal symmetry for the “effective Hamiltonian” does not imply that the original equations, Maxwell's equations, sport such a symmetry as well. 

Then the frequency bands and their Chern numbers are computed; two of these bands have Chern numbers $0$ while the other two have Chern numbers $\pm 1$. The $\Z_2$-invariant is the Chern number mod $2$. Note that due to the presence of the even time-reversal symmetry \emph{Chern numbers come in pairs of equal magnitude and opposite sign.} 
\medskip

\noindent
We can give a simple and systematic explanation of Wu's and Hu's finding, the locking of spin and momentum in the boundary modes, by applying the topological classification tools of Section~\ref{classification}. In particular, the presence of topologically protected boundary modes does not contradict our main classification result (Theorem~\ref{intro:thm:bulk_classification}). 

The system Wu and Hu studied exemplifies why assuming the absence of additional unitary, commuting symmetries (Assumption~\ref{intro:assumption:no_additional_symmetries}) is absolutely crucial for the physics and not a mathematical footnote. The existence of a unitary, commuting (isospin) symmetry means we can decompose the Maxwell operator 
\begin{align*}
	M(k) = M_{\uparrow}(k) \oplus M_{\downarrow}(k)
	= \left (
	\begin{matrix}
		M_{\uparrow}(k) & 0 \\
		0 & M_{\downarrow}(k) \\
	\end{matrix}
	\right )
\end{align*}
into block operators $M_{\uparrow/\downarrow}$ that act on the isospin $\uparrow$/$\downarrow$ subspaces. Now two things may happen: \emph{either the time-reversal symmetry is block-diagonal or it is not.} The block operators for dual symmetric media retain a time-reversal symmetry whereas for the system that Wu and Hu consider, it turns out to be broken as we will explain below. Put another way, we are dealing with a “2 $\times$ class~A” system (one for each isospin eigenstate). The topology of such a system is completely determined by the Chern numbers of the isospin-$\uparrow$ bands; again, thanks to the even time-reversal symmetry $T_3$ the Chern numbers of the isospin-$\downarrow$ bands are necessarily equal in magnitude but have opposite sign compared to their symmetric isospin-$\uparrow$ partners. 

The simplest way to see this is by observing that akin to \cite[Supplementary Material, equation~(S11)]{Wu_Hu:topological_photonic_crystal_with_time_reversal_symmetry:2015} time-reversal symmetry \emph{flips} isospins, and is therefore completely block-\emph{off}diagonal in the isospin basis. Retracing these arguments on the level of Maxwell's equations with the correct symmetries is straightforward but lengthy — and fortunately for us unnecessary. 

That is because breaking of time-reversal symmetry can be deduced solely from the band picture obtained by Wu and Hu (\cf \cite[Figure~5]{Wu_Hu:topological_photonic_crystal_with_time_reversal_symmetry:2015}): if the time-reversal symmetry were block-diagonal, then $M_{\uparrow/\downarrow}$ would inherit a time-reversal symmetry. Therefore, isospin-$\uparrow$ edge bands would necessarily come in pairs, and that would mean the edge modes had to be two-fold spin degenerate. However, according to the \cite[Figure~5]{Wu_Hu:topological_photonic_crystal_with_time_reversal_symmetry:2015} the edge modes are non-degenerate and have a definite isospin — $M_{\uparrow/\downarrow}$ cannot possess a time-reversal symmetry. 

In summary, $M_{\uparrow/\downarrow}$ are operators of \emph{class~A}, the same topological class as gyrotropic media or quantum systems exhibiting the Quantum Hall Effect, and the topological invariants are the usual $\Z$-valued Chern numbers, rather than a $\Z_2$-valued invariant. Given the topological classification, the spin-momentum locking \emph{cannot} be seen as a topological analog of the Quantum \emph{Spin} Hall Effect; instead, it is analogous to the usual Quantum Hall Effect. These Chern numbers come in isospin pairs: due to the time-reversal symmetry of the total system, isospin-$\downarrow$ Chern numbers are equal in magnitude but have opposite sign compared to their symmetric isospin-$\uparrow$ partner — after all, they need to sum to $0$. This correct classification explains why edge modes of given isospin are unidirectional and are afforded \emph{topological protection} — provided that the perturbation preserves the isospin symmetry. However, generic perturbations, \ie those that break the honeycomb symmetry, will mix isospin states and backscattering may occur. That is why topological effects which are due to crystallographic symmetries are less robust than those which only depend on symmetries of the medium; however, in light of the difficult of fabricating media which are gyrotropic and lossless in the optical or infrared, this may still be a worthwhile tradeoff in practice. 

\paragraph{Certain bianisotropic media} 
\label{par:dual_symmetric_non_gyrotropic_media}
A different approach is taken by Khanikaev et al \cite{Khanikaev_et_al:photonic_topological_insulators:2013} as well as Chen et al \cite{Chen_et_al:topological_photonic_crystals_crystallographic_symmetries:2015} who propose to realize an isospin degree of freedom due to a symmetry of the material weights 
\begin{align*}
	W = \left (
	\begin{matrix}
		\eps & \chi \\
		\chi & \eps \\
	\end{matrix}
	\right ) = \id \otimes \eps + \sigma_1 \otimes \chi 
	\neq \overline{W} = \id \otimes \eps - \sigma_1 \otimes \chi 
\end{align*}
where $\eps = \mathrm{diag} \bigl ( \eps_{\perp},\eps_{\perp},\eps_z \bigr )$ is purely diagonal and the bianistropic tensor 
\begin{align*}
	\chi = \left ( 
	\begin{matrix}
		0 & + \ii \chi_{xy} & 0 \\
		- \ii \chi_{xy} & 0 & 0 \\
		0 & 0 & 0 \\
	\end{matrix}
	\right )
\end{align*}
is purely imaginary and offdiagonal. According to the nomenclature introduced here, the medium described by $W$ possesses a single even time-reversal symmetry, $T_3$, and therefore belongs to the non-gyrotropic class of media (class~AI, \cf the table in Theorem~\ref{intro:thm:classification_media}). Absent any crystallographic symmetries, our analysis shows there are no topological effects — in direct contradiction to the predictions made in \cite{Khanikaev_et_al:photonic_topological_insulators:2013,Chen_et_al:topological_photonic_crystals_crystallographic_symmetries:2015}. We will explain where their mistake lies. 

Once we express $W = \id \otimes \eps + \sigma_1 \otimes \chi$ in terms of Pauli matrices, we immediately see that the weights commute with the operator $J_1 = \sigma_1 \otimes \id$. The authors of \cite{Khanikaev_et_al:photonic_topological_insulators:2013,Chen_et_al:topological_photonic_crystals_crystallographic_symmetries:2015} exploited the fact that $W$ and $J_1$ can be diagonalized simultaneously proceeded to split Maxwell's equations into separate “spin-up” and “spin-down” equations. Time-reversal symmetry $T_3$ maps between spin-up and spin-down states. The authors then show the existence of boundary modes; especially \cite{Chen_et_al:topological_photonic_crystals_crystallographic_symmetries:2015} seems to make a very persuasive argument as they even verify the bulk-edge correspondence quantitatively. 

Unfortunately, \emph{the analyses in \cite{Khanikaev_et_al:photonic_topological_insulators:2013,Chen_et_al:topological_photonic_crystals_crystallographic_symmetries:2015} start out with unphysical equations} — “spin” eigenmodes $\bigl ( \mathbf{E} + \mathbf{H} \, , \,  \mathbf{E} - \mathbf{H} \bigr )$ are \emph{not transversal}. There are two main reasons for this: first of all, $J_1$ \emph{anti}commutes with the free Maxwell operator $\Rot = \left ( 
\begin{smallmatrix}
	0 & + \ii \nabla^{\times} \\
	- \ii \nabla^{\times} & 0 \\
\end{smallmatrix}
\right )$ and therefore also \emph{anti}commutes with the product $\Maux = W^{-1} \, \Rot$. That means $J_1$ necessarily \emph{maps positive onto non-positive states.} Secondly, the authors did not take into account that negative frequency states are governed by a different set of Maxwell's equations that involve the \emph{complex conjugate} weights $\overline{W} \neq W$. 

As a consequence of $J_1 \, \Maux \, J_1 = - \Maux$, there exist no positive frequency solutions to the spin eigenvalue equation $J_1 \Psi = \pm \Psi$. In fact, $J_1$ is not even well-defined as an operator on the non-negative frequency subspace $\Hil \longrightarrow \Hil$. The same arguments apply to the $\omega \leq 0$ subspace. Consequently, spin eigenstates necessarily contain linear combinations of positive \emph{and} negative frequency states. 

The last fact makes verifying the transversality of spin eigenstates more difficult as we have two distinct transversality conditions~\eqref{Schroedinger:eqn:Maxwell_equations:constraint} that mix — one for positive frequency waves with weights $W_+ = W$ and one for negative frequencies with the complex conjugate weights $W_- = \overline{W} \neq W$. The fact that the projections $\Pi_{\pm} = \tfrac{1}{2} \bigl ( \id \pm J_1 \bigr )$ onto the eigenspaces are explicit means that we need to check whether for an arbitrary positive frequency state $\Psi$ the complex electromagnetic field $J_1 \Psi$ (composed solely of non-positive frequencies!) is transversal. However, a direct computation shows 
\begin{align*}
	\overline{W} \, J_1 &= - \id \otimes \chi + \sigma_1 \otimes \eps 
	\neq W = \id \otimes \eps + \sigma_1 \otimes \chi 
	, 
\end{align*}
and $J_1 \Psi$ violates the transversality condition, 
\begin{align*}
	\Div \, \overline{W} \, J_1 \Psi \neq \Div \, W \, \Psi = 0 
	. 
\end{align*}
In fact, replacing $\overline{W}$ with $W$ in the above computation shows that $J_1 \Psi$ also does not satisfy the \emph{positive} frequency transversality condition either. Hence, spin eigenstates $\bigl ( \mathbf{E} + \mathbf{H} \, , \,  \mathbf{E} - \mathbf{H} \bigr )$ cannot be transversal, they violate one of Maxwell's equations~\eqref{Schroedinger:eqn:Maxwell_equations:constraint}. 

\subsection{Possible directions for future work} 
\label{conclusion:future}
The premise of the article was to distinguish between material symmetries, which have been studied here, and crystallographic symmetries. Because electromagnetic and media for other classical waves can be fabricated to exact specifications, there is a strong interest in understanding the role that crystallographic symmetries play. Our analysis of \cite{Wu_Hu:topological_photonic_crystal_with_time_reversal_symmetry:2015} shows the power our approach holds — we were straightfowardly able to link the presence of a $C_6$-symmetry to a topological effect without making reference to an “effective Hamiltonian” whose link to the original equations is tenuous. What is more, rewriting Maxwell's equations in the form of a Schrödinger equation (that is first order in time!) opens the door to the rich library of results from the condensed matter community (see \eg the references in \cite{Hasan_Kane:topological_insulators:2010,Chiu_Teo_Schnyder_Ryu:classification_topological_insulators:2016,Prodan_Schulz_Baldes:complex_topological_insulators:2016}). 

Secondly, we still owe an answer to the first of the two questions from the introduction — a derivation of Haldane's Phontonic Bulk-Boundary Conjecture. That requires us to tackle the problems outlined in Section~\ref{conclusion:comparison:QHE}. 

Then we intend to look beyond electromagnetism and broaden our considerations to include other classical waves. Indeed, analogs of the Quantum Hall Effect have been realized with other classical waves, including certain acoustic waves \cite{Fleury_et_al:breaking_TR_acoustic_waves:2014,Safavi-Naeini_et_al:2d_phonoic_photonic_band_gap_crystal_cavity:2014,Peano_Brendel_Schmidt_Marquardt:topological_phases_sound_light:2015,Chen_Zhao_Mei_Wu:acoustic_frequency_filter_topological_phononic_crystals:2017} and coupled pendula \cite{Suesstrunk_Huber:mechanical_topological_insulator:2015,Suesstrunk_Huber:classification_mechanical_metamaterials:2016}, and we would like to understand to what extent these phenomenological similarities are founded on similar mathematics. Indeed, quite a few wave equations share the same essential structure that characterize Maxwell's equations: they are first-order in time, feature a product structure and the physical fields are real. For such wave equations the construction outlined in \cite[Section~6]{DeNittis_Lein:Schroedinger_formalism_classical_waves:2017} yields a Schrödinger formalism. Just like with electromagnetism this will allow us and others to reach into the rich toolbox of techniques initially developed for quantum systems and apply them to classical waves; the topological classification of media these waves propagate in would be but one example. 

A mathematically and physically very intriguing question concerns the generalization of our Schrödinger formalism and mathematical classification to metals and metamaterials which violate the essential assumption that the material weights be positive. One mathematical consequence is that $\scpro{\, \cdot \,}{\, \cdot \,}_W$ fails to be a scalar product; similar to measuring distances with respect to the Lorentzian metric, $\scpro{\Psi}{\Psi}_W = 0$ can vanish even if $\Psi \neq 0$. Spaces with such indeterminate inner products are called \emph{Krein spaces} \cite{Azizov_Iokhvidov:Krein_spaces:1989}, and their mathematical properties are very different from those of Hilbert spaces. Unfortunately, Krein-selfadjointness (Krein-hermitianness) is a much weaker notion, \eg Krein-selfadjoint operators do not necessarily possess a functional calculus, and because of that it is \emph{a priori} not even clear how to implement the restriction to positive frequencies. Exploring a topological classification and more advanced topics such as bulk-boundary correspondences would be mathematically very intriguing while at the same time having clear physical applications. Perhaps these give rise to topological effects with no quantum analog \cite{Bracci_Morchio_Strocchi:Wigners_theorem_Krein_spaces:1975}. One “drosophila system” we have our eyes on are spin waves \cite{Shindou_et_al:chiral_magnonic_edge_modes:2013} where $W = \sigma_3 \otimes \id$; they lack many of the technical complications yet still force us to work with an indeterminate inner product. 

\printbibliography

\end{document}